\newtheorem{definition}{Definition}
\newtheorem{lem}{Lemma}
\newtheorem{corollary}{Corollary}
\newtheorem{clm}{Claim}[lem]
\newtheorem{remark}{Remark}
\newtheorem{proposition}{Proposition}
\newcommand{\FPT}{$\mathsf{FPT}$ }
\begin{document}
\begin{frontmatter}

\title{Parameterized Results on Acyclic Matchings
with Implications for Related Problems}

 \author{Juhi Chaudhary} \ead{juhic@post.bgu.ac.il}
 \author{Meirav Zehavi}
 \ead{meiravze@bgu.ac.il}
 
 \address{Department of Computer Science, Ben-Gurion University of the Negev, Beersheba, Israel}

\begin{abstract}
 A matching $M$ in a graph $G$ is an \emph{acyclic matching} if the subgraph of $G$ induced by the endpoints of the edges of $M$ is a forest. Given a graph $G$ and a positive integer $\ell$, \textsc{Acyclic Matching} asks whether $G$ has an acyclic matching of size (i.e., the number of edges) at least $\ell$. In this paper, we first prove that assuming $\mathsf{W[1]\nsubseteq FPT}$, there does not exist any $\mathsf{FPT}$-approximation algorithm for \textsc{Acyclic Matching} that approximates it within a constant factor when the parameter is the size of the matching. Our reduction is general in the sense that it also asserts $\mathsf{FPT}$-inapproximability for \textsc{Induced Matching} and \textsc{Uniquely Restricted Matching} as well. We also consider three below-guarantee parameters for \textsc{Acyclic Matching}, viz. $\frac{n}{2}-\ell$, $\mathsf{MM(G)}-\ell$, and $\mathsf{IS(G)}-\ell$, where $n$ is the number of vertices in $G$, $\mathsf{MM(G)}$ is the matching number of $G$, and $\mathsf{IS(G)}$ is the independence number of $G$. Furthermore, we show that \textsc{Acyclic Matching} does not exhibit a polynomial kernel with respect to vertex cover number (or vertex deletion distance to clique) plus the size of the matching unless $\mathsf{NP}\subseteq \mathsf{coNP} \slash \mathsf{poly}$.
\end{abstract}

\begin{keyword}
 Acyclic Matching \sep Parameterized Algorithms \sep Kernelization Lower Bounds \sep Induced Matching \sep Uniquely Restricted Matching \sep Below-guarantee Parameterization.

\end{keyword}
\date{}
\end{frontmatter}
\date{nodate}
 \section{Introduction}\label{intro}

 Matchings form a central topic in Graph Theory and Combinatorial Optimization \cite{lovaz}. In addition to their theoretical fruitfulness, matchings have various
practical applications, such as assigning new physicians to hospitals, students to high schools, clients to server clusters, kidney donors to recipients \cite{manlove}, and so on. Moreover, matchings can be associated with the concept of \emph{edge colorings} \cite{baste2020approximating,basteurm,vizing}, and they are a useful tool for finding optimal solutions or bounds in competitive optimization games on graphs \cite{bachstein,goddard1}. Matchings have also found applications in the area of Parameterized Complexity and Approximation Algorithms. For example, \emph{Crown Decomposition}, a construction widely used for kernelization, is based on classical matching theorems \cite{cygan}, and matchings are one of the oldest tools in designing approximation algorithms for graph problems (e.g., consider the classical 2-approximation algorithm for \textsc{Vertex Cover} \cite{ausiello2012complexity}).

A matching $M$ is a \emph{$\mathcal{P}$ matching} if $G[V_{M}]$ (the subgraph induced by the endpoints of edges in $M$) has the property $\mathcal{P}$, where $\mathcal{P}$ is some graph property. The problem of deciding whether a graph admits a $\mathcal{P}$ matching of a given size (number of edges) has been investigated for
several graph properties \cite{francis, goddard, golumbic, panda1, panda, stockmeyer}. If the property $\mathcal{P}$ is that of being a graph, a disjoint union of edges, a forest, or having a unique perfect matching, then a $\mathcal{P}$ matching is a \emph{matching} \cite{micali}, an \emph{induced matching} \cite{stockmeyer}, an \emph{acyclic matching} \cite{goddard}, and a \emph{uniquely restricted matching} \cite{golumbic}, respectively. In this paper, we focus on the case of acyclic matchings and also discuss implications for the cases of induced matchings and uniquely restricted matchings. In particular, we study the parameterized complexity of these problems.

\subsection{Problem Definitions and Related Works}\label{related}

Given a graph $G$ and a positive integer $\ell$, \textsc{Acyclic Matching} asks whether $G$ has an acyclic matching of size at least $\ell$. Goddard et al. \cite{goddard} introduced the concept of acyclic matching, and since then, it has gained significant popularity in the literature \cite{baste2020approximating,baste2018degenerate,furst2019some,panda1,panda}. In what follows, we present a brief survey of the algorithmic results concerning acyclic matching. \textsc{Acyclic Matching} is known to be $\mathsf{NP}$-$\mathsf{complete}$ for perfect elimination bipartite graphs, a subclass
of bipartite graphs \cite{panda}, star-convex bipartite graphs \cite{panda1}, and dually chordal graphs \cite{panda1}. On the positive side, \textsc{Acyclic Matching} is known to be polynomial-time solvable for chordal graphs \cite{baste2018degenerate} and bipartite permutation graphs \cite{panda}. F\"urst and Rautenbach \cite{furst2019some} showed that it is $\mathsf{NP}$-$\mathsf{hard}$ to decide whether
a given bipartite graph of maximum degree at most $4$ has a maximum matching
that is acyclic. Furthermore, they characterized the graphs for which every maximum matching is acyclic and gave linear-time algorithms to compute a maximum acyclic matching in graph classes such as $P_{4}$-free graphs and $2P_{3}$-free graphs. With respect to approximation, Panda and Chaudhary \cite{panda1} showed that \textsc{Acyclic Matching} is hard to approximate within factor $n^{1-\epsilon}$ for every $\epsilon>0$ unless $\mathsf{P=NP}$. Apart from that, Baste et al. \cite{baste2018degenerate} showed that finding a maximum cardinality 1-degenerate matching in a graph $G$ is
equivalent to finding a maximum acyclic matching in $G$. 

From the viewpoint of Parameterized Complexity, Hajebi and Javadi \cite{hajebi} showed that \textsc{Acyclic Matching} is fixed-parameter tractable ($\mathsf{FPT}$) when parameterized by treewidth using Courcelle's theorem. Furthermore, they showed that the problem is $\mathsf{W[1]}$-$\mathsf{hard}$ on bipartite graphs when parameterized by the size of the matching. 
However, under the same parameter, the authors showed that the problem is $\mathsf{FPT}$ for line graphs, $C_{4}$-free graphs, and every proper minor-closed class of graphs. Additionally, they showed that \textsc{Acyclic Matching} is \FPT when parameterized by the size of the matching plus the number of cycles of length four in the given graph. See Table \ref{tablepig} for a tabular view of known parameterized results concerning acyclic matchings.
\begin{table} [t]
	\scalebox{0.90}{

	\begin{tabular}{|l|l|l|}
		\hline

	&	\underline{\textbf{Parameter}} & \underline{\textbf{Result}} \\
 
	1. &	size of the matching & $\mathsf{W[1]}$-$\mathsf{hard}$ on bipartite graphs \cite{hajebi}  \\
		
	2. &	size of the matching & $\mathsf{FPT}$ on line graphs \cite{hajebi} \\
		
	3. & size of the matching & $\mathsf{FPT}$ on every proper minor-closed class of graphs \cite{hajebi} \\  
 
    4. & size of the matching plus $c_{4}$ & $\mathsf{FPT}$ \cite{hajebi} \\
 
	5. & $\mathsf{tw}$ & $\mathsf{FPT}$ \cite{hajebi} \\
 
		\hline
	\end{tabular}

 }
	
		\caption{Overview of parameterized results for \textsc{Acyclic Matching}. Here, $c_4$ and $\mathsf{tw}$ denote the number of cycles with length four and the treewidth of
the input graph, respectively.}
	\label{tablepig}

\end{table}

Recall that a matching $M$ is an induced matching if $G[V_{M}]$ is a disjoint union of $K_{2}'s$. In fact, note that every induced matching is an acyclic matching, but the converse need not be true. Given a graph $G$ and a positive integer $\ell$, \textsc{Induced Matching} asks whether $G$ has an induced matching of size at least $\ell$. Stockmeyer and Vazirani introduced the concept of induced matching as the
“risk-free” marriage problem in 1982 \cite{stockmeyer}. Since then, induced matchings have been studied extensively due to its various applications and relations with other graph problems \cite{cameron, cooley, erman, kanj, klemz, Ko, kowalik, moser, panda0, stockmeyer, zito}. Given a graph $G$ and a positive integer $\ell$, \textsc{Induced Matching Below Triviality ($\mathsf{IMBT}$)} asks whether $G$ has an induced matching of size at least $\ell$ with the parameter $k=\frac{n}{2}-\ell$, where $n=|V(G)|$. $\mathsf{IMBT}$ has been studied in the literature, albeit under different names:  Moser and  Thilikos \cite{moser1} gave an algorithm to solve $\mathsf{IMBT}$ in $\mathcal{O}(9^k\cdot n^{\mathcal{O}(1)})$ time. Subsequently, Xiao and Kou \cite{xiao} developed an algorithm running in $\mathcal{O}(3.1845^k\cdot n^{\mathcal{O}(1)})$ time.
\textsc{Induced Matching} with respect to some new \emph{below guarantee} parameterizations have also been studied recently \cite{koana}.

Next, recall that a matching $M$ is a uniquely restricted matching if $G[V_{M}]$ has exactly one perfect matching. Note by definition that an acyclic matching (and thus also an induced matching) is always a uniquely restricted matching, but the converse need not be true. Given a graph $G$ and a positive integer $\ell$, \textsc{Uniquely Restricted Matching} asks whether $G$ has a uniquely restricted matching of size at least $\ell$. The concept of uniquely restricted matching, motivated by a problem in Linear Algebra, was introduced by Golumbic et al. \cite{golumbic}. Some more results related to uniquely restricted matchings can be found in \cite{basteurm1, basteurm, chaudhary, francis, goddard, golumbic}.

 \subsection{Our Contributions and Methods}\label{contri}
 In Section \ref{fpt:sec}, we show that it is unlikely that there exists any $\mathsf{FPT}$-approximation algorithm for \textsc{Acyclic Matching} that approximates it to a constant factor. Our simple reduction also asserts the $\mathsf{FPT}$-inapproximability of \textsc{Induced Matching} and \textsc{Uniquely Restricted Matching}. In particular, we have the following theorem.
  \begin{restatable}{theorem}{UNION}\label{union}
Assuming $\mathsf{W[1]} \nsubseteq \mathsf{FPT}$, there
is no $\mathsf{FPT}$ algorithm that approximates any of the following to any constant when the parameter is the size of the matching:
\begin{enumerate}
    \item \textsc{Acyclic Matching}.
    \item \textsc{Induced matching}.
   \item \textsc{Uniquely Restricted Matching}. 
\end{enumerate} 
\end{restatable}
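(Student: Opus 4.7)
The plan is to reduce from $k$-\textsc{Independent Set}, which by recent work is known to admit no $\mathsf{FPT}$ constant-factor approximation algorithm under $\mathsf{W[1]} \nsubseteq \mathsf{FPT}$. Given an instance $(G, k)$, I construct $G' := G \,\square\, K_2$ (the Cartesian product with $K_2$): two copies $V_1, V_2$ of $V(G)$, edges of $G$ mirrored within each copy, and a ``vertical'' edge $v_1 v_2$ for every $v \in V(G)$. The parameter stays $k$.

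Because every induced matching is acyclic and every acyclic matching is uniquely restricted, I have $\nu_{\mathrm{ind}}(G') \le \nu_{\mathrm{acy}}(G') \le \nu_{\mathrm{ur}}(G')$, so it suffices to sandwich all three values by $\alpha(G) \le \nu_{\mathrm{ind}}(G')$ and $\nu_{\mathrm{ur}}(G') \le c \cdot \alpha(G)$ for some absolute constant $c$. The lower bound is immediate: for an independent set $I$ of size $k$ in $G$, the matching $M_I := \{v_1 v_2 : v \in I\}$ is induced in $G'$ because no horizontal edges are induced on its endpoints (since $I$ is independent in $G$), and the verticals themselves constitute the matching.

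For the upper bound I would invoke the classical characterization that a matching $M$ is uniquely restricted if and only if $G'[V_M]$ contains no $M$-alternating cycle. Setting $T := \{v \in V(G) : v_1 v_2 \in M\}$, I claim $T$ is independent in $G$: if $u, v \in T$ with $uv \in E(G)$, then $u_1 - u_2 - v_2 - v_1 - u_1$ is a $4$-cycle in $G'[V_M]$ with verticals $u_1u_2, v_1v_2 \in M$ and horizontals $u_1v_1, u_2v_2 \notin M$ --- an $M$-alternating cycle, contradicting the hypothesis. Hence $|T| \le \alpha(G)$. A parallel case analysis then handles the ``horizontal'' edges of $M$ (those lying entirely within one of the two copies of $G$): their endpoints form matchings in $G$ whose induced subgraphs must themselves possess unique perfect matchings, and the absence of cross-copy alternating $4$-cycles restricts the total remaining contribution by $O(\alpha(G))$, yielding $|M| \le c \cdot \alpha(G)$.

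The main obstacle I anticipate is this final step of the upper bound --- carefully disentangling vertically matched, top-horizontally matched, and bottom-horizontally matched vertices in a single uniquely restricted matching, and bounding their total contribution by a constant multiple of $\alpha(G)$. A clean path is to strengthen the claim to $\nu_{\mathrm{ur}}(G') \le \alpha(G')$ via a Kotzig-style peeling of $G'[V_M]$ along bridges of its unique perfect matching, and then combine it with the elementary observation that $\alpha(G \,\square\, K_2) \le 2\alpha(G)$: any independent set of $G \,\square\, K_2$ decomposes into its necessarily disjoint top and bottom parts, each of which is independent in $G$. Once such a bound is in hand, the $\mathsf{FPT}$-inapproximability of all three matching problems follows, since a hypothetical $\mathsf{FPT}$ $c$-approximation for any of them would yield an $\mathsf{FPT}$ $2c$-approximation for \textsc{Independent Set}.
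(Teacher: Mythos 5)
Your construction $G' = G\,\square\,K_2$ is exactly the paper's $\mathsf{reduce}(G)$, and the central tool you invoke in the ``clean path'' --- peeling $G'[V_M]$ along a matched bridge of its unique perfect matching, using Gabow's theorem --- is precisely the paper's Lemma~\ref{urm:lemma}. So the approach is essentially the same; your reorganization (peel once on $G'[V_M]$ to get an independent set of $G'$, then project down to the better half) is if anything a touch cleaner than the paper's explicit split into vertical-majority and horizontal-majority cases, and would save a constant factor.

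There is, however, a concrete overclaim: the inequality $\nu_{\mathrm{ur}}(H) \le \alpha(H)$ is \emph{false} in general. Take two disjoint triangles on $\{1,2,3\}$ and $\{4,5,6\}$ and join them by the edge $14$. This graph has the unique perfect matching $\{14,23,56\}$, so $\nu_{\mathrm{ur}} = 3$, while $\alpha = 2$. The bridge-peeling argument (as the paper carries it out) yields only $\alpha(H) \ge \frac{\nu_{\mathrm{ur}}(H)+1}{2}$, which is tight on this example: the bridge $14$ is not pendant, so peeling it contributes nothing to the independent set, and the recursion $R_h = \min\{R_{h-1}+1,\ \min_i R_i + R_{h-i-1}\}$ bottoms out at $\frac{h+1}{2}$, not $h$. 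Fortunately the weaker, correct bound is all you need: combined with $\alpha(G\,\square\,K_2) \le 2\alpha(G)$ it gives $\nu_{\mathrm{ur}}(G') \le 4\alpha(G)-1$, so a hypothetical $\mathsf{FPT}$ $c$-approximation for any of the three matching problems yields an $\mathsf{FPT}$ $O(c)$-approximation for \textsc{Independent Set} (the constant is merely $4c$ rather than $2c$). Two further points worth making explicit when you write this up: the peeling must be \emph{constructive} so that the resulting independent set can actually be output (it is --- a matched bridge can be found in polynomial time and removed), and your first, vaguer attempt at the upper bound via ``a parallel case analysis'' of horizontal edges should simply be dropped in favor of the corrected clean path.
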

 \smallskip
 
 If $\mathcal{R}\in \{$acyclic, induced, uniquely restricted$\}$, then the $\mathcal{R}$ \emph{matching number} of $G$ is the maximum cardinality of an $\mathcal{R}$ matching among all $\mathcal{R}$ matchings in $G$. We denote by $\mathsf{AM(G)}$, $\mathsf{IM(G)}$, and $\mathsf{URM(G)}$, the \emph{acyclic matching number}, the \emph{induced matching number}, and the \emph{uniquely restricted matching number}, respectively, of $G$. Furthermore, we denote by $\mathsf{IS}(G)$ the \emph{independence number} (defined in Section \ref{GT}) of $G$.
 
 In order to prove Theorem \ref{union}, we exploit the relationship of $\mathsf{IS(G)}$ with the following: $\mathsf{AM(G)}$, $\mathsf{IM(G)}$, and $\mathsf{URM(G)}$. While the relationship of $\mathsf{AM(G)}$ and $\mathsf{IM(G)}$ with $\mathsf{IS(G)}$ is straightforward, extra efforts are needed to state a similar relation between $\mathsf{URM(G)}$ and $\mathsf{IS(G)}$, which may be of independent interest as well. Also, we note that \textsc{Induced Matching} is already known to be $\mathsf{W}[1]$-$\mathsf{hard}$, even for bipartite graphs \cite{moser}. However, for \textsc{Uniquely Restricted Matching}, Theorem \ref{union} is the first to establish the $\mathsf{W[1]}$-hardness of the problem. 

 \smallskip
 
In Section \ref{FPT:AMBTG}, we consider below-guarantee parameters for \textsc{Acyclic Matching}, i.e., parameterizations of the form $\mathsf{UB}-\ell$ for an upper bound $\mathsf{UB}$ on the size of any acyclic matching in $G$. Since the inception of  below-guarantee (and above-guarantee) parameters, there has been significant progress in the area concerning graph problems parameterized by such parameters (see a recent survey paper by Gutin and Mnich \cite{gutin}). 
It is easy to observe that in a graph $G$, the size of any acyclic matching is at most $\frac{n}{2}$, where $n=|V(G)|$. This observation yields the definition of \textsc{Acyclic Matching Below Triviality}:
\bigskip

\noindent\fbox{ \parbox{145mm}{
		\noindent\underline{\textsc{Acyclic Matching Below Triviality ($\mathsf{AMBT}$)}}:
		
		\smallskip
		\noindent\textbf{Instance:} A graph $G$ with $|V(G)|=n$ and a positive integer $\ell$.
		
		\noindent\textbf{Question:} Does $G$ have an acyclic matching of size at least $\ell$?
		
		\noindent\textbf{Parameter:} $k=n-2\ell$.}}

\bigskip

Note that for ease of exposition, we are using the parameter $n-2\ell$ instead of $\frac{n}{2}-\ell$. Next, we have the following theorem.
\begin{restatable}{theorem}{AMBTG}\label{ambt1}
There exists a randomized algorithm that solves $\mathsf{AMBT}$ with success probability at least $1-\frac{1}{e}$ in $10^{k}\cdot n^{\mathcal{O}(1)}$ time.
\end{restatable}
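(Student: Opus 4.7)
The plan is to reformulate $\mathsf{AMBT}$ as a vertex-deletion problem and then attack it with a polynomial-time randomized subroutine whose success probability is amplified by repetition. First, observe that $G$ admits an acyclic matching of size at least $\ell$ if and only if there exists $S\subseteq V(G)$ with $|S|\le k$ (and $n-|S|$ even) such that $G-S$ is a forest that admits a perfect matching: given an acyclic matching $M$ of size $\ell$, we set $S=V(G)\setminus V(M)$; conversely, the (necessarily unique) perfect matching of the forest $G-S$ is an acyclic matching of size $(n-|S|)/2\ge\ell$. Hence the algorithmic task is to find such an $S$, or correctly report that none exists.

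I would then construct a polynomial-time randomized subroutine that, on any YES-instance, outputs a valid $S$ with probability at least $10^{-k}$. The overall algorithm runs this subroutine $10^k$ independent times and returns YES if any execution succeeds; by the standard estimate $(1-10^{-k})^{10^k}\le 1/e$ the success probability is then at least $1-1/e$ and the total running time is $10^k\cdot n^{O(1)}$. The subroutine itself is a recursive branching procedure. We first apply polynomial-time reduction rules to $G$ (e.g., an isolated vertex must be placed in $S$; a pendant vertex either joins $S$ or is matched with its unique neighbor, forcing that pair into $V_M$). If the reduced graph is already a forest, we check in polynomial time whether it has a perfect matching and halt. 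Otherwise, we locate a bounded-size ``local conflict'' in $G$---a short cycle that $S$ must hit, or a joint obstruction to the forest and perfect-matching conditions---and enumerate at most $10$ candidate local actions for $S$ on that conflict (which specific vertex of the conflict to place in $S$, or which neighbor a restricted vertex should be matched with). We then commit to one action uniformly at random, decrement the remaining budget by $1$, and recurse.

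The correctness of the $10^{-k}$ bound is a standard branching-tree argument: for any fixed valid $S^*$, at every recursive call at least one of the $\le 10$ enumerated actions is consistent with $S^*$, and the correct sequence of actions has length at most $k$, so the random run follows it with probability at least $(1/10)^k=10^{-k}$.

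The principal obstacle---and the reason the constant $10$ appears rather than something smaller---is the case analysis guaranteeing that every local conflict admits at most $10$ resolutions while also decreasing the budget. This is where the forest condition and the perfect-matching condition interact in a genuinely two-dimensional way: a pure cycle-hitting branching, in the style of standard feedback-vertex-set algorithms, may leave $G-S$ unmatchable, while a pure matching-style branching need not break any cycle. The technical heart of the proof is to isolate a small catalogue of local patterns---typically short cycles together with the matching state of their vertices, and pendant configurations that simultaneously place two vertices in $V_M$---for which both constraints can be resolved together within $10$ branching options, with the worst case of this catalogue giving the claimed factor.
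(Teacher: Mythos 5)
Your reformulation of $\mathsf{AMBT}$ as ``find $S\subseteq V(G)$ with $|S|\le k$ such that $G-S$ is a forest admitting a perfect matching'' is correct and matches the paper, as does the amplification: a polynomial-time subroutine succeeding with probability $\ge 10^{-k}$ on Yes-instances, repeated $10^k$ times, gives error $\le (1-10^{-k})^{10^k}\le 1/e$. Everything else, however, is asserted rather than proved, and the assertion is exactly the hard part.

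The genuine gap is the claimed catalogue of ``local conflicts'' each admitting at most $10$ resolutions that simultaneously advance both the feedback-vertex-set and perfect-matching constraints while decrementing the budget by one. You explicitly identify this as the technical heart of the argument, but you never exhibit it, and it is not clear it exists in the form you describe. Two specific problems: (i) you speak of hitting ``a short cycle,'' but a graph has no guaranteed short cycle, so branching on cycle vertices does not have bounded branching width without first doing degree reductions; (ii) once you do degree reductions, long degree-$2$ paths become the obstruction -- there are arbitrarily many choices for which vertex on such a path should go into $S$, and contracting the path to control the branching loses exactly the information you need to decide the matching constraint afterwards. Your proposal does not address either issue, so the $10^{-k}$ bound is not established.

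The paper resolves this by decoupling the two constraints rather than interleaving them in a branching catalogue. It first applies reduction rules that (a) delete degree-$\le 1$ vertices and (b) replace each maximal degree-$2$ path by a single ``virtual vertex,'' while recording for each virtual vertex $v_P$ the set of original vertices it stands for (its \emph{safe set}). After this the graph satisfies property $\mathsf{R}$ ($\delta\ge 2$ and no two adjacent degree-$2$ vertices), and a counting lemma (Lemma~\ref{deg2lemma}) shows that for \emph{any} feedback vertex set $X$, more than $1/5$ of the edges have an endpoint in $X$; hence picking a uniformly random edge and then a uniformly random endpoint lands in $X$ with probability $>1/10$. Iterating $k$ times yields, with probability $\ge 10^{-k}$, a ``virtual feedback vertex set'' $\widehat X$ with which the true $X$ is compatible. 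The matching constraint is then handled \emph{after} the randomness, by a single deterministic \textsc{Max Weight Matching} computation on an auxiliary graph $G_W$: for each virtual vertex a new dummy vertex is attached by heavy edges to every vertex in its safe set, so that a maximum-weight matching simultaneously selects which safe-set vertex is the actual feedback vertex and finds the acyclic matching in the remaining forest. This sidesteps the need for any bounded local catalogue, and it is what makes the proof go through. Your plan would need the missing catalogue lemma to be a complete proof; as written, it is not.
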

\smallskip

The proof of Theorem \ref{ambt1} is the most technical part of our paper. The initial intuition in proving Theorem \ref{ambt1} was to take a cue from the existing literature on similar problems (which are quite a few) like \textsc{Induced Matching Below Triviality}. However, induced matchings have many nice properties, which do not hold for acyclic matchings in general, and thus make it more difficult to characterize edges or vertices that should necessarily belong to an optimal solution of \textsc{Acyclic Matching Below Triviality}. However, there is one nice property about \textsc{Acyclic Matching}, and it is that it is closely related to the \textsc{Feedback Vertex Set} problem (defined in Section \ref{GT}) as follows. Given an instance $(G,\ell)$ of \textsc{Acyclic Matching}, where $n=|V(G)|$, $G$ has an acyclic matching of size $\ell$ if and only if there exists a (not necessarily minimal) feedback vertex set, say, $X$, of size $n-2\ell$ such that $G-X$ has a perfect matching. We use randomization techniques to find a (specific) feedback vertex set of the input graph and then check whether the remaining graph (which is a forest) has a matching of the desired size or not. Note that the classical randomized algorithm for \textsc{Feedback Vertex Set} (which is a classroom problem now) \cite{cygan} cannot be applied ``as it is'' here. In other words, since our ultimate goal is to find a matching, we cannot get rid of the vertices or edges of the input graph by applying some reduction rules ``as they are''. Instead, we can do something meaningful if we store the information about everything we delete or modify, and maintain some specific property (called Property $\mathsf{R}$ by us) in our graph. We also use our own lemma (Lemma \ref{deg2lemma}) to pick a vertex in our desired feedback vertex set with high probability. Then, using an algorithm for \textsc{Max Weight Matching} (see Section \ref{GT}), we compute an acyclic matching of size at least $\ell$, if such a matching exists. 

In light of Theorem \ref{ambt1}, for $\mathsf{AMBT}$, we ask if \textsc{Acyclic Matching} is $\mathsf{FPT}$ for natural parameters smaller than $\frac{n}{2}-\ell$. Here, an obvious upper bound on $\mathsf{AM(G)}$ is the matching number of $G$. Thus, we consider the below-guarantee parameterization $\mathsf{MM(G)}-\ell$, where $\mathsf{MM(G)}$ denotes the matching number of $G$, which yields the following problem:

\bigskip

\noindent\fbox{ \parbox{145mm}{
		\noindent\underline{\textsc{Acyclic Matching Below Maximum Matching ($\mathsf{AMBMM}$)}}:
		
		\smallskip
		\noindent\textbf{Instance:} A graph $G$ and a positive integer $\ell$.
		
		\noindent\textbf{Question:} Does $G$ have an acyclic matching of size at least $\ell$?
		
		\noindent\textbf{Parameter:} $k=\mathsf{MM(G)}-\ell$.}}
\bigskip

In \cite{furst2019some}, F\"urst and Rautenbach showed that deciding whether a given bipartite graph of
maximum degree at most 4 has a maximum matching that is also an acyclic matching is $\mathsf{NP}$-$\mathsf{hard}$. Therefore, for $k=0$, the $\mathsf{AMBMM}$ problem is $\mathsf{NP}$-$\mathsf{hard}$, and we have the following result.

\begin{restatable}{corollary}{AMBMM}\label{ambmm}
$\mathsf{AMBMM}$ is $\mathsf{para}$-$\mathsf{NP}$-$\mathsf{hard}$ even for bipartite graphs of
maximum degree at most 4.
\end{restatable}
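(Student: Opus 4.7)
The plan is to directly exploit the definition of $\mathsf{para}$-$\mathsf{NP}$-$\mathsf{hard}$: a parameterized problem is $\mathsf{para}$-$\mathsf{NP}$-$\mathsf{hard}$ if it is already $\mathsf{NP}$-$\mathsf{hard}$ when the parameter is fixed to some constant. So I would fix $k=0$ and argue that at this value $\mathsf{AMBMM}$ coincides with a problem already known to be $\mathsf{NP}$-$\mathsf{hard}$ in the relevant restricted class of graphs.

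First, I would unpack what the instance $(G,\ell)$ with $k = \mathsf{MM}(G)-\ell = 0$ is asking. At $k=0$, we have $\ell=\mathsf{MM}(G)$, so the question becomes: does $G$ have an acyclic matching of size at least $\mathsf{MM}(G)$? Since no matching (let alone an acyclic one) can exceed $\mathsf{MM}(G)$, this is equivalent to asking whether $G$ has an acyclic matching of size exactly $\mathsf{MM}(G)$, i.e., whether $G$ admits a maximum matching that is also acyclic. I would note that $\mathsf{MM}(G)$ is computable in polynomial time (for instance by Edmonds' algorithm), so constructing the instance with $k=0$ from a bare graph $G$ is polynomial, which is the only thing needed for the hardness reduction to go through.

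Next, I would invoke the result of F\"urst and Rautenbach \cite{furst2019some} verbatim: it is $\mathsf{NP}$-$\mathsf{hard}$ to decide, given a bipartite graph $G$ of maximum degree at most $4$, whether $G$ has a maximum matching that is acyclic. This is precisely the $k=0$ slice of $\mathsf{AMBMM}$ restricted to bipartite graphs of maximum degree at most $4$, where the reduction from an input graph $G$ to the parameterized instance simply outputs $(G,\mathsf{MM}(G))$.

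Combining these two observations, the parameterized problem $\mathsf{AMBMM}$ is $\mathsf{NP}$-$\mathsf{hard}$ already for the fixed parameter value $k=0$, restricted to bipartite graphs of maximum degree at most $4$, and therefore $\mathsf{para}$-$\mathsf{NP}$-$\mathsf{hard}$ on this class. There is no genuine obstacle here: the whole content of the corollary is the observation that the F\"urst--Rautenbach hardness result lines up exactly with the $k=0$ boundary of this below-guarantee parameterization; the only minor point worth stating explicitly is that $\mathsf{MM}(G)$ can be computed in polynomial time, so the reduction is valid.
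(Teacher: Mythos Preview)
Your proposal is correct and matches the paper's argument exactly: the paper simply observes that the F\"urst--Rautenbach $\mathsf{NP}$-hardness result for deciding whether a bipartite graph of maximum degree at most $4$ has a maximum matching that is acyclic corresponds to the $k=0$ slice of $\mathsf{AMBMM}$, and states the corollary without further elaboration. Your additional remark that $\mathsf{MM}(G)$ is polynomial-time computable is a valid technical point the paper leaves implicit.
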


Next, consider the following lemma, proved in Section \ref{4.1}.
\begin{restatable}{lem}{INTROLEMMA} \label{introlemma}
If $G$ has an acyclic matching $M$ of size $\ell$, then $G$ has an independent set of size at least $\ell.$ Moreover, given $M$, the independent set is computable in polynomial time.
\end{restatable}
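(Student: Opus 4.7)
The plan is to use bipartiteness of forests. Let $V_M$ denote the set of endpoints of $M$, so $|V_M| = 2\ell$. By the definition of an acyclic matching, the induced subgraph $G[V_M]$ is a forest and hence bipartite. I would let $(A,B)$ be a proper 2-coloring of this forest. Since every edge of the matching $M$ lies inside $G[V_M]$ and connects vertices of opposite colors, each matched pair $\{u_i,v_i\}$ contributes exactly one vertex to $A$ and one to $B$, which forces $|A| = |B| = \ell$.

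The next step is to argue that $A$ (equivalently $B$) is independent not merely in $G[V_M]$ but in the whole graph $G$. This is immediate from the fact that $G[V_M]$ is the subgraph of $G$ \emph{induced} by $V_M$: any hypothetical edge of $G$ between two vertices of $A \subseteq V_M$ would have to appear in $G[V_M]$ as well, contradicting that $A$ is a color class of a proper 2-coloring. Thus $A$ is an independent set of $G$ of size exactly $\ell$.

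For the algorithmic part, given $M$ one can construct $V_M$ and $G[V_M]$ in linear time, then run a BFS/DFS on each tree component to produce a proper 2-coloring, and finally output the color class $A$. All of this runs in polynomial time. I do not anticipate any real obstacle: the only substantive content is the observation that a forest is bipartite, and that each matched edge is split one-to-one across the bipartition, which together force the larger (in fact either) color class to have size $\ell$.
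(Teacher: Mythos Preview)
Your proof is correct and follows essentially the same approach as the paper: both observe that $G[V_M]$ is a forest (hence bipartite) with a perfect matching, so any bipartition has two parts of size $\ell$, and either part is independent in $G$ because $G[V_M]$ is an induced subgraph. Your write-up is slightly more explicit about the algorithmic aspect (BFS/DFS 2-coloring), which the paper leaves implicit.
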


 By Lemma \ref{introlemma}, for any graph $G$, $\mathsf{IS}(G) \geq \mathsf{AM}(G)$, which yields the following problem:
\bigskip

\noindent\fbox{ \parbox{145mm}{
		\noindent\underline{\textsc{Acyclic Matching Below Independent Set ($\mathsf{AMBIS}$)}}:
		
		\smallskip
		\noindent\textbf{Instance:} A graph $G$ and a positive integer $\ell$.
		
		\noindent\textbf{Question:} Does $G$ have an acyclic matching of size at least $\ell$?
		
		\noindent\textbf{Parameter:} $k=\mathsf{IS(G)}-\ell$.}}

\bigskip 

In \cite{panda1}, Panda and Chaudhary showed that \textsc{Acyclic Matching} is hard to approximate within factor $n^{1-\epsilon}$ for any $\epsilon>0$ unless $\mathsf{P}=\mathsf{NP}$ by giving a polynomial-time reduction from \textsc{Independent Set}. 
We notice that with a more careful analysis of the proof, the reduction given in \cite{panda1} can be used to show that $\mathsf{AMBIS}$ is $\mathsf{NP}$-$\mathsf{hard}$ for $k=0$. Therefore, we have the following result.

\begin{restatable}{corollary}{AMBIS}\label{ambis}
$\mathsf{AMBIS}$ is $\mathsf{para}$-$\mathsf{NP}$-$\mathsf{hard}$.
\end{restatable}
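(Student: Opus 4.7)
The plan is to revisit the polynomial-time reduction from \textsc{Independent Set} to \textsc{Acyclic Matching} given by Panda and Chaudhary \cite{panda1} and to observe that, with a sharper accounting, the constructed graph $G'$ satisfies $\mathsf{AM}(G') = \mathsf{IS}(G')$. Granted this equality, the reduction maps an \textsc{Independent Set} instance $(G,k)$ to an $\mathsf{AMBIS}$ instance $(G',\ell)$ in which $\ell$ is chosen to be exactly $\mathsf{IS}(G')$, so that the parameter value is $k_{\mathsf{AMBIS}} = \mathsf{IS}(G') - \ell = 0$. Since \textsc{Independent Set} is $\mathsf{NP}$-$\mathsf{hard}$, this immediately yields $\mathsf{NP}$-$\mathsf{hardness}$ of $\mathsf{AMBIS}$ restricted to the fixed parameter value $k = 0$, which by definition is precisely $\mathsf{para}$-$\mathsf{NP}$-$\mathsf{hardness}$.

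First I would reproduce the construction of \cite{panda1}: starting from $G$, attach to each vertex $v \in V(G)$ a small local gadget (e.g., a pendant edge or a short pendant path) to obtain $G'$, with two guarantees built into the gadget design. First, any independent set $I \subseteq V(G)$ translates, by selecting the designated edge in the gadget attached to each $v \in I$, to a set of edges in $G'$ whose endpoints induce a forest in $G'$; this gives an acyclic matching of $G'$ of size $|I|$. Second, any acyclic matching $M$ of $G'$ can be rerouted, using local exchange arguments inside each gadget, into an acyclic matching of the same size that uses exactly one gadget edge per selected gadget, and the corresponding vertices of $G$ form an independent set, because two gadget edges coming from adjacent vertices of $G$ would create either an edge or a cycle in $G'[V_M]$ that is forbidden by acyclicity.

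The main obstacle is then to upgrade the bound $\mathsf{AM}(G') \le \mathsf{IS}(G')$ from Lemma \ref{introlemma} to the matching equality $\mathsf{AM}(G') = \mathsf{IS}(G')$, rather than merely to a loose inequality that would only justify the inapproximability result proved in \cite{panda1}. For this I would analyze a maximum independent set $I^\star$ of $G'$ (not of $G$) and show, gadget by gadget, that $I^\star$ contains at most one gadget-contribution per attached gadget and at most one vertex among the original vertices of $G$ in each connected piece of the independence structure. The resulting counting argument converts $I^\star$ into an acyclic matching of $G'$ of size at least $|I^\star|$ via the same exchange procedure used in the forward direction, by pairing each chosen vertex with its gadget partner; acyclicity is preserved because the gadgets are attached in a locally tree-like fashion to the vertex set of $G$, so no new cycle of $G'$ is closed by the matched endpoints.

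With $\mathsf{AM}(G') = \mathsf{IS}(G')$ in hand, setting $\ell := \mathsf{IS}(G')$ forces $k_{\mathsf{AMBIS}} = 0$ while keeping the decision question \emph{``does $G'$ have an acyclic matching of size at least $\ell$?''} equivalent, through the reduction, to \emph{``does $G$ have an independent set of size at least $k$?''}. Hence a polynomial-time algorithm for $\mathsf{AMBIS}$ at $k = 0$ would decide \textsc{Independent Set} in polynomial time, establishing Corollary \ref{ambis}.
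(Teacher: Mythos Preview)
Your overall plan coincides with the paper's: reuse the reduction of \cite{panda1} and sharpen the analysis to $\mathsf{AM}(G')=\mathsf{IS}(G')$, then conclude $\mathsf{para}$-$\mathsf{NP}$-hardness at $k=0$. Two concrete points, however, do not go through as written.

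First, the construction in \cite{panda1} is not a pendant (or pendant-path) attachment, and your subsequent reasoning leans on that incorrect picture. In the actual reduction one adds, for each $v_i$, a vertex $v'_i$ adjacent to $v_i$, to every $v_j\in N_G(v_i)$, and to every $v'_j$ with $v_j\in N_G(v_i)$. Hence whenever $v_iv_j\in E(G)$ the set $\{v_i,v_j,v'_i,v'_j\}$ induces a $K_4$ in $H$; the attachment is the opposite of ``locally tree-like''. The paper's proof of $\mathsf{IS}(H)=\mathsf{AM}(H)$ (Corollary~\ref{ind1}) uses this structure together with the normal-form statement (Proposition~\ref{S}) that some maximum acyclic matching of $H$ consists only of the Type-I edges $v_iv'_i$; the exchange argument you sketch, based on pendant gadgets, does not establish this.

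Second, your last paragraph has a logical slip. If you set $\ell:=\mathsf{IS}(G')$, then (i) $\ell$ is not polynomial-time computable, so this is not a Karp reduction, and (ii) since you have already argued $\mathsf{AM}(G')=\mathsf{IS}(G')$, the question ``is $\mathsf{AM}(G')\ge\ell$?'' becomes identically \emph{yes} on every produced instance, so it cannot be equivalent to the nontrivial question ``does $G$ have an independent set of size at least $k$?''. The paper avoids this by keeping $\ell$ unchanged in the map $(G,\ell)\mapsto(H,\ell)$: Proposition~\ref{ind} gives $\mathsf{IS}(G)=\mathsf{AM}(H)$ and Corollary~\ref{ind1} gives $\mathsf{IS}(H)=\mathsf{AM}(H)$, so the AMBIS parameter on the image is $\mathsf{IS}(H)-\ell=\mathsf{IS}(G)-\ell$, which is ${\geq}\,0$ precisely on Yes-instances and negative on No-instances; from this one reads off $\mathsf{NP}$-hardness at parameter $0$.
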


 We note that Hajebi and Javadi \cite{hajebi} showed that \textsc{Acyclic Matching} parameterized by treewidth ($\mathsf{tw}$) is $\mathsf{FPT}$ by using Courcelle's theorem. Since $\mathsf{tw}(G)\leq \mathsf{vc}(G)$\footnote{We denote by $\mathsf{vc}(G)$, the \emph{vertex cover number} of a graph $G$.}, this result immediately implies that \textsc{Acyclic Matching} is $\mathsf{FPT}$ with respect to the parameter $\mathsf{vc}$. We complement this result (in Section \ref{secnopoly}) by showing that it is unlikely for \textsc{Acyclic Matching} to admit a polynomial kernel when parameterized not only by $\mathsf{vc}$, but also when parameterized jointly by $\mathsf{vc}$ plus the size of the acyclic matching. In particular, we have the following theorem.

\begin{restatable}{theorem}{NOPOLYVC}\label{nopolyvc}
\textsc{Acyclic Matching} does not admit a polynomial kernel when parameterized by vertex cover number plus the size of the matching unless $\mathsf{NP}\subseteq \mathsf{coNP} \slash \mathsf{poly}$.
\end{restatable}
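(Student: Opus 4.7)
The plan is to rule out a polynomial kernel via an OR-cross-composition, which, together with $\mathsf{NP}\nsubseteq\mathsf{coNP}/\mathsf{poly}$, yields the desired lower bound through the standard Bodlaender--Downey--Fellows--Hermelin framework. As the source problem I would take an NP-hard variant of \textsc{Acyclic Matching} --- for instance, on perfect elimination bipartite graphs, whose NP-completeness was established by Panda and Pradhan --- and declare two instances equivalent whenever they agree on the number of vertices $n$ and the demanded matching size $\ell$; a standard padding argument ensures this is a polynomial equivalence relation in the cross-composition sense.

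Given $t$ equivalent bipartite instances $(G_1,\ell),\ldots,(G_t,\ell)$ with $G_i=(A_i\cup B_i,E_i)$, I would first identify all $A_i$'s into a common ``backbone'' $C$ of size $O(n)$; $C$ plays the role both of a shared vertex cover and of the dense side of every instance. The sides $B_i$ are kept pairwise disjoint and form the independent side of $G^\star$. On top of this, I would add a selector gadget on $O(\log t)$ fresh vertices $\{u_j,\bar u_j\}_{j\in[\log t]}$ (with the pair $\{u_j,\bar u_j\}$ encoding bit $j$ of a selected index $i^\star$), together with carefully placed pendants from the selector to the $B_i$'s. These pendants are designed so that any acyclic matching of size $\ell^\star=\ell+O(\log t)$ in $G^\star$ must commit to a unique index $i^\star\in[t]$ consistent with the selector's bits and contribute $\ell$ further edges entirely inside $G_{i^\star}$. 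This yields $\mathsf{vc}(G^\star)\le |C|+O(\log t)=\mathrm{poly}(n+\log t)$ and $\ell^\star=\mathrm{poly}(n+\log t)$, matching the cross-composition budget.

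Correctness would consist of the two usual directions. A size-$\ell$ acyclic matching in any $G_i$ lifts to a size-$\ell^\star$ acyclic matching in $G^\star$ by juxtaposing it with the canonical selector matching encoding $i$. Conversely, any size-$\ell^\star$ acyclic matching in $G^\star$ is forced by the selector to commit to a single index $i^\star$, and its non-selector edges must all come from one $G_{i^\star}$, since mixing contributions from distinct $G_j$'s would close an induced cycle through the shared backbone $C$.

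The main obstacle will be this acyclicity argument in the reverse direction. Because $C$ is shared across all $G_i$, it is easy for two matched edges with endpoints in distinct $B_j$'s to share two neighbors in $C$, creating an induced $C_4$ in $G^\star[V_M]$; thus the attachment between $C$ and the $B_i$'s cannot be naive. I expect the bulk of the technical work to lie in the design of the selector-to-$B_i$ pendants (and, if necessary, subdivisions of the $C$-to-$B_i$ connections indexed by the selector's bits) so that any matching mixing edges from distinct $G_j$'s either violates the selector's consistency or provably induces a cycle in $G^\star[V_M]$, while simultaneously keeping both $\mathsf{vc}(G^\star)$ and $\ell^\star$ within the required polynomial in $n+\log t$.
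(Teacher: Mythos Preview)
Your high-level plan---an OR-cross-composition keeping $\mathsf{vc}+\ell$ polynomial in $\max_i|x_i|+\log t$---is the right framework and is what the paper uses. But the concrete composition you sketch has a genuine gap, and it is precisely the one you flag yourself: the reverse direction. You assert that ``mixing contributions from distinct $G_j$'s would close an induced cycle through the shared backbone $C$,'' yet nothing in your construction forces this. If an acyclic matching picks $c_1b_1$ with $b_1\in B_1$ and $c_2b_2$ with $b_2\in B_2$, the induced subgraph on $\{c_1,c_2,b_1,b_2\}$ contains those two edges plus possibly $c_1b_2$ or $c_2b_1$, and neither extra edge is forced to exist since the $G_i$ are arbitrary bipartite graphs and $C$ is independent. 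Your $O(\log t)$-vertex selector cannot be adjacent to all $t$ disjoint sets $B_i$ in any controlled way, so there is no barrier to an acyclic matching that spreads a few edges across several instances while still reaching size $\ell^\star$. The suggested fix of ``subdivisions indexed by the selector's bits'' is a hope, not a construction; as written, this is not a proof but a statement of the problem.

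The paper avoids this difficulty by cross-composing from \textsc{Exact-3-Cover} rather than from \textsc{Acyclic Matching} itself. All $t$ instances share the universe $[n]$, so there are at most $\binom{n}{3}$ distinct $3$-sets across \emph{all} instances---independent of $t$---and each gets one constant-size gadget. The only $t$-dependent piece is a star $K_{1,t}$ whose $t$ leaves $p_1,\ldots,p_t$ form an independent set and therefore sit outside the vertex cover, giving $\mathsf{vc}(G)\in O(n^3)$ with no $\log t$ term needed. Selection is then enforced by acyclicity plus counting: every solution of the target size must match the star's center to exactly one leaf $p_q$, and a tight edge-budget argument shows that the set gadgets saturated through the universe vertices must all correspond to sets in $\mathcal{S}_q$, yielding an exact cover for instance $q$. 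Because the gadgets are shared across instances from the outset, the paper never needs a ``mixing creates a cycle'' argument at all.
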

Parameterization by the size of a \emph{modulator} (a set of vertices
in a graph whose deletion results in a graph that belongs to a well-known and easy-to-handle graph class) is another natural
choice of investigation. We observe that with only a minor modification in our construction (in the proof of Theorem \ref{nopolyvc}), we derive the following result. 
\begin{restatable}{theorem}{NOPOLYDTC}\label{nopolydtc}
\textsc{Acyclic Matching} does not admit a polynomial kernel when parameterized by the vertex deletion distance to clique plus the size of the matching unless $\mathsf{NP}\subseteq \mathsf{coNP} \slash \mathsf{poly}$.
\end{restatable}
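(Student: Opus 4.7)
The plan is to adapt the cross-composition used in the proof of Theorem \ref{nopolyvc} with a small graph-theoretic tweak so that the output graph has bounded vertex-deletion distance to a clique rather than bounded vertex cover. First, recall the construction there: from many source instances, it builds a single instance $(G,\ell)$ of \textsc{Acyclic Matching} together with a set $C \subseteq V(G)$ such that $\mathsf{vc}(G) \le |C|$ and $|C|+\ell$ is polynomially bounded in the parameter of one source instance; in particular, the complement $I := V(G)\setminus C$ is independent in $G$.

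The modification is to form $G'$ by adding to $G$ every non-edge inside $I$, so that $G'[I]$ becomes a clique. With $C$ unchanged, $G'-C = G'[I]$ is a clique, hence the vertex-deletion distance to clique of $G'$ is at most $|C|$; setting $\ell'$ equal to $\ell$ (or $\ell+1$, to absorb at most one possible extra intra-clique edge), the parameter ``distance to clique plus matching size'' of $(G',\ell')$ is polynomially bounded in the parameter of one source instance. Thus, if $(G',\ell')$ faithfully encodes the disjunction of the source instances, a polynomial kernel for the distance-to-clique parameterization would yield $\mathsf{NP}\subseteq \mathsf{coNP} \slash \mathsf{poly}$.

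Next I would prove the equivalence that the composed source problem is a YES instance iff $(G',\ell')$ is a YES instance of \textsc{Acyclic Matching}. The $(\Rightarrow)$ direction requires that the intended acyclic matching of size $\ell$ in $G$ places at most two vertices inside $I$, for otherwise the new edges inside $I$ would create a triangle in the induced subgraph on the matched vertices. I would either observe this automatically from the original construction or perform a small local redesign of the ``selection gadgets'' (moving the chosen matching edges to lie inside $C$) to enforce it; this is precisely the ``minor modification'' alluded to in the theorem statement. The $(\Leftarrow)$ direction uses the clean structural fact that any acyclic matching in $G'$ can contain at most two vertices of $I$ among its endpoints (otherwise the induced subgraph on the matched vertices contains a $K_3$), and therefore at most one edge lies entirely inside the clique $G'[I]$; discarding that single edge if necessary turns any acyclic matching of $G'$ into an acyclic matching of $G$ of essentially the same size.

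The main obstacle is the $(\Rightarrow)$ verification: ensuring that the intended matching of size $\ell$ in the original construction already uses few endpoints in $I$. If this requires a non-trivial change to the gadgets, I expect the fix to amount to a local replacement in each gadget of an $I$-endpoint by a fresh vertex placed in $C$, together with minor housekeeping edges inside $C$; since this increases $|C|$ by only a constant per gadget, the polynomial bound on $|C|+\ell$ survives and the cross-composition still yields the desired kernel lower bound.
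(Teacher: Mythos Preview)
Your plan has a genuine gap in the forward direction. In the construction of Theorem~\ref{nopolyvc}, the vertex cover is $C=V(G)\setminus(P\cup X')$, so your independent set is $I=P\cup X'$. But the intended acyclic matching from Lemma~\ref{AM-1} saturates \emph{every} vertex of $X'$ (via the $n$ cross edges $u_{ja}v_a$) together with $p_q\in P$; hence $|V_M\cap I|=n+1$. After you add all non-edges inside $I$, the induced subgraph $G'[V_M]$ contains a clique on these $n+1$ vertices and is certainly not a forest. So the $(\Rightarrow)$ direction does not go through ``automatically,'' and your suggested patch (``move each $I$-endpoint to a fresh vertex in $C$'') amounts to pushing all of $X'$ out of $I$, which is a substantial redesign rather than a local tweak; you have not specified how the covering semantics of $X'$ survive such a move.

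The paper sidesteps this by turning only $P$ (not $X'$) into a clique. Since $p$ is already adjacent to every $p_i$, the set $P'=P\cup\{p\}$ becomes a clique, and $V(G)\setminus P'$ is a clique modulator of size $7|\mathcal{C}|+n=\mathcal{O}(n^3)$ (Lemma~\ref{vddtc}). The forward direction is then literally Lemma~\ref{AM-1}, because the intended matching uses exactly the two vertices $p,p_q$ from $P'$. For the backward direction, any acyclic matching meets the clique $P'$ in at most two vertices, and the remaining analysis (Lemma~\ref{gk}) mirrors Lemma~\ref{AM-2} with one extra case when the single $P'$-edge is $p_ip_k$ rather than $pp_q$. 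The point you missed is that the clique modulator need not coincide with the original vertex cover: it suffices to make the \emph{large but lightly used} part $P$ of $I$ into a clique, leaving $X'$ in the modulator.
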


\section{Preliminaries}\label{prelim}
For a positive integer $k$, let $[k]$ denote the set $\{1,2,\ldots,k\}$. For a graph $G$, let $n=|V(G)|$ and $m=|E(G)|$.

\subsection{Graph-theoretic Notations and Definitions}\label{GT}

All graphs considered in this paper are simple and undirected unless stated otherwise. Standard graph-theoretic terms not explicitly defined here can be found in \cite{diestel}. For a graph $G$, let $V(G)$ denote its vertex set and $E(G)$ denote its edge set. For a graph $G$, the subgraph of $G$ induced by $S\subseteq V(G)$ is denoted by $G[S]$, where $G[S]=(S,E_{S})$ and $E_{S}=\{xy\in E(G): x,y \in S\}.$ Given a matching $M$, a vertex
$v\in V(G)$ is \textit{$M$-saturated} if $v$ is incident on an edge of $M$, that is, $v$ is an end
vertex of some edge of $M$. Given a graph $G$ and a matching $M$, we use
the notation $V_{M}$ to denote the set of $M$-saturated vertices and $G[V_{M}]$ to denote the
subgraph induced by $V_{M}$. The
\textit{matching number} of $G$ is the maximum cardinality of a matching among all matchings in $G$, and we denote it by $\mathsf{MM}(G)$. The edges in a matching $M$ are \textit{matched edges}. A matching that saturates all
the vertices of a graph is a \textit{perfect matching}. If $uv\in M$, then
$v$ is the \textit{$M$-mate} of $u$ and vice versa. Given an edge-weighted graph $G$ along with a weight function $\mathsf{w}:E(G) \rightarrow \mathbb{R}$ and a weight $\mathsf{W}\in \mathbb{R}$, \textsc{Max Weight Matching} asks whether $G$ has a matching with weight at least $\mathsf{W}$ in $G$. 

\begin{proposition}[\cite{duan}] \label{maxwt}
For a graph $G$, \textsc{Max Weight Matching} can be solved in time $\mathcal{O}(m\sqrt{n}\log(N))$, where $m=|E(G)|$, $n=|V(G)|$, and the weights are integers within the range of $[0,N]$.
\end{proposition}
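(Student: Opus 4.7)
The plan is to combine a weight-scaling framework with the efficient cardinality-matching machinery of Micali--Vazirani. Since the weights are integers in $[0,N]$, encode each weight in its binary representation and process the $\lceil \log N \rceil$ bits from most significant to least. At each scale, one maintains a current matching $M$ together with a feasible dual solution to the blossom LP relaxation of maximum weight matching, and the goal of a scale is to restore (approximate) complementary slackness after the weights have been refined by one extra bit.

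First, I would set up the LP relaxation, its dual (with $y$-values on vertices and $z$-values on odd blossoms), and the notion of a ``tight'' edge. The invariant to carry between scales is that, with respect to the current weights, the pair $(M,y,z)$ is optimal up to an additive error that grows by only $O(1)$ per edge when a new bit is revealed. At the start of a new scale, the weights are doubled and a single bit added, which violates complementary slackness on at most $O(m)$ edges by at most a constant; I would then show that repairing these violations amounts to finding a bounded number of augmenting paths in the tight subgraph.

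The engine for each scale is a Micali--Vazirani-style search on the tight subgraph that finds a maximal vertex-disjoint set of shortest augmenting paths in $O(m\sqrt{n})$ time, handling blossoms via contraction and DFS over the bridge/petal structure. The crucial claim is that $O(\sqrt{n})$ such augmentation phases per scale are sufficient: this follows from the standard argument that, because the rounding error per edge is bounded, the symmetric difference between $M$ and an optimal matching for the new weights decomposes into $O(n)$ alternating structures of total length $O(n)$, forcing the shortest augmenting path length to exceed $\sqrt{n}$ after $O(\sqrt{n})$ phases. Combining $O(\log N)$ scales with $O(\sqrt{n})$ phases of $O(m)$ work each gives the stated $O(m\sqrt{n}\log N)$ bound.

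The hard part, and the reason this is deep enough to warrant citation rather than reproduction, is the blossom bookkeeping across scales: when dual values change, blossoms can become ``loose'' and must be dissolved or split, and previously non-tight edges must be re-inserted into the search structure, all without blowing up the per-phase cost. A second subtlety is ensuring that the dual adjustment after each augmentation phase can be implemented in amortized linear time per phase, which requires the layered BFS/DFS data structures of Micali--Vazirani to be augmented with priority queues keyed by slack. I would expect to spend most of the proof controlling these data-structural invariants, while the scaling analysis itself and the $\sqrt{n}$-phase bound follow standard templates.
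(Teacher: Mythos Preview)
The paper does not prove this proposition at all: it is stated with a citation to Duan and Su and used purely as a black box in Algorithm~\ref{algo23}. There is nothing to compare your sketch against, because the authors never attempt an argument; they simply invoke the result.

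Your outline of a scaling framework layered on top of Micali--Vazirani-style phase augmentation is a plausible high-level description of how such bounds are obtained, but two remarks are in order. First, the cited result of Duan and Su is for \emph{bipartite} graphs, where there are no blossoms and the phase structure is the simpler Hopcroft--Karp one; the blossom-handling machinery you describe belongs to the general-graph line (Gabow--Tarjan, Duan--Pettie), not to the reference actually cited. Second, even as a sketch for the general case, the step ``$O(\sqrt{n})$ phases per scale suffice because the symmetric difference has total length $O(n)$'' is where the real work hides: controlling the interaction between dual adjustments, blossom dissolution, and the phase count is exactly the technical contribution of those papers, and asserting it does not constitute a proof. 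In the context of the present paper, however, none of this is needed---you should simply cite the result as the authors do.
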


The \emph{open neighborhood} of a vertex $v$ in $G$ is $N_{G}(v)=\{u\in V(G): uv\in E(G)\}$. The \emph{degree} of a vertex $v$ is $|N_{G}(v)|$, and is denoted by $d_{G}(v)$. When there is no ambiguity, we do not use the subscript $G$. The minimum and maximum degrees of graph $G$ will be denoted by $\delta(G)$ and $\Delta(G)$, respectively. A vertex $v$ with $d(v)=1$ is a \emph{pendant vertex}, and the edge incident on a pendant vertex is a \emph{pendant edge}. The \emph{distance} between two vertices in a graph $G$ is defined as the number of edges in the shortest path between them. We use the notation $\widehat{d}(u,v)$ to represent the distance between two vertices $u$ and $v$ in a graph $G$ (when $G$ is clear from the context).

An \emph{independent set} of a graph $G$ is a subset of $V(G)$ such that no two vertices in the subset have an edge between them in $G$. The
\textit{independence number} of $G$ is the maximum cardinality of an independent set among all independent sets in $G$, and we denote it by $\mathsf{IS}(G)$. Given a graph $G$ and a positive integer $\ell$, \textsc{Independent Set} asks whether there exists an independent set in $G$ of size at least $\ell$. A \emph{clique} is a subset of $V(G)$ such that every two distinct vertices in the subset are adjacent in $G$. The \emph{clique number} of $G$ is the maximum cardinality of a clique among all cliques in $G$, and we denote it by $\mathsf{\omega(G)}$. Given a graph $G$ and a positive integer $\ell$, \textsc{Clique} asks whether $\omega(G)\geq \ell$. A \emph{feedback vertex set} $\mathsf{(FVS)}$ of a graph $G$ is a subset of $V(G)$ whose removal makes $G$ a forest. Given a (multi)graph $G$ and a positive integer $\ell$, \textsc{Feedback Vertex Set} asks whether there exists a feedback vertex set in $G$ of size at most $\ell$.

A \emph{connected component} of a graph $G$ is defined as a connected subgraph of $G$ that is not part of any larger connected subgraph. A \emph{bridge} is an edge of a graph $G$ whose deletion increases the number of connected components in $G$. A \emph{factor} of a graph $G$ is a spanning subgraph of $G$ (a subgraph with vertex set $V(G)$). A \emph{$k$-factor} of a graph is a $k$-regular subgraph of order $n$. In particular, a \emph{$1$-factor} is a perfect matching. A \emph{vertex cover} of a graph $G$ is a subset $S\subseteq V(G)$ such that every edge of $G$ has at least one of its endpoints in $S$, and the size of the smallest vertex cover among all vertex covers of $G$ is known as the \emph{vertex cover number} of $G$. Let $K_{n}$ and $K_{m,n}$ denote a \emph{complete graph} with $n$ vertices and a \emph{complete bipartite graph} with parts of sizes $m$ and $n$. A \emph{clique modulator} in a graph is a vertex subset whose deletion reduces the graph to a clique. The size of a clique modulator of minimum size is known as the \emph{vertex deletion distance to a clique}. 
A \emph{binary tree} is a rooted tree where each node has at most two children. For a graph $G$ and a set $X\subseteq V(G)$, we use $G-X$ to denote $G[V(G)\setminus X]$, that is, the graph obtained from $G$ by deleting $X$.

\subsection{Parameterized Complexity}
\label{PC}
 Standard notions in Parameterized Complexity not explicitly defined here can be found in \cite{cygan, downey}.  In the framework of Parameterized Complexity, each instance of a problem $\mathrm{\Pi}$ is associated with a non-negative integer \textit{parameter} $k$. A parameterized problem $\mathrm{\Pi}$ is \textit{fixed-parameter tractable} ($\mathsf{FPT}$) if there is an algorithm that, given an instance $(I,k)$ of $\mathrm{\Pi}$, solves it in time $f(k)\cdot |I|^{\mathcal{O}(1)}$, for some computable function $f(\cdot)$. Central to Parameterized
Complexity is the following hierarchy of complexity classes:
\begin{equation} \label{eq1}
\mathsf{FPT} \subseteq \mathsf{W[1]}  \subseteq \mathsf{W[2]} \subseteq \ldots \subseteq \mathsf{XP}.
\end{equation}

All inclusions in (\ref{eq1}) are believed to be strict. In particular, \FPT $\neq$ $\mathsf{W[1]}$ under the Exponential Time Hypothesis.

\begin{definition} [Equivalent Instances] \label{equivalent}
Let $\mathrm{\Pi}$ and $\mathrm{\Pi}'$ be two parameterized problems. Two instances $(I, k)\in \mathrm{\Pi}$ and $(I', k')\in \mathrm{\Pi}'$ are \emph{equivalent} if: $(I, k)$ is a Yes-instance of $\mathrm{\Pi}$ if and only if $(I', k')$ is a Yes-instance of $\mathrm{\Pi}'$.
\end{definition}

A parameterized (decision) problem $\mathrm{\Pi}$ admits a \emph{kernel} of size $f(k)$ for some function $f$ that depends only on $k$ if the following is true: There exists an algorithm (called a \emph{kernelization algorithm}) that runs in $(|I|+k)^{\mathcal{O}(1)}$ time and translates any input instance $(I,k)$ of $\mathrm{\Pi}$ into an equivalent instance $(I',k)$ of $\mathrm{\Pi}$ such that the size of $(I',k)$ is bounded by $f(k)$. If the function $f$ is polynomial in $k$, then the problem is said to admit a \emph{polynomial kernel}. It is well-known that a decidable parameterized problem is \FPT if and only if it admits a kernel \cite{cygan}. To show that a parameterized problem does not admit a polynomial kernel on up to reasonable complexity assumptions can be done by making use of the cross-composition technique introduced by Bodlaender, Jansen, and Kratsch \cite{bodlaender}. To apply the cross-composition technique, we need the following additional definitions. Since we are using an OR-cross-composition, we give definitions tailored for this case.

\begin{definition} [Polynomial Equivalence Relation, \cite{bodlaender}]
An equivalence relation $\mathcal{R}$ on $\mathrm{\Sigma^{*}}$ is a
\emph{polynomial equivalence relation} if the following two conditions hold:
\begin{enumerate}
    \item There is an algorithm that given two strings $x,y \in \mathrm{\Sigma^{*}}$ decides whether $x$ and $y$ belong to the same equivalence class in $(|x|+|y|)^{\mathcal{O}(1)}$ time.
\item For any finite set $S\subseteq \mathrm{\Sigma^{*}}$ the equivalence relation $\mathcal{R}$ partitions the elements of $S$ into at
most $(\max_{x\in S} |x|)^{\mathcal{O}(1)}$ classes.
\end{enumerate}
\end{definition}

\begin{definition} [OR-Cross-composition, \cite{bodlaender}] 
Let $L\subseteq \mathrm{\Sigma^{*}}$ be a problem and let $Q\subseteq \mathrm{\Sigma^{*}} \times \mathbb{N}$ be a parameterized problem. We say that $L$ \emph{OR-cross-composes} into $Q$ if there is a polynomial equivalence
relation $\mathcal{R}$ and an algorithm which, given $t$ strings $x_1, x_2, \ldots, x_t$ belonging to the same equivalence class of $\mathcal{R}$, computes an instance $(x^{*},k^{*})\in \mathrm{\Sigma^{*}}\times \mathbb{N}$ in time polynomial in $\Sigma_{i=1}^{t}|x_{i}|$
such that:
\begin{enumerate}
\item $(x^{*},k^{*})\in Q \iff$ $x_{i}\in L$ for some $i\in [t]$,
\item $k^{*}$ is bounded by a polynomial in $\max_{i=1}^{t} |x_{i}|+\log t$.
\end{enumerate}
\end{definition}

\begin{proposition} [\cite{bodlaender}] \label{orcross}
If some problem $L$ is $\mathsf{NP}$-$\mathsf{hard}$ under Karp reduction and there exists an $\textsf{OR-cross-composition}$ from $L$ into
some parameterized problem $Q$, then there is no polynomial kernel for $Q$ unless $\mathsf{NP}\subseteq \mathsf{coNP} \slash \mathsf{poly}$.
\end{proposition}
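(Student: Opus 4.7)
The plan is to apply Proposition~\ref{orcross} by constructing an OR-cross-composition from an NP-hard variant of \textsc{Acyclic Matching} into \textsc{Acyclic Matching} parameterized by $\mathsf{dtc}(G)+\ell$. The target construction will be obtained from the one used for Theorem~\ref{nopolyvc} by a small modification that turns the non-modulator side of the constructed graph from an independent set into a clique.

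First, I would take as the source an NP-hard version of \textsc{Acyclic Matching} (for instance on bipartite graphs of maximum degree four, which is NP-hard by F\"urst and Rautenbach), and declare two instances $(G,\ell),(G',\ell')$ equivalent iff they share the same number of vertices and the same $\ell$; this yields a polynomial equivalence relation $\mathcal{R}$ in the sense of Section~\ref{PC}. Given $t$ equivalent instances $(G_1,\ell),\ldots,(G_t,\ell)$, the construction for Theorem~\ref{nopolyvc} produces a target graph $G^*$ with a distinguished modulator set $S\subseteq V(G^*)$ of size $\mathrm{poly}(n+\log t)$ such that $V(G^*)\setminus S$ is an independent set, and $(G^*,\ell^*)$ is a yes-instance of \textsc{Acyclic Matching} iff some $(G_i,\ell)$ is. The minor modification for the present theorem is to add all missing edges inside $V(G^*)\setminus S$, so that this side becomes a clique; denote the resulting graph by $G^*_{\mathrm{dtc}}$. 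Then $\mathsf{dtc}(G^*_{\mathrm{dtc}})\leq |S|$, so the new parameter $\mathsf{dtc}(G^*_{\mathrm{dtc}})+\ell^*$ remains polynomial in $n+\log t$, as demanded by Proposition~\ref{orcross}.

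The nontrivial part is to verify that $(G^*_{\mathrm{dtc}},\ell^*)$ is still OR-correct. For the forward direction, the canonical acyclic matching produced by the proof of Theorem~\ref{nopolyvc} should place at most one vertex of the clique side in $V_M$ per matched edge (and indeed so few in total that the added clique edges do not create any cycle in $G^*_{\mathrm{dtc}}[V_M]$); if that is already guaranteed by the original construction, the matching remains acyclic after the modification. For the backward direction, one uses the observation that an acyclic matching in a clique of size at least four contains at most one edge, since two matched edges inside the clique would put four distinct clique-vertices into $V_M$ and induce a $K_4$ that contains cycles. Hence every acyclic matching of size at least two in $G^*_{\mathrm{dtc}}$ uses at most one edge entirely inside $V(G^*)\setminus S$, and all remaining matching edges are already edges of $G^*$; the same extraction argument used in the proof of Theorem~\ref{nopolyvc} then recovers a solution in some $G_i$.

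The main obstacle I anticipate is ensuring the forward direction: namely, that the intended acyclic matching in $G^*$ does not ``collide'' with the newly added clique edges in a way that creates a cycle in the induced subgraph. If the Theorem~\ref{nopolyvc} construction does not already enjoy the property that the target acyclic matching uses only a very small number of clique-side vertices along any would-be cycle, one can pre-route each matched edge through a dedicated pendant attached to a selector vertex in $S$, so that at most one endpoint of each matched edge lies in the clique side; this raises $|S|$ only by a polynomial factor and preserves the parameter bound required by Proposition~\ref{orcross}, yielding the desired lower bound.
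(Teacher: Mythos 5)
Your proposal does not prove the statement in question. The statement is Proposition~\ref{orcross} itself: the general framework result of Bodlaender, Jansen, and Kratsch asserting that if an $\mathsf{NP}$-hard language $L$ OR-cross-composes into a parameterized problem $Q$, then $Q$ has no polynomial kernel unless $\mathsf{NP}\subseteq \mathsf{coNP}/\mathsf{poly}$. What you have written instead is an application of that proposition — essentially a sketch of Theorem~\ref{nopolydtc}, building an OR-cross-composition into \textsc{Acyclic Matching} parameterized by distance to clique plus solution size — and in the very first sentence you say you will ``apply Proposition~\ref{orcross}.'' Using the proposition as a black box cannot establish the proposition. A genuine proof would have to argue at the level of the framework: given a polynomial kernel (or, more generally, a polynomial compression) for $Q$ together with the OR-cross-composition from $L$, one composes the $t$ instances, kernelizes, and thereby obtains an OR-distillation (weak distillation) of the $\mathsf{NP}$-hard language $L$ into a language of instance size polynomial in $\max_i |x_i| + \log t$; by the Fortnow--Santhanam/Drucker distillation lower bound this implies $\mathsf{NP}\subseteq \mathsf{coNP}/\mathsf{poly}$. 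None of this machinery appears in your write-up, so the target statement is left entirely unproved. (The paper itself cites this proposition from \cite{bodlaender} without proof, which is the appropriate treatment; but a blind proof attempt for it must supply the distillation argument, not an application.)

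Even read charitably as an attempt at Theorem~\ref{nopolydtc}, your construction deviates from the paper's and has gaps: the paper's modification is only to turn the instance-selector set $P$ into a clique (so the clique modulator is $V(G)\setminus P'$ with $P'=P\cup\{p\}$), not to fill in all non-edges of the entire complement of the vertex cover; your version changes the source problem (the paper composes from \textsc{Exact-3-Cover}, not from \textsc{Acyclic Matching} on bounded-degree bipartite graphs), and your forward direction is explicitly left contingent on an unverified property of the Theorem~\ref{nopolyvc} construction, patched by an ad hoc ``pre-routing through dedicated pendants'' step whose correctness (that it preserves both the OR-equivalence and acyclicity) is not argued. But the primary defect is the mismatch: the statement you were asked to prove is the kernelization lower-bound framework itself, and your proposal assumes it rather than proves it.
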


\begin{definition} [$\mathsf{FPT}$-Approximation Algorithm]
Let $\mathrm{\Pi}$ be a parameterized maximization problem. An algorithm $\mathcal{A}$ is an \emph{$\mathsf{FPT}$-approximation algorithm} for $\mathrm{\Pi}$ with approximation ratio $\alpha$ if:
\begin{enumerate}
    \item For every Yes-instance $(G, \ell)$ of $\mathrm{\Pi}$ with $\mathsf{opt}(G)\geq \ell$, algorithm $\mathcal{A}$ computes a solution $S$ of $G$ such that
$|S|\cdot {\alpha}\geq \ell$. For inputs not satisfying $\mathsf{opt}(G)\geq \ell$, the output can be arbitrary.
\item If the running time of $\mathcal{A}$ is bounded by $f(\ell)\cdot |G|^{\mathcal{O}(1)}$, where $f:\mathbb{N}\rightarrow \mathbb{N}$ is computable.  \end{enumerate}
\end{definition}

\section{FPT-inapproximation Results} \label{fpt:sec}
In this section, we prove that there is no $\mathsf{FPT}$ algorithm that can approximate \textsc{Acyclic Matching}, \textsc{Induced Matching}, and \textsc{Uniquely Restricted Matching} with any constant ratio unless $\mathsf{FPT}=\mathsf{W[1]}$. To obtain our result, we need the following proposition.

\begin{proposition} [\cite{lin}] \label{lintheorem}
Assuming $\mathsf{W[1]} \nsubseteq \mathsf{FPT}$, there
is no $\mathsf{FPT}$-algorithm that approximates \textsc{Clique} to any constant.
\end{proposition}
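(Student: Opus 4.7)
The plan is to establish this inapproximability via a two-stage reduction: first produce some non-trivial constant gap for a parameterized problem, and then amplify that gap to an arbitrary constant via graph products.

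For the first stage, I would begin from a canonical $\mathsf{W[1]}$-hard problem such as \textsc{Multicolored Clique}, and route the reduction through an intermediate parameterized problem with a built-in gap, namely parameterized \textsc{MaxCover}. In parameterized \textsc{MaxCover}, the left side of a bipartite structure is partitioned into $k$ blocks, and one seeks a representative from each block whose combined neighborhoods cover the entire right side. Using a parameterized analogue of Raz-style parallel repetition, composed with an error-correcting code (a list-decodable code or Hadamard-type code suffices), one argues that it is $\mathsf{W[1]}$-hard to distinguish \textsc{MaxCover} instances admitting a full cover from those whose best $k$ representatives cover at most a $(1-\epsilon)$-fraction of the right side, for some explicit constant $\epsilon > 0$. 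The encoded \textsc{MaxCover} instance is then mapped to a gap version of \textsc{Clique} via the classical "label" construction: vertices are (block, label) pairs, and edges connect consistent pairs. The code ensures that cliques of size $k$ in the Yes-case survive, while in the No-case no clique of size close to $k$ can arise, yielding an initial multiplicative gap $\alpha > 1$ in $\omega(G)$ when parameterized by $k$.

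For the second stage, I would amplify this initial gap using a graph product. Since the strong product satisfies $\omega(G \boxtimes G) = \omega(G)^2$, iterating $t$ times gives $\omega(G^{\boxtimes t}) = \omega(G)^t$. Hence an initial gap of $\alpha$ transforms into a gap of $\alpha^t$, which exceeds any prescribed constant $C$ when $t = \lceil \log C / \log \alpha \rceil$. The new parameter is $k^t$, still a function of $k$ alone, so the composed reduction is \FPT. A hypothetical \FPT-approximation algorithm for \textsc{Clique} within ratio $C$ would then decide the original $\mathsf{W[1]}$-hard instance in \FPT time, contradicting $\mathsf{W[1]} \nsubseteq \mathsf{FPT}$.

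The crux, and the hard part, is securing Stage 1: producing \emph{any} constant gap for a parameterized problem under only the $\mathsf{W[1]} \nsubseteq \mathsf{FPT}$ assumption, without access to a parameterized PCP theorem. This is where a delicate combinatorial gap-creation gadget enters—a composition of threshold graphs with error-correcting codes that forces every near-solution in the No-case to be strictly smaller than an honest solution in the Yes-case. Once this parameterized gap is in hand, the graph-product amplification and the final embedding into \textsc{Clique} are essentially routine.
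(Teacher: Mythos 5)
This statement is not proven in the paper at all: it is quoted verbatim from Lin's STOC 2021 result (\cite{lin}), and the paper's ``proof'' is simply that citation. Your proposal is therefore an attempt to reprove a substantial recent theorem, and as written it has a genuine gap exactly where the theorem's entire content lies. Your Stage 1 asks for ``a parameterized analogue of Raz-style parallel repetition'' applied to \textsc{MaxCover} to create a constant gap under only $\mathsf{W[1]}\nsubseteq\mathsf{FPT}$ --- but no such parameterized parallel repetition or PCP-type theorem is known; the absence of these tools is precisely why constant inapproximability of \textsc{Clique} under $\mathsf{W[1]}$-hardness was open. The \textsc{MaxCover}/distributed-PCP route you describe is the machinery used for $k$-\textsc{SetCover}-type lower bounds and for \textsc{Clique} under stronger hypotheses such as Gap-ETH, not under $\mathsf{W[1]}\neq\mathsf{FPT}$ alone. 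You acknowledge this yourself by deferring the crux to ``a delicate combinatorial gap-creation gadget,'' but that gadget is the theorem: Lin's actual argument is a direct gap-producing reduction from $k$-\textsc{Clique} built from error-correcting codes with carefully chosen threshold/covering properties, and nothing in your sketch supplies it or an alternative.

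Your Stage 2 is fine but routine: once any constant multiplicative gap $\alpha>1$ is in hand for the parameterized problem, amplification via the strong product (using $\omega(G\boxtimes H)=\omega(G)\,\omega(H)$, with the parameter growing from $k$ to $k^{t}$, still a function of $k$) does push the gap above any prescribed constant while keeping the reduction \FPT. So the proposal correctly identifies the shape of the argument but leaves its essential step unestablished; as a justification of Proposition \ref{lintheorem} it is incomplete, and for the purposes of this paper the proposition should simply be treated as the imported result of \cite{lin}.
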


From Proposition \ref{lintheorem}, we derive the following corollary.
\begin{corollary}\label{is}
Assuming $\mathsf{W[1]} \nsubseteq \mathsf{FPT}$, there
is no $\mathsf{FPT}$-algorithm that approximates \textsc{Independent Set} to any constant.
\end{corollary}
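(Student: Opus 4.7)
The plan is to derive the corollary by a standard complementation reduction from \textsc{Clique} to \textsc{Independent Set}, leveraging Proposition \ref{lintheorem}. The key observation is that a set $S \subseteq V(G)$ is a clique in $G$ if and only if $S$ is an independent set in the complement graph $\overline{G}$, so $\omega(G) = \mathsf{IS}(\overline{G})$. Since taking the complement is a polynomial-time operation that preserves the vertex set (and in particular, the parameter $\ell$), this reduction is an FPT-approximation-preserving reduction.

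Concretely, suppose for contradiction that there exists an $\mathsf{FPT}$-approximation algorithm $\mathcal{A}$ for \textsc{Independent Set} with some constant approximation ratio $\alpha \geq 1$, running in time $f(\ell) \cdot n^{\mathcal{O}(1)}$. First I would construct an algorithm $\mathcal{B}$ for \textsc{Clique} as follows: given an instance $(G, \ell)$ of \textsc{Clique}, compute $\overline{G}$ in polynomial time and invoke $\mathcal{A}$ on $(\overline{G}, \ell)$. If $(G, \ell)$ is a Yes-instance of \textsc{Clique}, i.e.\ $\omega(G) \geq \ell$, then $\mathsf{IS}(\overline{G}) \geq \ell$, so by the approximation guarantee of $\mathcal{A}$, the returned independent set $S$ in $\overline{G}$ satisfies $|S| \cdot \alpha \geq \ell$. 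Then $S$ is a clique in $G$ of size at least $\ell / \alpha$, so $\mathcal{B}$ outputs an $\alpha$-approximate clique.

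The total running time of $\mathcal{B}$ is $f(\ell) \cdot n^{\mathcal{O}(1)}$, hence $\mathcal{B}$ is an $\mathsf{FPT}$-approximation algorithm for \textsc{Clique} with constant approximation ratio $\alpha$. This contradicts Proposition \ref{lintheorem} under the assumption $\mathsf{W[1]} \nsubseteq \mathsf{FPT}$, completing the proof. There is essentially no technical obstacle here; the only thing to verify carefully is that the parameter $\ell$ is preserved exactly under complementation (which it is, since the sought cardinality is unchanged), and that the definition of $\mathsf{FPT}$-approximation algorithm as stated in the preliminaries is symmetric for the two problems in the sense that a solution of size $\geq \ell/\alpha$ on the complement side translates back to a solution of the same size on the original side.
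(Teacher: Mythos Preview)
Your proposal is correct and is precisely the standard complementation argument that the paper implicitly relies on: the paper states the corollary as an immediate consequence of Proposition~\ref{lintheorem} without spelling out a proof, and the only natural derivation is the clique/independent-set duality $\omega(G)=\mathsf{IS}(\overline{G})$ you describe. There is nothing to add; your argument matches the intended (unwritten) reasoning.
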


Before presenting our reduction from \textsc{Independent Set}, we establish (in Section \ref{4.1}) the relationship of $\mathsf{IS(G)}$ with the following: $\mathsf{AM(G)}$, $\mathsf{IM(G)}$, and $\mathsf{URM(G)}$, which will be critical for the arguments in the proof of our main theorem in this section. 
\subsection{Relation of \textsf{IS(G)} with \textsf{AM(G)}, \textsf{IM(G)} and \textsf{URM(G)}} \label{4.1}
\INTROLEMMA*
\begin{proof}
If $M$ is an acyclic matching of size $\ell$, then $G[V_{M}]$ is a forest on $2\ell$ vertices with a perfect matching. Note that $G[V_{M}]$ is a bipartite graph, and in a bipartite graph that has a perfect matching, the size of both partitions in any bipartition must be equal to each other, and hence, in our case, equal to $\ell$. This implies that $G[V_{M}]$ has an independent set of size $\ell$. Furthermore, as $G[V_{M}]$ is an induced subgraph of $G$, it is clear that $G$ has an independent set of size $\ell$ as well. \qed
\end{proof}

Since every induced matching is also an acyclic matching, the next lemma follows directly from Lemma \ref{introlemma}.
\begin{lem}\label{imlemma}
If $G$ has an induced matching $M$ of size $\ell$, then $G$ has an independent set of size at least $\ell$. Moreover, given $M$, the independent set is computable in polynomial time.
\end{lem}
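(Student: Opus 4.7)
My plan is to observe that Lemma \ref{imlemma} is an immediate corollary of Lemma \ref{introlemma}, so the proof reduces to verifying a single structural inclusion. Concretely, I would first recall that by definition, a matching $M$ is \emph{induced} precisely when $G[V_M]$ is a disjoint union of copies of $K_2$. Since a disjoint union of $K_2$'s contains no cycles, it is in particular a forest, and hence every induced matching of $G$ is also an acyclic matching of $G$. Having established this inclusion, I can simply invoke Lemma \ref{introlemma} on $M$ to conclude both the existence and the polynomial-time computability of an independent set of size at least $\ell$.

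If desired, I would also point out that a direct self-contained argument is essentially trivial and may be spelled out to make the proof readable without chasing back the previous lemma: for each edge $u_iv_i \in M$ one picks exactly one endpoint, say $u_i$, and sets $I = \{u_1, \ldots, u_\ell\}$. Since $G[V_M]$ consists solely of the matched edges $u_iv_i$, no two distinct vertices $u_i, u_j$ are adjacent in $G$ (any such edge would have to appear in $G[V_M]$, contradicting the induced structure). Thus $I$ is an independent set of size $\ell$, and it is produced from $M$ in linear time.

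There is no real obstacle here; the lemma is a one-line consequence of the inclusion ``induced matching $\Rightarrow$ acyclic matching'' combined with Lemma \ref{introlemma}, and the paper's own remark preceding the statement already flags it as following directly. The only thing I would be mildly careful about is making the polynomial-time construction explicit, so the second sentence of the lemma is not left implicit; this is handled either by noting that the construction inside the proof of Lemma \ref{introlemma} runs in polynomial time, or by giving the direct one-endpoint-per-edge selection described above.
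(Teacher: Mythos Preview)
Your proposal is correct and matches the paper's approach exactly: the paper does not even give a separate proof, merely remarking that ``since every induced matching is also an acyclic matching, the next lemma follows directly from Lemma~\ref{introlemma}.'' Your additional self-contained one-endpoint-per-edge argument is a harmless bonus.
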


Proving a similar lemma for uniquely restricted matching is more complicated. To this end, we need the following notation. Given a graph $G$, an \emph{even cycle} (i.e., a cycle with an even number of edges) in $G$ is said to be an \emph{alternating cycle} with respect to a matching $M$ if every second edge of the cycle belongs to $M$. The following proposition characterizes uniquely restricted matchings in terms of alternating cycles.
\begin{proposition} [\cite{golumbic}] \label{defurm}
Let $G$ be a graph. A matching $M$ in $G$ is uniquely restricted if and only if there is no alternating cycle with respect to $M$ in $G$.
\end{proposition}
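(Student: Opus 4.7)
The plan is to prove both directions by the classical symmetric-difference argument for matchings, carried out inside the induced subgraph $G[V_M]$, in which $M$ is already a perfect matching by construction. The key observation to set up is that any alternating cycle $C$ with respect to $M$ has all of its vertices in $V_M$: since $C$ is even and its edges alternate, each vertex of $C$ is incident to exactly one $M$-edge lying on $C$, and, because $M$ is a matching, that vertex has no other $M$-edge off $C$.

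For the forward direction ($\Rightarrow$), I would argue by contrapositive: assume $G$ contains an alternating cycle $C$ with respect to $M$, and set $M' = M \triangle E(C)$. Outside $V(C)$ we have $M' = M$, and on $V(C)$, $M'$ coincides with the ``other'' perfect matching of $C$ (the half of $E(C)$ not contained in $M$). By the observation above, these pieces do not interfere, so $M'$ is a matching saturating exactly $V_M$. Hence $M'$ is a perfect matching of $G[V_M]$ distinct from $M$, so $M$ is not uniquely restricted.

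For the reverse direction ($\Leftarrow$), I would again use the contrapositive: assume $M$ is not uniquely restricted, so $G[V_M]$ admits a perfect matching $M' \neq M$. Consider the edge set $H = M \triangle M'$. Every vertex $v \in V_M$ is incident to exactly one edge of $M$ and exactly one of $M'$ in $G[V_M]$, so in $H$ it has degree $0$ (if these two edges coincide) or degree $2$ (otherwise). Therefore $H$ decomposes into vertex-disjoint cycles, and since consecutive edges around each such cycle come alternately from $M$ and $M'$, every such cycle is even and alternating with respect to $M$. Because $M' \neq M$, $H$ is non-empty, which yields the desired alternating cycle in $G[V_M] \subseteq G$.

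The argument presents no serious obstacle; the only points requiring care are (i) to pass immediately to $G[V_M]$ so that both matchings under consideration are perfect there, (ii) to invoke the matching property of $M$ to rule out $M$-edges leaving $V(C)$ in the first direction, and (iii) to note in the second direction that the degree-$0$/degree-$2$ dichotomy forces $H$ to be a disjoint union of even alternating cycles rather than a more general subgraph.
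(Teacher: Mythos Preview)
The paper does not supply its own proof of this proposition; it is stated as a cited result from Golumbic, Hirst, and Lewenstein and used as a black box. Your argument is the standard symmetric-difference proof and is correct: the forward direction correctly uses that every vertex of an alternating cycle $C$ lies in $V_M$ (so $M\triangle E(C)$ is a second perfect matching of $G[V_M]$), and the reverse direction correctly observes that $M\triangle M'$ in $G[V_M]$ is $2$-regular on its support and hence a non-empty disjoint union of even $M$-alternating cycles. There is nothing to compare against in the paper itself.
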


Additionally, our proof will identify some bridges based on the following proposition.

\begin{proposition} [\cite{gabow}] \label{gabowlemma}
A graph with a unique $1$-factor has a bridge that is matched.
\end{proposition}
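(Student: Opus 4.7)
I argue by contradiction via Proposition \ref{defurm}: the uniqueness of the $1$-factor $M$ is equivalent to $G$ containing no $M$-alternating cycle, so it suffices to construct such a cycle under the assumption that no edge of $M$ is a bridge. First I reduce to the case where $G$ is connected, since each connected component of $G$ inherits a unique $1$-factor from $M$ and a matched bridge in a component is a matched bridge of $G$. If $G$ contains a pendant vertex $v$ with unique neighbour $u$, then $uv \in M$ is forced (since $v$ must be saturated) and is automatically a bridge because removing it isolates $v$; so the conclusion holds trivially. Hence I may henceforth assume $\delta(G)\ge 2$.

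In this regime I would build an indefinitely extendable alternating walk $x_{0},x_{1},x_{2},\ldots$ starting from any edge $u_{0}v_{0}\in M$ by setting $x_{0}=u_{0}$, $x_{1}=v_{0}$, taking $x_{2k+2}$ to be an arbitrary neighbour of $x_{2k+1}$ distinct from $x_{2k}$ (which exists because $d(x_{2k+1})\ge 2$), and taking $x_{2k+3}$ to be the $M$-mate of $x_{2k+2}$. By construction, the edges at even positions lie in $M$ while those at odd positions do not, so the $M$-mate of $x_{k}$ is $x_{k+1}$ when $k$ is even and $x_{k-1}$ when $k$ is odd. Since $V(G)$ is finite, the walk must revisit a vertex; let $j$ be minimal with $x_{j}=x_{i}$ for some $i<j$. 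A parity-and-minimality analysis, leveraging the uniqueness of $M$-mates together with the non-bridge hypothesis on $M$-edges, would show that $i$ and $j$ must share the same parity: an opposite-parity revisit would, via the $M$-mate identities, force an additional coincidence $x_{i+1}=x_{j-1}$ (or symmetrically $x_{i-1}=x_{j+1}$), which either produces a vertex repetition strictly before index $j$ (contradicting minimality) or places us in a degenerate configuration that the non-bridge hypothesis on the relevant $M$-edge can be used to exclude. Once same parity is secured, the closed segment $x_{i},x_{i+1},\ldots,x_{j-1},x_{i}$ is a simple $M$-alternating cycle of length $j-i\ge 4$ (simplicity of $G$ excluding $j-i=2$, which would require two oppositely-labelled parallel edges between $x_{i}$ and $x_{i+1}$), contradicting Proposition \ref{defurm}.

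The main obstacle I anticipate is precisely the parity-and-minimality bookkeeping that rules out opposite-parity revisits: a naive appeal to the uniqueness of $M$-mates handles one of the two opposite-parity subcases cleanly (by producing the earlier collision $x_{i+1}=x_{j-1}$), but the symmetric subcase is genuinely more delicate and requires the non-bridge hypothesis to exclude pathological ``mirror-walk'' configurations, for instance by replacing the walk with a shortest directed cycle in the auxiliary digraph whose arcs are $u\to v$ iff $v$ is the $M$-mate of some non-$M$-neighbour of $u$, and then lifting that cycle to a simple alternating cycle in $G$. The remaining pieces---the reduction to a connected graph, the pendant case, the extension of the walk using $\delta(G)\ge 2$, and the invocation of Proposition \ref{defurm}---are routine graph-theoretic manipulations.
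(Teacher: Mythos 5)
This proposition is cited from Gabow's paper and the paper gives no proof of its own, so there is nothing to compare against; the question is whether your blind argument is correct. It is not: the parity bookkeeping in the middle of your argument is exactly where the non-bridge hypothesis has to do real work, and as written the argument proves a false statement. After the pendant reduction, your walk uses \emph{only} $\delta(G)\ge 2$; the non-bridge hypothesis is not actually invoked in any concrete step. But ``connected, unique $1$-factor, $\delta\ge 2$'' does \emph{not} force an $M$-alternating cycle, so the walk cannot be doomed to close up on a same-parity repeat. Concretely, take two triangles $abc$ and $def$ joined by the single edge $cd$, with $M=\{ab,cd,ef\}$. Here $M$ is the unique perfect matching, $\delta(G)=2$, and (by Proposition \ref{defurm}) there is no alternating cycle; the proposition is of course still true because $cd$ is a matched bridge. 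Run your walk: $x_0=a,x_1=b,x_2=c,x_3=d,x_4=e,x_5=f,x_6=d$. The first repeat is $x_6=x_3=d$ with $j=6$ even and $i=3$ odd, i.e.\ precisely the opposite-parity ``Case B'' you flag as delicate. The mate identity gives $x_7=x_2=c$ and then $x_8=x_0=a$, a same-parity return, but the resulting closed alternating walk $a,b,c,d,e,f,d,c,a$ traverses the edge $cd$ twice and contains no simple alternating cycle at all, and the opposite-parity collisions $(3,6)$ and $(2,7)$ do not ``propagate inward'' to an earlier collision — your reduction $x_{i+1}=x_{j-1}$ or $x_{i-1}=x_{j+1}$ just cycles between the two existing coincidences.

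The same obstruction defeats the auxiliary digraph you suggest: in that digraph (arcs $u\to v$ when $v$ is the $M$-mate of some non-$M$-neighbour of $u$), this example yields a shortest directed cycle whose lift has a mate-collision ($w_i=u_j$ forcing $w_{j-1}=u_{i+1}$), and there is no obvious shorter cycle to fall back on. In short, the scaffolding (reduce to connected, kill the pendant case, invoke Proposition \ref{defurm}) is fine, but the core claim — that a same-parity repeat is forced, or that the walk/digraph lifts to a simple alternating cycle — is the entire content of Kotzig's theorem, and your sketch neither proves it nor shows where the non-bridge hypothesis enters. Any correct proof must use that hypothesis in the heart of the argument (e.g.\ via the block/bridge structure of $G$, or Gabow's own ear-type argument), not merely to dispose of degree-$1$ vertices.
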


\begin{lem} \label{urm:lemma}
If $G$ has a uniquely restricted matching $M$ of size $\ell$, then $G$ has an independent set of size at least $\frac{\ell+1}{2}$. Moreover, given $M$, the independent set is computable in polynomial time.
\end{lem}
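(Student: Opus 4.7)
\smallskip

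The plan is a strong induction on $\ell$ carried out inside $H := G[V_M]$. First, observe that $M$ is the \emph{unique} perfect matching of $H$: every vertex of $H$ is $M$-saturated by the definition of $V_M$, and if $H$ admitted a second perfect matching $M'$, then any connected component of the symmetric difference $M\triangle M'$ would be an alternating cycle with respect to $M$ in $G$, contradicting Proposition \ref{defurm}. The base case $\ell=1$ is settled by returning either endpoint of the single edge of $M$.

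For $\ell \ge 2$, I would apply Proposition \ref{gabowlemma} to locate a matched bridge $e=uv\in M$ of $H$, and let $C_u,C_v$ be the components of $H-e$ containing $u$ and $v$. Because $e$ is the only edge of $H$ between $C_u$ and $C_v$, every vertex of $C_u\setminus\{u\}$ has its $M$-mate inside $C_u$, forcing $|V(C_u)|$ to be odd; the same holds for $C_v$. I then split into two cases. In \emph{Case A}, one component is a singleton, say $C_u=\{u\}$: then $H-\{u,v\}$ inherits $M\setminus\{e\}$ as a uniquely restricted matching of size $\ell-1$ (any alternating cycle in the induced subgraph would lift to one in $H$), so induction yields an independent set $I' \subseteq V_M\setminus\{u,v\}$ in $G$ of size $\lceil\ell/2\rceil$; since $u$ has no neighbor in $V_M\setminus\{v\}$ (such a neighbor would survive in $H-e$ and contradict $C_u=\{u\}$), the set $I'\cup\{u\}$ is independent in $G$ and has size $\lceil\ell/2\rceil+1 \ge \lceil(\ell+1)/2\rceil$. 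In \emph{Case B}, both components have size at least $3$; set $H_u:=C_u-u$, $H_v:=C_v-v$, $\ell_u:=(|V(C_u)|-1)/2$, $\ell_v:=(|V(C_v)|-1)/2$, so $\ell_u,\ell_v \ge 1$ and $\ell_u+\ell_v+1=\ell$. The same alternating-cycle lifting argument shows that $H_u$ and $H_v$ (as induced subgraphs of $G$) carry uniquely restricted matchings of sizes $\ell_u$ and $\ell_v$, respectively; strong induction then delivers independent sets $I_u\subseteq V(H_u)$ and $I_v\subseteq V(H_v)$ of sizes at least $\lceil(\ell_u+1)/2\rceil$ and $\lceil(\ell_v+1)/2\rceil$. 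Since $e$ is the sole $H$-edge across the partition $V(C_u)\cup V(C_v)$, there is no edge of $G$ between $V(H_u)$ and $V(H_v)$ (such an edge would lie in $H$), so $I_u\cup I_v$ is an independent set of $G$ of size at least $(\ell_u+1)/2+(\ell_v+1)/2=(\ell+1)/2$, hence at least $\lceil(\ell+1)/2\rceil$ as it is an integer.

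The hard part will be Case B: justifying that the two split-off pieces carry uniquely restricted matchings, rather than merely perfect matchings, is what keeps the induction from collapsing. Proposition \ref{defurm} is the decisive tool, since absence of alternating cycles transfers cleanly to induced subgraphs, whereas uniqueness of a perfect matching in a sub-piece would be more delicate to argue directly. Polynomial-time computability is then immediate: each recursive call strictly reduces $\ell$, matched bridges can be found via any standard bridge-detection routine, and all other operations are linear per step.
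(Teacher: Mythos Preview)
Your proof is correct and follows essentially the same route as the paper: both arguments work inside $H=G[V_M]$, invoke Proposition~\ref{gabowlemma} to locate a matched bridge, split on whether that bridge is pendant, and bound the resulting recursion by $(\ell+1)/2$. The only cosmetic difference is packaging---the paper presents it as an iterative algorithm \textsc{Find} together with a recurrence $R_h=\min\{R_{h-1}+1,\min_i R_i+R_{h-i-1}\}$ and then proves $R_h\ge(h+1)/2$, whereas you fold everything into a single strong induction on~$\ell$.
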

\begin{proof}
Let $M$ be a uniquely restricted matching in $G$ of size $\ell$. By the definition of a uniquely restricted matching, $G[V_{M}]$ has a unique perfect matching (1-factor). Let $H$ and $I$ be two sets. Initialize $H=G[V_{M}]$ and $I=\emptyset$. Next, we design an iterative algorithm, say, Algorithm \textsc{Find}, to compute an independent set in $H$ with the help of Proposition \ref{gabowlemma}. Algorithm \textsc{Find} does the following:
\medskip

While $H$ has a connected component of size at least four, go to 1.
\begin{enumerate}
    \item Pick a bridge, say, $e$, in $H$ that belongs to $M$, and go to 2. (The existence of $e$ follows from Proposition \ref{gabowlemma}.)
    \item If $e$ is a pendant edge, then remove $e$ along with its endpoints from $H$, and store the pendant vertex incident on $e$ to $I$. Else, go to 3.
    \item Remove $e$ along with its endpoints from $H$.
\end{enumerate}

After recursively applying 1-3, Algorithm \textsc{Find} arbitrarily picks exactly one vertex from each of the remaining connected components and adds them to $I$. 

Now, it remains to show that $I$ is an independent set of $H$ (and therefore also of $G$) of size at least $\frac{\ell+1}{2}$. For this purpose, we note that Algorithm \textsc{Find} gives rise to a recursive formula (defined below).

Let $R_{h}$ denote a lower bound on the maximum size of an independent set in $H$ of a maximum matching of size $h$. First, observe that if $H$ has a matching of size one, then it is clear that $H$ has an independent set of size at least 1 (we can pick one of the endpoints of the matched edge). Thus, $R_{1}=1$. Now, we define how to compute $R_{h}$ recursively.
\begin{equation} \label{urm} R_{h}=\min \{R_{h-1}+1, \min_{\displaystyle{\small{1\leq i \leq h-2}}} R_{i}+R_{h-i-1}\}. \end{equation}

The first term in (\ref{urm}) corresponds to the case where the matched bridge is a pendant edge. On the other hand, the second term in (\ref{urm}) corresponds to the case where the matched bridge is not a pendant edge. In this case, all the connected components have at least one of the matched edges. Next, we claim the following, 
\begin{equation}\label{urm1}
R_{h}\geq \frac{h+1}{2}.
\end{equation} 

We prove our claim by applying induction on the maximum size of a matching in $H$. Recall that $R_{1}=1$. Next, by the induction hypothesis, assume that (\ref{urm1}) is true for all $k<h$. Note that since $h-1<h$, (\ref{urm1}) is true for $k=h-1$, i.e., $R_{h-1}\geq \frac{h}{2}$. To prove that (\ref{urm1}) is true for $k=h$, we first assume that the first term, i.e., $R_{h-1}+1$ gives the minimum in (\ref{urm}). In this case, $R_{h}\geq \frac{h}{2}+1=\frac{h+2}{2}\geq \frac{h+1}{2}.$ Next, assume that the second term gives the minimum in (\ref{urm}) for some $i', 1\leq i' \leq h-2$. In this case, note that $i'< h$ and $h-i'-1\leq h-2<h$, and thus, by the induction hypothesis, (\ref{urm1}) holds for both $R_{i'}$ and $R_{h-i'-1}$. Therefore, $R_{h}\geq \frac{i'+1}{2}+\frac{h-i'}{2}=\frac{h+1}{2}$. \qed
\end{proof}

\begin{remark}
Throughout this paper, let $\textsc{Restricted}\in \{\textsc{Acyclic}, \textsc{Induced}, \textsc{Uniquely}$
$\textsc{Restricted}\}$ and $\mathcal{R}\in \{$acyclic, induced, uniquely restricted$\}$.
\end{remark}
\begin{figure}[t]
 \centering
    \includegraphics[scale=0.9]{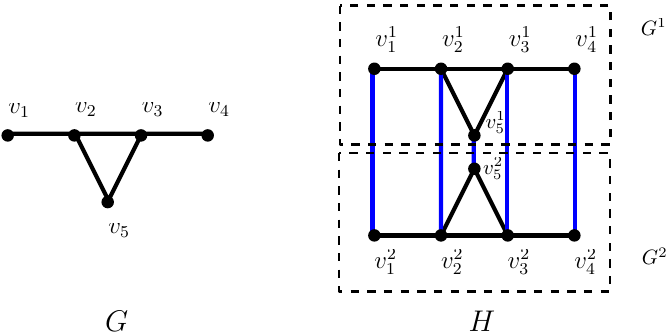}
    \caption{The construction of $H$ from $G$. The edges shown in blue color represent vertical edges.}
    \label{fig3}
\end{figure}

By Lemmas \ref{introlemma}-\ref{urm:lemma}, we have the following corollary.
\begin{corollary}\label{newlemma}
If a graph $G$ has an $\mathcal{R}$ matching $M$ of size $\ell$, then $G$ has an independent set of size at least $\frac{\ell+1}{2}$. Moreover, given $M$, the independent set is computable in polynomial time.
\end{corollary}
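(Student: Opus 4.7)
The plan is to treat the corollary as an immediate case analysis over the three possible meanings of $\mathcal{R}$, combined with the observation that $\frac{\ell+1}{2}$ is the weakest of the three lower bounds already established. I would begin by noting that, by hypothesis, $M$ is either an acyclic, an induced, or a uniquely restricted matching of $G$ of size $\ell$. In the first two cases, I would invoke Lemma \ref{introlemma} and Lemma \ref{imlemma}, respectively, each of which hands me (together with a polynomial-time procedure) an independent set of $G$ of size at least $\ell$, and since $\ell \geq \frac{\ell+1}{2}$ for every positive integer $\ell$, this independent set satisfies the required lower bound. In the third case, Lemma \ref{urm:lemma} directly gives an independent set of $G$ of size at least $\frac{\ell+1}{2}$, again constructible in polynomial time from $M$.

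The only thing worth writing carefully is the unified bound $\frac{\ell+1}{2}$: it coincides with the bound in Lemma \ref{urm:lemma} and is dominated by the stronger bound $\ell$ coming from Lemmas \ref{introlemma} and \ref{imlemma}, so taking the minimum across the three cases is exactly what is needed. For the algorithmic claim, I would simply point out that in each case the relevant lemma provides a polynomial-time construction (Algorithm \textsc{Find} in the uniquely restricted case, and the bipartition-based construction of $G[V_M]$ in the acyclic and induced cases), and a single branch on the type of $M$ is performed in constant time. There is no real obstacle here: the corollary is a direct consolidation of the three preceding lemmas, and the proof is essentially one line plus a two-way inequality.
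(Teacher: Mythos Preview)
Your proposal is correct and matches the paper's own argument, which is literally the single line ``By Lemmas \ref{introlemma}--\ref{urm:lemma}, we have the following corollary.'' You have simply spelled out the case split and the inequality $\ell \geq \frac{\ell+1}{2}$ that the paper leaves implicit.
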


\subsection{Construction} \label{union:sec}
Given a graph $G$, where $V(G)=\{v_{1},\ldots,v_{n}\}$, we construct a graph $H=\mathsf{reduce}(G)$ as follows. Let $G^{1}$ and $G^{2}$ be two copies of $G$. Let $V^{1}=\{v_{1}^{1},\ldots,v_{n}^{1}\}$ and $V^{2}=\{v_{1}^{2},\ldots,v_{n}^{2}\}$ denote the vertex sets of $G^{1}$ and $G^{2}$, respectively. Let $V(H)=V(G^{1})\cup V(G^{2})$ and $E(H)=E(G^{1})\cup E(G^{2})\cup \{v_{i}^{1}v_{i}^{2}: i\in [n]\}$. See Figure \ref{fig3} for an illustration of the construction of $H$ from $G$. Let us call the edges between $G^{1}$ and $G^{2}$ \emph{vertical edges}.
\begin{lem} \label{rmatch}
Let $G$ and $H$ be as defined above. If $H$ has an $\mathcal{R}$ matching of the form $M=\{v_{1}^{1}v_{1}^{2}, \ldots,v_{p}^{1}v_{p}^{2}\}$, then $I_{M}=\{v_{1},  \ldots,v_{p}\}$
is an independent set of $G$.
\end{lem}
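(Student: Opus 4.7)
The plan is to prove the lemma by contradiction: assume $I_M$ is not independent in $G$ and derive a structural obstruction in $H$ that rules out $M$ being an $\mathcal{R}$ matching for any choice of $\mathcal{R}$.

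First, I would suppose there exist distinct $i,j\in[p]$ with $v_iv_j\in E(G)$. By the construction of $H=\mathsf{reduce}(G)$, the copies of this edge in $G^1$ and $G^2$ yield $v_i^1v_j^1\in E(H)$ and $v_i^2v_j^2\in E(H)$. Together with the two vertical edges $v_i^1v_i^2$ and $v_j^1v_j^2$, which lie in $M$ by hypothesis, the four vertices $\{v_i^1,v_i^2,v_j^1,v_j^2\}$ induce a $4$-cycle $v_i^1-v_i^2-v_j^2-v_j^1-v_i^1$ in $H[V_M]$. Moreover, along this cycle the matched and unmatched edges alternate, since the vertical edges belong to $M$ while the two horizontal edges (coming from the copies of $v_iv_j$) do not.

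Next I would use this $4$-cycle to derive the contradiction case by case. If $M$ is an acyclic matching, then $H[V_M]$ must be a forest, contradicting the existence of the $4$-cycle. If $M$ is an induced matching, then $H[V_M]$ must consist only of the edges of $M$, but the horizontal edges $v_i^1v_j^1$ and $v_i^2v_j^2$ give additional edges in $H[V_M]$, again a contradiction. Finally, if $M$ is a uniquely restricted matching, the $4$-cycle is alternating with respect to $M$, which contradicts Proposition \ref{defurm}. Alternatively, since every induced matching is acyclic and every acyclic matching is uniquely restricted, it would suffice to carry out only the uniquely restricted argument, but writing out the three cases makes the proof self-contained.

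There is no substantial obstacle here; the proof is a direct structural check. The only thing to be careful about is to verify that the four vertices $v_i^1,v_i^2,v_j^1,v_j^2$ are distinct (which follows from $i\neq j$) so that the alleged $4$-cycle is genuinely a cycle rather than a degenerate walk, and to confirm that the alternation statement uses exactly the two vertical edges in $M$ (no further edges of $M$ can be incident to these four vertices, because $M$ is a matching and each of these vertices is already saturated by a vertical edge). With these observations in place the proof is complete in a few lines.
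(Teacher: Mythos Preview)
Your proof is correct and follows essentially the same approach as the paper: assume an edge $v_iv_j\in E(G)$, exhibit the $4$-cycle $v_i^1,v_i^2,v_j^2,v_j^1,v_i^1$ in $H[V_M]$, and use it to contradict each of the three possible meanings of $\mathcal{R}$ (invoking Proposition~\ref{defurm} for the uniquely restricted case). Your additional remarks on distinctness and on the sufficiency of the uniquely restricted argument are fine but not needed beyond what the paper does.
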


\begin{proof}
If $I_{M}$ is an independent set of $G$, then we are done. So, targeting a contradiction, assume that there exist distinct $v_{i},v_{j}\in I_{M}$ such that $v_{i}v_{j}\in E(G)$. By the definition of $H$, $v_{i}^{1}v_{j}^{1}, v_{i}^{2}v_{j}^{2} \in E(H)$. This implies that $v_{i}^{1},v_{i}^{2},v_{j}^{2},v_{j}^{1},v_{i}^{1}$ forms a cycle in $H$, and thus $M$ is neither an induced matching nor an acyclic matching in $H$. Furthermore, observe that $v_{i}^{1},v_{i}^{2},v_{j}^{2},v_{j}^{1},v_{i}^{1}$ is an alternating cycle of length four in $H$, therefore, by Proposition \ref{defurm}, $M$ is not a uniquely restricted matching in $H$. Thus, we necessarily reach a contradiction. \qed
\end{proof}

\subsection{Hardness of Approximation Proof} 

To prove the main theorem of this section (Theorem \ref{union}), we first suppose that \textsc{Restricted Matching} can be approximated within a ratio of $\alpha>1$, where $\alpha\in \mathbb{R}^{+}$ is a constant, by some FPT-approximation algorithm, say, Algorithm $\mathcal{A}$. 

By the definition of $\mathcal{A}$, the following is true:
 \begin{enumerate}
     \item [i)] If $H$ does not have an $\mathcal{R}$ matching of size $\ell$, then the output of $\mathcal{A}$ is arbitrary (indicating that $(H,\ell)$ is a No-instance).
     \item [ii)] If $H$ has an $\mathcal{R}$ matching of size $\ell$, then $\mathcal{A}$ returns an $\mathcal{R}$ matching, say, $X$, such that $|X|\geq \frac{\ell}{\alpha}$ and $|X|\leq \mathsf{opt(H)}$, where $\mathsf{opt(H)}$ denotes the optimal size of an $\mathcal{R}$ matching in $H$.
 \end{enumerate} 
 
  Next, we propose an FPT-approximation algorithm, say, Algorithm $\mathcal{B}$, to compute an $\mathsf{FPT}$-approximate solution for \textsc{Independent Set} as follows. Given an instance $(G,\ell)$ of \textsc{Independent Set}, Algorithm $\mathcal{B}$ first constructs an instance $(H,\ell)$ of \textsc{Restricted Matching}, where $H=\mathsf{reduce}(G)$ (see Section \ref{union:sec}). Algorithm $\mathcal{B}$ then solves $(H,\ell)$ by using Algorithm $\mathcal{A}$. If $\mathcal{A}$ returns an $\mathcal{R}$ matching $X$, then $\mathcal{B}$ returns an independent set of size at least $\frac{|X|}{8} (\geq \frac{\ell}{8\alpha})$. Else, the output is arbitrary.
 \medskip
 
 Now, it remains to show that $\mathcal{B}$ is an $\mathsf{FPT}$-approximation algorithm for \textsc{Independent Set} with an approximation factor of $\beta>1$, where $\beta\in \mathbb{R}^{+}$, which we will show with the help of the following two lemmas. 

 \begin{lem} \label{inap1}
 Algorithm $\mathcal{B}$ approximates \textsc{Independent Set} within a constant factor $\beta>1$, where $\beta\in \mathbb{R}^{+}$.
 \end{lem}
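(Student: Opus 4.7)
The plan is to show that Algorithm $\mathcal{B}$ achieves approximation ratio $\beta := 8\alpha$, which is a constant since $\alpha$ is. The argument splits into a forward and a backward direction connected by the specification of Algorithm $\mathcal{A}$.

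For the forward direction, I would assume $(G,\ell)$ is a Yes-instance of \textsc{Independent Set}, fix an independent set $\{v_{i_1},\ldots,v_{i_\ell}\}$ of $G$, and consider the vertical-edge set $M=\{v_{i_j}^1 v_{i_j}^2 : j\in[\ell]\}$ in $H$. Directly from the construction in Section \ref{union:sec}, the only edges of $H$ among $V_M$ are either vertical edges inside $M$ or edges of the form $v_{i_j}^r v_{i_k}^r$ for $r\in\{1,2\}$; the latter cannot exist because $v_{i_j}v_{i_k}\notin E(G)$. Thus $H[V_M]$ is a disjoint union of $\ell$ copies of $K_2$, so $M$ is an induced matching and therefore also an acyclic matching and a uniquely restricted matching. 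Consequently $(H,\ell)$ is a Yes-instance of \textsc{Restricted Matching}, and by the specification of $\mathcal{A}$ the $\mathcal{R}$ matching $X$ it returns satisfies $|X|\geq \ell/\alpha$.

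For the backward direction I would invoke Corollary \ref{newlemma} on $X$ to obtain, in polynomial time, an independent set $I^*$ of $H$ with $|I^*|\geq (|X|+1)/2$. The key structural observation is that every edge of $H$ lies either entirely in $V(G^1)$, entirely in $V(G^2)$, or is one of the vertical edges $v_i^1 v_i^2$; in particular no edge of $G^1$ touches $V(G^2)$ and vice versa. Setting $I^*_r := I^* \cap V(G^r)$ for $r\in\{1,2\}$ and projecting back to $V(G)$ via $v_i^r \mapsto v_i$ therefore produces two independent sets of $G$, and the larger of them has size at least $|I^*|/2 \geq (|X|+1)/4 \geq |X|/8 \geq \ell/(8\alpha)$. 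This is exactly what $\mathcal{B}$ outputs, so $\mathcal{B}$ is an $\mathsf{FPT}$-approximation algorithm for \textsc{Independent Set} with ratio $\beta=8\alpha>1$.

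There is no serious obstacle in this plan: the forward direction reduces to checking that the vertical-edge construction yields an induced matching, which is the strongest of the three notions and is immediate from the independence of the chosen vertices; the backward direction is a direct combination of Corollary \ref{newlemma} with the bipartite-like split of $V(H)$. The only mild subtlety is that one must use the weakest of the three independence-number lower bounds from Corollary \ref{newlemma}, namely the uniquely restricted one, since the algorithm must treat all three variants of \textsc{Restricted Matching} uniformly.
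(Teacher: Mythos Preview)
Your proposal is correct but takes a genuinely different route from the paper. The paper's proof case-splits on the edge types of $X$: either at least $|X|/2$ edges of $X$ are vertical, in which case Lemma~\ref{rmatch} immediately yields an independent set of $G$ of size $\geq |X|/2$; or at least $|X|/2$ edges lie in $E(G^1)\cup E(G^2)$, so one copy $G^r$ carries an $\mathcal{R}$ matching of size $\geq |X|/4$, and Corollary~\ref{newlemma} is applied to that copy. You instead apply Corollary~\ref{newlemma} once to the ambient graph $H$, obtain an independent set $I^*\subseteq V(H)$, and then exploit that $H[V(G^r)]=G^r$ to project the larger of $I^*\cap V(G^1)$ and $I^*\cap V(G^2)$ back to $G$. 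Your argument is shorter, avoids the case distinction, and does not need Lemma~\ref{rmatch} at all; in fact it delivers $\beta=4\alpha$ rather than $8\alpha$ (your chain $(|X|+1)/4\geq |X|/8$ throws away a factor of $2$ unnecessarily). The paper's route, by contrast, keeps the vertical-edge structure more visibly in play and reuses Lemma~\ref{rmatch}, at the cost of an extra case and the need to verify that $X\cap E(G^r)$ is still an $\mathcal{R}$ matching in $G^r$.
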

 \begin{proof}
 Let $\beta=8\alpha$. First, observe that if $G$ does not have an independent set of size $\ell$, then $\mathcal{B}$ can return any output. So, we next suppose that $G$ has an independent set, say, $S=\{v_{1},\ldots,v_{\ell}\}$, of size $\ell$. Then, notice that $\{v_{1}^{1}v_{1}^{2},\ldots,v_{\ell}^{1}v_{\ell}^{2}\}$ is an $\mathcal{R}$ matching of size $\ell$ in $H$. Therefore, $(H, \ell)$ must be a Yes-instance, and in this case, $\mathcal{A}$ must return a solution $X\neq \emptyset$ such that $\frac{\ell}{\alpha}\leq|X|\leq \mathsf{opt}(H)$.
 
 Now, observe that either at least $\frac{|X|}{2}$ edges are vertical edges or at least $\frac{|X|}{2}$ edges belong to $G^{1}\cup G^{2}$. If at least $\frac{|X|}{2}$ edges are vertical edges, then by Lemma \ref{rmatch}, $G$ has an independent set of size at least $\frac{|X|}{2}$. For instance, if $\{v_{1}^{1}v_{1}^{2},\ldots,v_{p}^{1}v_{p}^{2}\}$, where $p\geq \frac{|X|}{2}$, is an $\mathcal{R}$ matching of $H$, then $\{v_{1},\ldots,v_{p}\}$ is an independent set of $G$, and in this case, $\mathcal{B}$ returns $\{v_{1},\ldots,v_{p}\}$. As $|X|\geq \frac{\ell}{\alpha}$, so $p\geq \frac{\ell}{2\alpha}$, and thus $\mathcal{B}$ is an $\beta\geq 2\alpha$ approximation algorithm for the \textsc{Independent Set} problem.
 
 On the other hand, if at least $\frac{|X|}{2}$ edges belong to $G^{1}\cup G^{2}$, then note that either $G^{1}$ or $G^{2}$ has an $\mathcal{R}$ matching of size at least $\frac{|X|}{4}$. Without loss of generality, let $G^{1}$ have an $\mathcal{R}$ matching of size at least $\frac{|X|}{4}$. Since $G^{1}$ is a copy of $G$, by Corollary \ref{newlemma}, $G$ has an independent set of size at least $ \frac{|X|+4}{8}$ $\geq \frac{|X|}{8}$. As $|X|\geq \frac{\ell}{\alpha}$, so $\frac{|X|}{8}\geq \frac{\ell}{8\alpha}$, and thus $\mathcal{B}$ is an $\beta=8\alpha$ approximation algorithm for the \textsc{Independent Set} problem. \qed
 \end{proof}
 
  \begin{lem} \label{inap0}
  Algorithm $\mathcal{B}$ runs in $\mathsf{FPT}$ time.
 \end{lem}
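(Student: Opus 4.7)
The plan is to verify, stage by stage, that each component of Algorithm $\mathcal{B}$ fits within an FPT budget in the parameter $\ell$, and observe that composition preserves this. Algorithm $\mathcal{B}$ decomposes into three stages: (i) build $H = \mathsf{reduce}(G)$; (ii) invoke Algorithm $\mathcal{A}$ on $(H, \ell)$; and (iii) post-process the $\mathcal{R}$-matching $X$ returned by $\mathcal{A}$ to output an independent set of $G$.

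For (i), the construction in Section \ref{union:sec} duplicates $G$ and adds $n$ vertical edges, giving $|V(H)| = 2n$ and $|E(H)| = 2m + n$; this is $O(n + m)$ time, polynomial in $|G|$, and crucially passes the same parameter $\ell$ to $\mathcal{A}$. For (ii), the working assumption of the section is precisely that $\mathcal{A}$ is an FPT-approximation algorithm for \textsc{Restricted Matching} parameterized by matching size, so its running time is $f(\ell) \cdot |H|^{O(1)} = f(\ell) \cdot |G|^{O(1)}$ for some computable $f$, since $|H|$ is polynomial in $|G|$.

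The only remaining point is that stage (iii) is polynomial-time, and here I would appeal to the constructive content already assembled in the section. The analysis in Lemma \ref{inap1} splits into two cases depending on whether at least $|X|/2$ edges of $X$ are vertical or at least $|X|/2$ lie inside $G^1 \cup G^2$; detecting which case holds takes linear time. In the vertical case, Lemma \ref{rmatch} directly yields the independent set by projecting each chosen vertical edge $v_i^1 v_i^2$ to the vertex $v_i$ of $G$; in the internal case, one applies the polynomial-time extraction procedure guaranteed by Corollary \ref{newlemma} (which packages Lemmas \ref{introlemma}, \ref{imlemma}, and \ref{urm:lemma}) to the matching restricted to one of $G^1, G^2$, which is a copy of $G$. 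All of this is polynomial in $|H|$, hence in $|G|$. Combining the three stages yields a total running time of $f(\ell) \cdot |G|^{O(1)}$, so $\mathcal{B}$ is FPT in $\ell$. There is no real obstacle here: the lemma is essentially bookkeeping—the only thing to double-check is that the independent-set extraction routines behind the preceding lemmas (in particular Algorithm \textsc{Find} used in Lemma \ref{urm:lemma}) are genuinely polynomial-time, which their proofs make explicit.
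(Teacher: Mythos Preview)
Your proof is correct and follows the same approach as the paper: both argue that the construction of $H$ and the independent-set extraction (via Corollary~\ref{newlemma} and Lemma~\ref{rmatch}) are polynomial-time, while the call to $\mathcal{A}$ runs in $f(\ell)\cdot |H|^{O(1)}$ time with the same parameter $\ell$. Your version is simply more explicit in separating the three stages and noting that $|H|$ is polynomial in $|G|$, which the paper leaves implicit.
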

 \begin{proof}
 By Corollary \ref{newlemma}, it is clear that given an $\mathcal{R}$ matching of size $\ell$ in an input graph $G$, one can compute an independent set of size at least $\frac{\ell+1}{2}$ in polynomial time. Since Algorithm $\mathcal{A}$ runs in $\mathsf{FPT}$ time, therefore, Algorithm $\mathcal{B}$ runs in $\mathsf{FPT}$ time with some polynomial overheads in the size of the input graph, and thus Algorithm $\mathcal{B}$ runs in $\mathsf{FPT}$ time. \qed
 \end{proof}
 
 By Corollary \ref{newlemma} and Lemmas \ref{inap1} and \ref{inap0}, we have the following theorem.
\UNION*
\section{Below-guarantee Parameters}
\subsection{FPT Algorithm for \textsf{AMBT}} \label{FPT:AMBTG}

In this section, we prove that $\mathsf{AMBT}$ is $\mathsf{FPT}$ by giving a randomized algorithm that runs in time $10^{k}\cdot n^{\mathcal{O}(1)}$, where
$n=|V(G)|$ and $k=n-2\ell$.

 First, we define some terminology that is crucial for proceeding further in this section. A graph $G$ \emph{has property $\mathsf{R}$} if $\delta(G)\geq 2$ and no two adjacent vertices of $G$ have degree exactly 2. A path $P$ is a \emph{maximal degree-2 path} in $G$ if: $(i)$ it has at least two vertices, $(ii)$ the degree of each vertex in $P$ (including the endpoints) is exactly 2, and $(iii)$ it is not contained in any other degree-2 path. If we replace a maximal degree-2 path $P$ with a single vertex, say, $v_{P}$, of degree exactly 2 (in $G$), then we call this operation \textsc{Path-Replacement($P$,$v_{P}$)} (note that the neighbors of $v_{P}$ are the neighbors of the endpoints of $P$ that do not belong to $P$). Furthermore, we call the newly introduced vertex (that replaces a maximal degree-2 path in $G$) \emph{virtual vertex}. Note that if both endpoints of $P$ have a common neighbor, then this gives rise to multiple edges in $G$. Next, if there exists a cycle, say, $C$, of length $p\geq 2$ such that the degree of each vertex in $C$ is exactly 2 (in $G$), then the \textsc{Path-Replacement} operation also identifies such cycles and replaces each of them with a virtual vertex having a self-loop, and the corresponding maximal degree-2 path, in this case, consists of all the vertices of $C$. Therefore, it is required for us to consider $\mathsf{AMBT}$ in the more general setting of multigraphs, where the graph obtained after applying the \textsc{Path-Replacement} operation may contain multiple edges and self-loops. We also note that multiple edges and self-loops are cycles.

We first present a lemma (Lemma \ref{deg2lemma}) that is crucial to prove the main result (Theorem \ref{ambt1}) of this section.

\begin{lem}\label{deg2lemma}
Let $G$ be a graph on $n$ vertices with the property $\mathsf{R}$. Then, for every feedback vertex set $X$ of $G$, more than $\frac{|E(G)|}{5}$ of the edges of $G$ have at least one endpoint in $X$.
\end{lem}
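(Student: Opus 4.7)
The plan is to argue component-by-component in the forest $F = G - X$: for every tree $T$ of $F$ I will show the strict inequality $|E(T)| < 4\sum_{v \in V(T)} d_X(v)$, where $d_X(v)$ denotes the number of edges of $G$ from $v$ to $X$. Summing over components then gives $|E_F| < 4 |E_X|$, and since every edge of $G$ is either entirely inside $V\setminus X$ (hence in $F$) or has an endpoint in $X$ (hence in $E_X$), we have $|E(G)| = |E_F| + |E_X|$, which rearranges to the required $|E_X| > |E(G)|/5$. To set up the per-tree argument, fix $T$ with $t := |V(T)| \geq 2$, let $V_2$ and $V_{\geq 3}$ be the sets of vertices of $G$ of degree exactly $2$ and at least $3$, and partition $V(T)$ by $T$-degree into leaves $L_T$, ``path'' vertices $P_T$ (of $T$-degree $2$) and branching vertices $B_T$ (of $T$-degree $\geq 3$). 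Property $\mathsf{R}$ guarantees that $V_2$ is independent in $G$ (hence in $T$) and that $V_2 \cap B_T = \emptyset$, since $B_T$-vertices already have $d_G \geq 3$. Write $a_L = |V_2 \cap L_T|$, $a_P = |V_2 \cap P_T|$, $b_L = |V_{\geq 3} \cap L_T|$, $b_P = |V_{\geq 3} \cap P_T|$, $b_B = |B_T|$, so $t = a_L + a_P + b_L + b_P + b_B$ and $|E(T)| = t - 1$.

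The proof rests on three ingredients. \emph{(i)} Since $d_X(v) = d_G(v) - d_T(v)$ and $d_G(v) \geq 2$ (resp.\ $\geq 3$) for $v \in V_2$ (resp.\ $V_{\geq 3}$), a direct case check across the five classes above yields $d_X(v) \geq 1, 0, 2, 1, 0$ respectively, so
\[
\sum_{v \in V(T)} d_X(v) \ \geq \ \xi_T \ := \ a_L + 2 b_L + b_P.
\]
\emph{(ii)} The standard tree handshake identity gives $|L_T| = 2 + \sum_{v \in B_T}(d_T(v) - 2) \geq 2 + b_B$, i.e.\ $b_B \leq a_L + b_L - 2$. \emph{(iii)} Let $T^*$ be obtained from $T$ by deleting every vertex of $V_2 \cap L_T$ and then suppressing every vertex of $V_2 \cap P_T$ (replacing it by the direct edge between its two $V_{\geq 3}$ neighbours, which exist because $V_2$ is independent). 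A short argument shows $T^*$ is again a tree on $V_{\geq 3} \cap V(T)$: leaf deletion and degree-$2$ suppression preserve connectivity, and a parallel edge in $T^*$ would come from two $V_2 \cap P_T$ vertices sharing the same pair of neighbours, producing a $4$-cycle in $T$---impossible in a forest. Counting the $b_L + b_P + b_B - 1$ edges of $T^*$ as ``$V_{\geq 3}$-$V_{\geq 3}$ edges of $T$ plus $a_P$ suppression edges'' yields $a_P \leq b_L + b_P + b_B - 1$.

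Plugging (ii) and (iii) into
\[
4 \xi_T - (t - 1) \ = \ 3 a_L + 7 b_L + 3 b_P - a_P - b_B + 1,
\]
first to eliminate $-a_P$ and then $-b_B$, I obtain $4\xi_T - (t-1) \geq a_L + 4 b_L + 2 b_P + 6 \geq 6$, which delivers the strict inequality $|E(T)| < 4 \sum_{v \in V(T)} d_X(v)$ whenever $t \geq 2$. The case $t = 1$ is immediate: an isolated vertex $v$ of $F$ has $d_X(v) = d_G(v) \geq 2$, so $|E(T)| = 0 < 4 d_X(v)$. Summing the strict inequalities over all components of $F$ finishes the proof. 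The step I expect to be the main obstacle is \emph{(iii)}: one has to be careful in verifying that the suppression produces no parallel edges (using that $T$ is acyclic) and that $T^*$ remains connected, so that one really has $|E(T^*)| = |V(T^*)| - 1$; once that is in hand, the algebraic combination displayed above is mechanical.
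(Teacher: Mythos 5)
Your argument is correct, and it uses the same two central ideas as the paper's proof: partition the vertices of the forest $F=G-X$ according to both their $F$-degree and their $G$-degree, use property $\mathsf{R}$ to make the degree-$(2,2)$ vertices independent, and suppress them to re-obtain a forest whose edge/vertex count then controls how many such vertices there can be. Where you diverge is in the bookkeeping: the paper runs the count globally over $F$ using only the coarse facts ``a forest has more leaves-or-isolates than branch vertices'' and ``a forest has at most as many edges as vertices,'' whereas you work tree-by-tree, also delete the degree-$(1,2)$ leaves before suppressing, and invoke the exact tree identities ($|E(T^*)|=|V(T^*)|-1$ and $|L_T|=2+\sum_{v\in B_T}(d_T(v)-2)$). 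This buys you slightly tighter per-tree inequalities — for instance you get $a_P\le b_L+b_P+b_B-1$ where the paper's global version is the weaker $|V_2^*|\le|V_{\le 1}|+|V_2|+|V_{\ge 3}|$ — but the extra slack is not exploited and the same constant $5$ emerges; the cost is having to verify, per tree, that $T^*$ is indeed a nonempty simple tree (you do this correctly: $V_2$-independence keeps the $V_{\ge 3}$ neighbours alive, and acyclicity of $T$ rules out parallel suppression edges). One degenerate case you should flag explicitly: if $F$ has no components at all (i.e.\ $X=V(G)$) the ``sum of strict inequalities over components'' is an empty sum, so you should note separately that then $|E_F|=0<4|E_X|$ because property $\mathsf{R}$ forces $E(G)\neq\emptyset$ whenever $V(G)\neq\emptyset$; the paper's proof has the same implicit assumption, and it is harmless because the lemma is only invoked on nonempty graphs, but it is worth a sentence.
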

\begin{proof}
Let $X$ be an arbitrary but fixed vertex cover of $G$, and let $F=G-X$. To prove our lemma, we need to show that $|E(G)\setminus E(F)|>\frac{1}{5}|E(G)|$. Let $Y$ denote the set of edges in $G$ with one endpoint in $X$ and the other endpoint in $V(F)$. Next, let us partition the set $V(F)$ in the following sets.
\medskip

\noindent $V_{\leq1}=\{v\in V(F): d_{F}(v)\leq 1\}$.\\
$V^{*}_{2}=\{v\in V(F): d_{F}(v)=2$ and $d_{G}(v)=2$\}.\\
$V_{2}=\{v\in V(F): d_{F}(v)=2$ and $d_{G}(v)\geq3\}.$\\
$V_{\geq3}=\{v\in V(F): d_{F}(v)\geq 3\}$.

In any forest, the number of vertices that have degree exactly 1 is strictly greater than the number of vertices that have degree at least 3. Since $F$ is a forest, we have our first inequality as follows.
\begin{equation}\label{eq:0}
|V_{\leq1}|> |V_{\geq3}|.\end{equation}

\begin{figure}[t]
 \centering
    \includegraphics[scale=1]{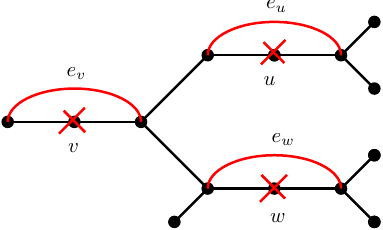}
    \caption{An illustration of the replacement of vertices in $V_{2}^{*}$ by their corresponding edges.}
    \label{fig5}
\end{figure}

Next, observe that if we replace every vertex $v\in V^{*}_{2}$ with an edge, say, $e_{v}$, then $F$ still remains a forest. See Figure \ref{fig5} for an illustration.
Since the number of edges in a forest is bounded from above by the number of vertices in it, we have the following inequality,
\begin{equation}\label{eq:1}
    |V_{2}^{*}|\leq |V_{\leq 1}|+|V_{2}|+|V_{\geq3}|.
\end{equation}
By using (\ref{eq:0}) in (\ref{eq:1}), we get
\begin{equation}\label{eq:2}
    |V_{2}^{*}|< 2|V_{\leq1}|+|V_{2}|.
\end{equation}

Since vertices in $V_{\leq 1}$ and $V_{2}$ have degree at least 2 and at least 3 in $G$, respectively, both $V_{\leq1}$ and $V_{2}$ contribute at least one edge to $Y$. As $|E(G)\setminus E(F)|\geq |Y|$, we have the following inequality, 
\begin{equation}\label{eq:4}
 |E(G)\setminus E(F)|\geq |V_{\leq1}|+|V_{2}|.    
\end{equation}

Next, note that the edges that do not have any endpoint in $X$ are exactly the edges in $F$, and the total number of such edges must be bounded by the sum of the size of the sets $V_{\leq 1},V_{2}, V^{*}_{2},$ and $V_{\geq3}$. Therefore, we have the following inequality,
\begin{equation}\label{eq:5}
|E(F)|\leq |V_{\leq1}|+|V_{2}^{*}|+|V_{2}|+|V_{\geq3}|.
\end{equation}
By using (\ref{eq:0}) and (\ref{eq:2}) in (\ref{eq:5}), we get
\begin{equation}\label{eq:6}
|E(F)|< 4|V_{\leq1}|+2|V_{2}|.
\end{equation}

Let $|V_{\leq1}|+|V_{2}|=\alpha$ and $4|V_{\leq1}|+2|V_{2}|=\beta$. Since $|V_{\leq1}|, |V_{2}|, |V^{*}_{2}|,|V_{\geq 3}|\geq 0$, it is clear that $\alpha\geq \frac{\beta}{4}$. The fraction of edges having at least one endpoint in $X$ is exactly $\frac{|E(G)\setminus E(F)|}{|E(G)\setminus E(F)|+|E(F)|}$. By (\ref{eq:4}) and (\ref{eq:6}), $|E(F)|<\beta\leq4\alpha\leq 4\cdot |E(G)\setminus E(F)|$. This implies that $|E(G)\setminus E(F)|+|E(F)|< 5\cdot |E(G)\setminus E(F)|$.
Therefore, we have $\frac{|E(G)\setminus E(F)|}{|E(G)\setminus E(F)|+|E(F)|}> \frac{1}{5}.$ Hence, the lemma is proved. \qed
\end{proof}

\medskip

Now, consider Algorithm \ref{algo22}.
\begin{algorithm}[t]

		\KwIn{A graph $G$ and a positive integer $k$;}
  
		\KwOut{A set $\widehat{X}$ of size at most $k$, a set $Z$, and a set $A$ or $\mathsf{No}$;}
		Initialize $Z\leftarrow \emptyset$, $A\leftarrow \emptyset$, $\widehat{X}\leftarrow \emptyset$; 
		
		\While{$(V(G)\neq \emptyset)$}{
		\While{$(\delta(G)\leq 1)$}{
			Pick a vertex $v\in V(G)$ such that $d(v)\leq1$;  
			
			$Z\leftarrow Z\cup \{v\}$;
			
				$V(G)\leftarrow V(G) \setminus \{v\}$; }
		\While{$($there exists a maximal degree-2 path $P$ in $G$$)$}{
		  \textsc{Path-Replacement($P$,$v_{P}$)};
		  
		  $A \leftarrow A\cup\{(P,v_{P})\};$
		  }
		  	
		\If{$( k> 0$ and $G$ has a cycle)}{
		\If{$(G$ has a self-loop at some $v)$}{$\widehat{X}\leftarrow \widehat{X}\cup \{v\}$;
			
				$V(G)\leftarrow V(G) \setminus \{v\}$; 
				
				$k\leftarrow k-1;$}
		\Else{
		Pick an edge $e\in E(G)$;
		
		Pick an endpoint $v$ of $e$;   
			
			$\widehat{X}\leftarrow \widehat{X}\cup \{v\}$;
			
				$V(G)\leftarrow V(G) \setminus \{v\}$; 
				
				$k\leftarrow k-1;$
				
			}	}
				\ElseIf{$(k\leq0$ and $G$ has a cycle$)$}
 {\Return{$\mathsf{No}$};}
		  
}
\Return{$\widehat{X},Z,A$};

		\caption{}
	\label{algo22}

 \end{algorithm}
	\medskip

Observe that the task of Algorithm \ref{algo22} is first to modify an input graph $G$ to a graph that has property $\mathsf{R}$. By abuse of notation, we call this modified graph $G$. Since $G$ is non-empty and $G$ has property $\mathsf{R}$, then $G$ definitely has a cycle, and by Lemma \ref{deg2lemma}, with probability at least $\frac{1}{10}$, we pick one vertex, say $v$, that belongs to a specific feedback vertex set of $G$. We store this vertex in a set $\widehat{X}$. After removing $v$ from $G$, we also decrease $k$ by $1$. We again repeat the process until either the graph becomes empty or $k$ becomes non-positive while there are still some cycles left in the graph; we return $\mathsf{No}$ in the latter case, and the sets $A$, $Z$, and $\widehat{X}$ in the former case. 

\begin{remark}
We call the set $\widehat{X}$ returned by Algorithm \ref{algo22} a \emph{virtual feedback vertex set}.
\end{remark}

Given an input graph $G$ and a positive integer $k$, let $\widehat{X}$ be a virtual feedback vertex set returned by Algorithm \ref{algo22}. Note that the set $\widehat{X}$ contains a combination of virtual vertices and the vertices from the set $V(G)$. Let $\widehat{V}\subseteq \widehat{X}$ be the set of virtual vertices. If $v_{P}\in \widehat{V}$, then there exists some maximal degree-2 path $P$ such that $(v_{P},P)\in A$. If all the vertices in $P$ are from $V(G)$, then we say that the set of vertices in $P$ is \emph{safe} for $v_{P}$. On the other hand, if the path $P$ contains some virtual vertices, then note that there exist maximal degree-2 paths corresponding to these virtual vertices as well. In this case, we recursively replace the virtual vertices present in $P$ with their corresponding maximal degree-2 paths until we obtain a set that contains vertices from $V(G)$ only, and we say that these vertices are safe for $v_{P}$. The process of obtaining a set of safe vertices corresponding to virtual vertices is shown in Figure \ref{1}. Note that, for the graph shown in Figure \ref{1} (i), if $\widehat{X}=\{v_{P_{3}},v_{P_{4}},v_{P_{6}}\}$ is a virtual feedback vertex set returned by Algorithm \ref{algo22} corresponding to $X=\{i,k,a\}$, then the safe set corresponding to $v_{P_{3}}$ is $\{i,j\}$, corresponding to $v_{P_{4}}$ is $\{k,l\}$, and corresponding to $v_{P_{6}}$ is $\{a,b,c,d,e,f,g,h\}$.

\begin{figure}[t]
 \centering
    \includegraphics[scale=0.7]{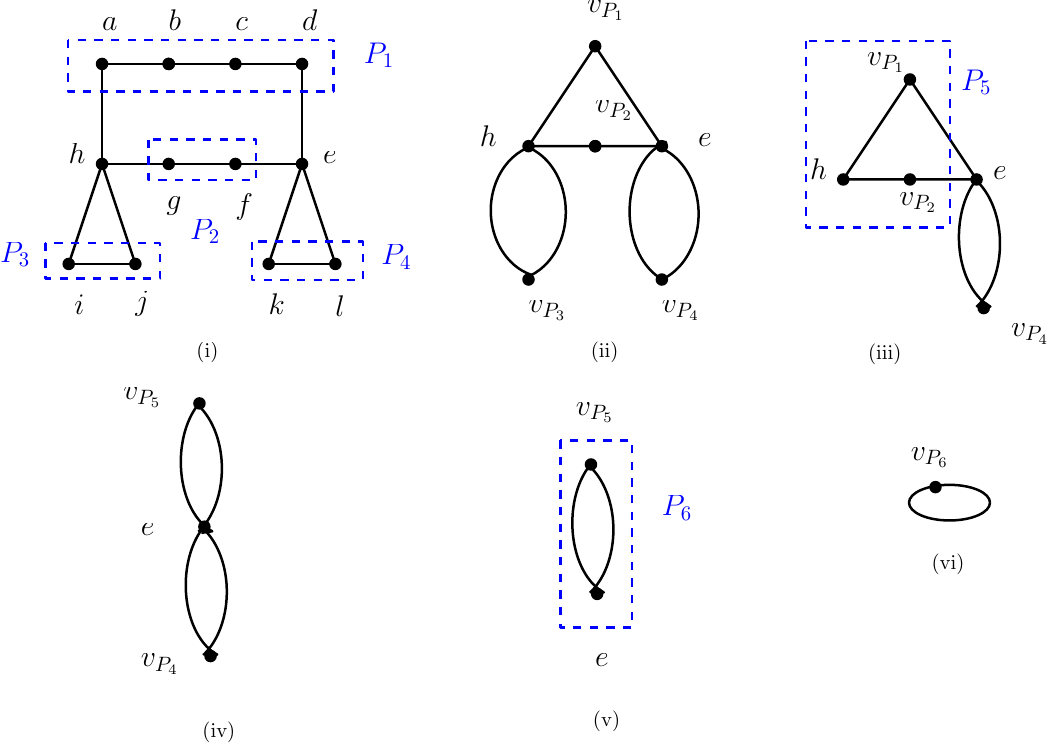}
    \caption{The graph in (i) has four maximal degree-2 paths. After applying the \textsc{Path-Replacement} operation to paths $P_1$-$P_4$, we obtain the graph shown in (ii), which has property $\mathsf{R}$. We assume that Algorithm \ref{algo22} picks $v_{P_{3}}$ in $\widehat{X}$. After removing  $v_{P_{3}}$, we obtain the graph shown in (iii). The graph in (iii) has a maximal degree-2 path $P_{5}$. After applying the  \textsc{Path-Replacement} operation to $P_5$, we obtain the graph shown in (iv). We assume that Algorithm \ref{algo22} picks $v_{P_{4}}$ in $\widehat{X}$. After removing  $v_{P_{4}}$, we obtain the graph shown in (v). Note that the \textsc{Path-Replacement} operation identifies the cycle shown in (v) as a maximal degree-2 path and replaces it with a virtual vertex ($v_{P_{6}}$) with a self-loop, as shown in (vi). Algorithm \ref{algo22} then picks $v_{P_{6}}$ in $\widehat{X}$.}
    \label{1}
\end{figure}

\begin{remark}
Throughout this section, if $\widehat{X}$ is a virtual feedback vertex set returned by Algorithm \ref{algo22}, then let $\widehat{V}\subseteq \widehat{X}$ denote the set of virtual vertices.
\end{remark}

Next, consider the following definition. 

\begin{definition} \label{defcompatible} Let $\widehat{X}$ be a virtual feedback vertex set returned by Algorithm \ref{algo22} if given as input a graph $G$ and a positive integer $k$. A set $X\subseteq V(G)$ is \emph{compatible} with $\widehat{X}$
if the following hold. 
\begin{enumerate}
    \item For every $v_{P}\in \widehat{V}$, $X$ contains at least one vertex from the set of safe vertices corresponding to $v_{P}$.
    \item For every $v\in \widehat{X}\setminus \widehat{V}$, $v$ belongs to $X$.
    \item $|X|\leq k$.
\end{enumerate} 
\end{definition}

Next, given a graph $G$, consider the following two reduction rules.
\medskip

\noindent RR1: If there is a vertex $v\in V(G)$ such that $d(v)\leq1$, then set $V(G)= V(G) \setminus \{v\}$.

\noindent RR2: If there is a maximal degree-2 path $P$ in $G$, then apply \textsc{Path-Replacement ($P$,$v_{P}$)}.

Note that after recursively applying RR1 and RR2 to a (multi)graph $G$, either we get an empty graph or a graph with property $\mathsf{R}$.

\begin{lem} \label{lm1}
Given a graph $G$ and a positive integer $k$, if $\widehat{X}$ is a virtual feedback vertex set returned by Algorithm \ref{algo22}, then any set $X\subseteq V(G)$ compatible with $\widehat{X}$ is a feedback vertex set of $G$.
\end{lem}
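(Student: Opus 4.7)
\medskip

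\textbf{Proof plan for Lemma \ref{lm1}.} The plan is to track every cycle of $G$ through the execution of Algorithm \ref{algo22} and show that each such cycle must be hit by some vertex of $X$. The first step is to observe that neither of the reduction rules RR1 nor RR2 can destroy a cycle: a vertex $v$ with $d(v)\le 1$ is never on a cycle, so RR1 preserves every cycle of the current (multi)graph; and the \textsc{Path-Replacement} operation applied to a maximal degree-2 path $P$ merely turns any cycle that traverses $P$ into a cycle that traverses $v_P$ instead (and turns a closed degree-2 cycle into a self-loop at $v_P$). Consequently, by a straightforward induction on the number of iterations of the main while-loop, at every moment during the run of Algorithm \ref{algo22} there is a natural correspondence between the cycles of the original graph $G$ and the cycles of the current (multi)graph: each cycle $C$ of $G$ either persists as a cycle through exactly the same original vertices, or appears as a cycle that uses certain virtual vertices $v_{P_1},\ldots,v_{P_r}$ in place of the pairwise vertex-disjoint maximal degree-2 subpaths of $C$ they have successively absorbed.

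Next, I would exploit the fact that Algorithm \ref{algo22} does not return $\mathsf{No}$ (otherwise $\widehat{X}$ is undefined). Under this assumption, the outer while-loop only terminates when $V(G)=\emptyset$, hence every cycle of the original graph is eventually destroyed during the execution. Since RR1 and RR2 cannot destroy a cycle, any cycle $C$ of $G$ must be destroyed precisely when Algorithm \ref{algo22} picks a vertex $v$ (either the endpoint of a chosen edge on $C$ or a virtual vertex $v_P$ carrying a self-loop belonging to $C$) and places it in $\widehat{X}$, deleting $v$ from the current graph. Call $v=v(C)$ this \emph{responsible} vertex of $\widehat{X}$ for the cycle $C$; by the correspondence above, $v(C)$ lies on the current image of $C$ at the step it is removed.

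Finally, I would convert this into a hitting statement for $X$ by splitting into two cases according to the nature of $v(C)$. If $v(C)\in\widehat{X}\setminus\widehat{V}$ is an original vertex, then condition 2 of Definition \ref{defcompatible} guarantees $v(C)\in X$, and by construction $v(C)$ is on the cycle $C$ in $G$ itself. If instead $v(C)=v_P\in\widehat{V}$ is a virtual vertex, then the current image of $C$ uses $v_P$, which by the description of the safe set (recursively expanding virtual vertices of $P$ back to original vertices of $V(G)$) means that every vertex safe for $v_P$ lies on $C$ in $G$; by condition 1 of Definition \ref{defcompatible}, $X$ contains at least one such safe vertex, which therefore hits $C$. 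In either case, $X$ intersects $C$, so $X$ is a feedback vertex set of $G$.

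The main obstacle I anticipate is making the informal ``correspondence between cycles of $G$ and cycles of the current graph'' fully rigorous across iterations where many reductions and deletions are interleaved; in particular, one must carefully handle the case where repeated applications of RR2 nest virtual vertices inside each other (so that a single virtual vertex eventually represents a long original path), which is precisely why the safe set is defined recursively. Once this bookkeeping is in place, the case analysis above closes the proof immediately.
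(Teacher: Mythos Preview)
Your approach is correct and genuinely different from the paper's. The paper argues by induction on $|V(G)|$: in the inductive step it distinguishes whether $G$ already has property $\mathsf{R}$ (if so, it peels off the first vertex $v$ that Algorithm~\ref{algo22} places in $\widehat{X}$ and applies the hypothesis to $G-\{v\}$) or not (if not, it applies a single RR1 or RR2 to obtain a smaller $G'$, invokes the hypothesis on $G'$, and then runs a case analysis on whether the compatible set $X$ lies entirely in $V(G')$, handling the situation where RR2 absorbed a vertex of $X$ into a virtual vertex separately). Your proof instead tracks each cycle of $G$ operationally through the run, showing that RR1 and RR2 never destroy its image, so the image can only disappear when a vertex $v(C)\in\widehat{X}$ is deleted; the two compatibility conditions then force $X$ to hit $C$ according to whether $v(C)$ is original or virtual. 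The cycle-tracking viewpoint makes the role of the recursive definition of the safe set very explicit (every original vertex absorbed into $v_P$ lies on any cycle whose image passes through $v_P$), whereas the paper's structural induction sidesteps having to formalize the ``current image of $C$'' and the nested-virtual-vertex bookkeeping you flag as the main obstacle, at the price of several sub-cases. Once the correspondence you describe is made precise, both proofs come out to about the same length.
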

\begin{proof}
We will prove the lemma by induction on $|V(G)|$. For the base case, let $V(G)=\emptyset$. Since every set is trivially a feedback vertex set of $G$ (as $G=\emptyset$), the base case holds.

Now, let us assume that for any graph $G$ with $|V(G)|\leq t$, Lemma \ref{lm1} holds. Next, we claim the following.

\begin{clm}
For any graph $G$ with $|V(G)|=t+1$, if $\widehat{X}$ is a virtual feedback vertex set returned by Algorithm \ref{algo22}, then any set $X\subseteq V(G)$ compatible with $\widehat{X}$ is a feedback vertex set of $G$.
\end{clm}
\begin{proof}
In order to prove our claim, first, let $G^{\mathsf{R}}$ denote the graph obtained from $G$ by applying RR1 and RR2 exhaustively on $G$. Next, consider the following cases based on whether $G=G^{\mathsf{R}}$ or not: 
\medskip

\noindent \textbf{Case 1:} \bm{$G^{\mathsf{R}}=G.$} Let $v$ be a (first) vertex that Algorithm \ref{algo22} adds to the virtual feedback vertex set of $G$. Now, consider the graph $G-\{v\}$. Let $\widehat{Y}$ be a virtual feedback vertex set of $G-\{v\}$ computed afterwards by Algorithm \ref{algo22}. This implies that $\widehat{Y}\cup \{v\}$ is a virtual feedback vertex set of $G$ returned by Algorithm \ref{algo22}. Since $|V(G-\{v\})|= t$, by the induction hypothesis, any set $Y\subseteq V(G-\{v\})$ that is compatible with $\widehat{Y}$ is a feedback vertex set of $G-\{v\}$. 

Next, we claim that any set $X\subseteq V(G)$ that is compatible with $\widehat{Y}\cup \{v\}$ is a feedback vertex set of $G$. Note that any set that is compatible with $\widehat{Y}\cup \{v\}$ must be of the form $Y\cup \{v\}$, where $Y$ is compatible with $\widehat{Y}$ (see Definition \ref{defcompatible}). Since any set $Y\subseteq V(G-\{v\})$ that is compatible with $\widehat{Y}$ is a feedback vertex set of $G-\{v\}$, we only need to show (in order to prove our claim) that every cycle in $G-Y$ must contain $v$. Targeting a contradiction, let there exist a cycle, say, $C$, in $G-Y$ that does not contain $v$. This implies that $C$ must be a cycle in $G-(Y\cup \{v\})$ as well. This leads to a contradiction to the fact that $Y$ is a feedback vertex set of $G-\{v\}$. Thus, $X$ is a feedback vertex set of $G$. \smallskip

\noindent \textbf{Case 2:} \bm{$G^{\mathsf{R}}\neq G.$} Let $G'$ be the graph obtained from $G$ by a single application of a reduction rule (i.e., if RR1 is applicable on $G$, then we apply RR1 exactly once on $G$; otherwise, we apply RR2 exactly once on $G$). Since the application of either RR1 or RR2 reduces the number of vertices of $G$ by at least one (by the definitions of RR1 and RR2), it is clear that $|V(G')|<|V(G)|=t+1$. So, by the induction hypothesis, if $\widehat{Y}$ is a virtual feedback vertex set of $G'$ returned by Algorithm \ref{algo22}, then any set $Y'\subseteq V(G')$ that is compatible with $\widehat{Y}$ is a feedback vertex set of $G'$. Note that $\widehat{Y}$ is a virtual feedback vertex set of $G$ as well (returned by Algorithm \ref{algo22}). Now, it is left to show that any set $Y\subseteq V(G)$ that is compatible with $\widehat{Y}$ is a feedback vertex set of $G$.

First, assume that we have applied RR1 on $G$. Since the vertex $v$ in $V(G)\setminus V(G')$ has degree at most $1$, it does not belong to any cycle of $G$. In other words, the sets of cycles in $G$ and $G'$ are the same. Consider the set $Y'=Y\setminus \{v\}$ (possibly $Y'=Y$, if $v\notin Y$) in $G'$. Note that $Y'$ is compatible with $\widehat{Y}$ (as $Y$ is compatible with $\widehat{Y}$ and $v \notin \widehat{Y}$). Since $Y'$ is a feedback vertex set of $G'$, $Y$ is a feedback vertex set of $G$. 

Next, assume that we have applied RR2 on $G$. Let $P$ be the maximal degree-2 path in $G$ that has been replaced by a virtual vertex, say, $v_{P}$, by RR2. Now, let us consider the following two cases based on whether $Y\subseteq V(G')$ or not. \smallskip

\noindent \textbf{Case 2.1:} \bm{$Y\subseteq V(G').$} Note that, in this case, $v_{P}\notin \widehat{Y}$. Else, some vertex from $P$ must belong to $Y$ (see Definition \ref{defcompatible}), and this contradicts the fact that $Y\subseteq V(G')$. Next, consider the set $Y'=Y$ in $G'$. If $Y'$ is a feedback vertex set of $G$, then we are done. So, assume otherwise. This implies that there must exist a cycle, say, $C$, in $G-Y'$. Since $Y'$ is a feedback vertex set of $G'$ (and not $G$), $C$ must contain $P$ (note that if $C$ contains at least one vertex from $P$, then it must contain every vertex from $P$, as $P$ is a maximal degree-2 path). Let $C'$ be the cycle (in $G'$) obtained from $C$ by replacing $P$ with $v_{P}$. Then, after applying the \textsc{Path-Replacement$(P,v_{P})$} operation, $C'$ must be a cycle in $G'-Y'$. This leads to a contradiction to the fact that $Y'$ is a feedback vertex set of $G'$. Thus, $Y'(=Y)$ is a feedback vertex set of $G$. \smallskip

\noindent \textbf{Case 2.2:} \bm{$Y\nsubseteq V(G').$} Note that, in this case, $Y$ must contain at least one vertex from $P$. Consider the set $Y'=(Y\cap V(G'))\cup \{v_{P}\}$ in $G'$. Since $V(G)\setminus V(G')=V(P)$ and $v_{P}\in Y'$, $Y'$ is compatible with $\widehat{Y}$. If $Y$ is a feedback vertex set of $G$, then we are done. So, assume otherwise. This implies that there must exist a cycle, say, $C$, in $G-Y$. Since $Y'$ is a feedback vertex set of $G'$ and $P$ is a maximal degree-2 path, $C$ must contain $P$. This is a contradiction as $Y$ contains a vertex from $P$. Thus, $Y$ is a feedback vertex set of $G$. \qed
\end{proof}

Since the result is also true for the graph with $|V(G)|=t+1$, by the mathematical induction, the lemma holds. \qed
\end{proof}

\begin{lem} \label{lm2}
Let $G$ be a graph and $k$ be a positive integer. Then, for any feedback vertex set $X$ of $G$ of size at most $k$, with probability at least $10^{-k}$, Algorithm \ref{algo22} returns a virtual feedback vertex set $\widehat{X}$ such that $X$ is compatible with $\widehat{X}$.
 \end{lem}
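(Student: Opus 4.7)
The plan is to proceed by induction on $k$. For the base case $k=0$, any feedback vertex set of size $\leq 0$ is empty, so $G$ must be acyclic; iterated application of RR1 strips leaves until the graph is empty, so Algorithm \ref{algo22} deterministically returns $\widehat{X}=\emptyset$, and compatibility with $X=\emptyset$ holds vacuously.

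For the inductive step, I would focus on one pass through the outer while loop. Let $G^R$ denote the graph obtained from $G$ after exhaustively applying RR1 and RR2. The key combinatorial object I would introduce is the \emph{reflection} $X^R \subseteq V(G^R)$: include in $X^R$ every vertex of $X$ that survives both reduction rules, and for every virtual vertex $v_P \in V(G^R)$ whose safe set meets $X$, include $v_P$. Two facts require checking: $|X^R| \leq |X| \leq k$ (each virtual vertex absorbs one or more vertices of $X$, and vertices of $X$ removed by RR1 lie on no cycle of $G$ so can be dropped), and $X^R$ is an FVS of $G^R$ (any hypothetical cycle in $G^R-X^R$, expanded by replacing each virtual vertex on it with its associated degree-$2$ path, would yield a cycle in $G$ disjoint from $X$, contradicting the FVS property of $X$).

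Now I would split on what the algorithm does next. If $V(G^R)=\emptyset$, the outer loop terminates with $\widehat{X}=\emptyset$ and compatibility is vacuous. Otherwise $G^R$ has property $\mathsf{R}$, and hence contains a cycle. If $G^R$ has a self-loop at some (necessarily virtual) vertex $v^{*}$, then $\{v^{*}\}$ is itself a cycle, so $v^{*}\in X^R$, and the algorithm picks $v^{*}$ deterministically; since $X^R\setminus\{v^{*}\}$ is an FVS of $G^R-\{v^{*}\}$ of size $\leq k-1$, the induction hypothesis applied to the remainder of the run contributes success probability at least $10^{-(k-1)}\geq 10^{-k}$. If $G^R$ has no self-loop, the algorithm picks an edge $e$ uniformly at random and then an endpoint $v^{*}$ of $e$ uniformly at random; Lemma \ref{deg2lemma} ensures that more than $|E(G^R)|/5$ of the edges have an endpoint in $X^R$, and a short count (edges with two endpoints in $X^R$ contribute $1$, edges with exactly one contribute $\tfrac{1}{2}$) gives $\Pr[v^{*}\in X^R] > \tfrac{1}{10}$. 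Conditioning on this event and applying the induction hypothesis to $(G^R-\{v^{*}\}, k-1)$ yields an overall success probability of at least $\tfrac{1}{10}\cdot 10^{-(k-1)} = 10^{-k}$.

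The main obstacle I anticipate is the bookkeeping needed to transfer compatibility from the recursive subproblem back to the original instance: a non-virtual $v^{*}$ added to $\widehat{X}$ lies in $X$ because it was already in $X^R$; a virtual $v^{*}$ has a safe vertex in $X$ by construction of $X^R$; and for virtual vertices created later in the run, their safe sets ultimately trace back to the same subsets of $V(G)$, so condition $1$ of Definition \ref{defcompatible} is preserved throughout. A subsidiary subtlety is that the induction really takes place in the multigraph regime, since $G^R-\{v^{*}\}$ may carry virtual vertices, self-loops, and parallel edges, which is precisely why Section \ref{FPT:AMBTG} was set up over multigraphs from the outset.
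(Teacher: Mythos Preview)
Your proposal is correct and takes a genuinely different route from the paper's proof. The paper inducts on $|V(G)|$, peeling off a single application of RR1 or RR2 (or a single random vertex pick) per inductive step, and handles the interaction with $X$ through a case analysis: Case~1 when $G=G^{\mathsf{R}}$, Case~2 when a reduction applies, with subcases according to whether $X\subseteq V(G')$. You instead induct on $k$, collapsing one full pass of the outer loop into a single step and mediating between $X$ and $G^{\mathsf{R}}$ via the reflection $X^{R}$. Your approach is more compact and arguably more natural---the $10^{-k}$ bound matches induction on $k$ directly, and you avoid the paper's reduction to a minimal $X$ (which the paper needs in its Subcase~2.2 to argue that $|V(P)\cap X|=1$). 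The price you pay is that your compatibility-transfer bookkeeping must handle two levels of virtuality at once: a vertex that is non-virtual relative to $G^{\mathsf{R}}-\{v^{*}\}$ may already be virtual relative to $G$, and virtual vertices created later have safe sets (relative to $G$) obtained by recursively expanding through the first-pass virtual vertices. The paper's finer induction moves only one reduction at a time, so this nesting never arises there. Both arguments are sound; yours is shorter, the paper's is more explicit about each atomic transition. One cosmetic point: your base case $k=0$ lies outside the lemma's stated range (``$k$ a positive integer''), so you are really proving the slightly stronger statement for all $k\geq 0$, which is harmless.
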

\begin{proof}
We will prove the lemma by induction on $|V(G)|$. For the base case, let $V(G)=\emptyset$. In this case, any feedback vertex set of $G$ is the empty set. Furthermore, note that if $V(G)=\emptyset$, then Algorithm \ref{algo22} returns $\widehat{X}=\emptyset$ with probability $1$. Since any set is trivially compatible with the empty set, the base case holds.

Now, let us assume that for any graph $G$ with $|V(G)|\leq t$, Lemma \ref{lm2} holds. Next, we claim the following.

\begin{clm}
For any graph $G$ with $|V(G)|=t+1$, if $X$ is a (fixed but arbitrary) feedback vertex set of $G$ of size at most $k$, then, with probability at least $10^{-k}$, Algorithm \ref{algo22} returns a virtual feedback vertex set $\widehat{X}$ such that $X$ is compatible with $\widehat{X}$. 
\end{clm}
\begin{proof} Note that we can assume, without loss of generality, that $X$ is a minimal feedback vertex set of $G$, because if $X$ is compatible with $\widehat{X}$, then any set $\overline{X}$ such that $X\subseteq \overline{X}$ is also compatible with $\widehat{X}$ (see Definition \ref{defcompatible}). In order to prove our claim, first, let $G^{\mathsf{R}}$ denote the graph obtained from $G$ by applying RR1 and RR2 exhaustively on $G$. Next, consider the following cases based on whether $G=G^{\mathsf{R}}$ or not: 
\medskip

\noindent \textbf{Case 1:} \bm{$G^{\mathsf{R}}=G.$} In this case, note that if $X$ is a feedback vertex set of $G$, then $X$ is a feedback vertex set of $G^{\mathsf{R}}$ as well (as $G^{\mathsf{R}}=G$). Therefore, since $G^{\mathsf{R}}$ has the property $\mathsf{R}$, Lemma \ref{deg2lemma} implies that with a probability of at least $\frac{1}{10}$, Algorithm \ref{algo22} picks a vertex, say, $v$, in $X$. Next, note that $X\setminus \{v\}$ is a feedback vertex set of $G-\{v\}$. Since $|V(G-\{v\})|= t$ and $|X\setminus \{v\}|=k-1$, by the induction hypothesis, with probability at least $10^{-(k-1)}$, Algorithm \ref{algo22} returns a virtual feedback vertex set, say, $\widehat{Y}$, of $G-\{v\}$ such that $X\setminus \{v\}$ is compatible with $\widehat{Y}$. Then, it follows (by Definition \ref{defcompatible}) that $X$ is compatible with $\widehat{Y}\cup \{v\}$. Furthermore, note that the probability of returning $\widehat{Y}\cup \{v\}$ by Algorithm \ref{algo22} is at least $10^{-(k-1)}\cdot 10^{-1}=10^{-k}$.\\

\noindent \textbf{Case 2:} \bm{$G^{\mathsf{R}}\neq G.$} Let $G'$ be the graph obtained from $G$ by a single application of a reduction rule (i.e., if RR1 is applicable on $G$, then we apply RR1 exactly once on $G$; otherwise, we apply RR2 exactly once on $G$). Next, we consider the following two sub-cases.
\medskip

\noindent \textbf{Subcase 2.1:} \bm{$X\subseteq V(G').$}  First, we claim that $X$ is a feedback vertex set of $G'$. First, assume that we have applied RR1 on $G$. Since we remove a vertex of degree at most $1$ in RR1, the claim follows immediately. Next, assume that we have applied RR2 on $G$. Targeting a contradiction, assume that $X$ is not a feedback vertex set of $G'$. This means that after applying the \textsc{Path-Replacement$(P,v_{P})$} operation to a maximal degree-2 path $P$ in $G$, we obtain a cycle, say, $C$, in $G'$ that does not contain any vertex from the set $X$ (since $X\subseteq V(G')$, $V(P)\cap X=\emptyset$). Next, note that if we replace the virtual vertex $v_{P}$ with $P$ in $C$, then $C$ is a cycle in $G$ that does not contain any vertex from $X$. This leads to a contradiction to the fact that $X$ is a feedback vertex set of $G$. Thus, $X$ is a feedback vertex set of $G'$ as well.
 
Since the application of either RR1 or RR2 reduces the number of vertices of $G$ by at least one (by the definitions of RR1 and RR2), it is clear that $|V(G')|<|V(G)|=t+1$. So, since $X$ is a feedback vertex set of $G'$, by the induction hypothesis, the lemma holds for $G'$. This implies that with probability at least $10^{-k}$, Algorithm \ref{algo22} outputs a virtual feedback vertex set, say, $\widehat{X'}$, of $G'$, such that $X$ is compatible with $\widehat{X'}$. Thus, $\widehat{X'}$ is a desired virtual feedback vertex set of $G$.
\medskip

\noindent \textbf{Subcase 2.2:} \bm{$X\nsubseteq V(G').$} Since $X$ is a minimal feedback vertex set of $G$ and $X\nsubseteq V(G')$, it is clear that $G'$ is obtained after applying RR2 (not RR1) on $G$. Let $P$ be the maximal degree-2 path in $G$ that has been replaced by a virtual vertex, say, $v_{P}$, by RR2. As $X\nsubseteq V(G')$, at least one of the vertices from $X$ must belong to $P$. Next, we claim that exactly one vertex from $X$, say, $x$, will belong to $P$. On the contrary, if there exist two distinct $x,y\in V(P)\cap X$, then $x$ and $y$ must belong to the same set of cycles in $G$ as they belong to the same maximal degree-2 path. This contradicts the fact that $X$ is minimal. Thus, $|V(P)\cap X|=1$. 

Next, by the definition of a safe set, note that $x$ belongs to the safe set corresponding to $v_{P}$. Let $X'=(X\setminus \{x\})\cup \{v_{P}\}$. Since the application of RR2 reduces the number of vertices of $G$ by at least one, it is clear that $|V(G')|\leq t$. So, by the induction hypothesis, Lemma \ref{lm2} holds for $G'$. This implies that with probability at least $10^{-k}$, Algorithm \ref{algo22} outputs a virtual feedback vertex set, say, $\widehat{X'}$, of $G'$ such that $X'$ is compatible with $\widehat{X'}$. Next, we claim that $X$ is compatible with $\widehat{X'}$ in $G$. Since $\widehat{X'}\setminus \{v_{p}\}$ is compatible with $X\setminus \{x\}$, and for $v_{p}\in \widehat{X'}$, we have $x\in X$ that belongs to the safe set of $v_{p}$, $\widehat{X'}$ is a desired virtual feedback vertex set of $G$. \qed
\end{proof}

Since the result is also true for the graph with $|V(G)|=t+1$, by the mathematical induction, the lemma holds. \qed
\end{proof}

\begin{definition} [Complement of a Matching]
If $M$ is a matching of a graph $G$, then $V(G)\setminus V_{M}$ is the \emph{complement} of $M$.
\end{definition}

By Proposition \ref{maxwt}, we have the following remark.
\begin{remark}
Let Algorithm $\mathcal{A}$ be any algorithm that solves \textsc{Max Weight Matching} in polynomial time.
\end{remark}

Next, consider Algorithm \ref{algo23}.
\medskip

	\begin{algorithm} [t]

		\KwIn{An instance $(G,\ell)$ of \textsc{Acyclic Matching} with $n=|V(G)|$;}
		\KwOut{An acyclic matching $M$ in $G$ of size at least $\ell$ or $\mathsf{No}$;}
	
	Call Algorithm \ref{algo22} with input $(G,n-2\ell)$;
	
	\If{$($Algorithm \ref{algo22} returns $\textsf{No})$}
	{\Return{$\mathsf{No}$;}}
	\ElseIf{$($Algorithm \ref{algo22} returns a virtual feedback vertex set $\widehat{X})$} 
	{
     Initialize $G_{W}=G$;
    
     For every $v_{P}\in \widehat{V}$, add a vertex $w_{P}$ to $G_{W}$ and make it adjacent to every vertex in the safe set corresponding to $v_{P}$. Call all edges introduced here \emph{new edges}.
    
     Remove all $v\in \widehat{X}\setminus \widehat{V}$ from $G_{W}$;
    
     Assign weight $c=|E(G)|+1$ to all new edges of $G_{W}$ and weight $1$ to all the remaining edges of $G_{W}$; 
    
     Call Algorithm $\mathcal{A}$ with input $(G_{W},\ell+|\widehat{V}|\cdot c)$;

	\If{$($Algorithm $\mathcal{A}$ returns a matching $M_{W}$ of weight at least $\ell+|\widehat{V}|\cdot c)$}{ $M=\{e\in M_{W}:$ weight of $e$ is $1\};$
		
	\Return{$M$;} }
\Else{ \Return{$\mathsf{No}$};}}
		\caption{}
	\label{algo23}	

	\end{algorithm}

\begin{remark}
Throughout this section, we call all the vertices and edges
introduced in Algorithm \ref{algo23} \emph{new vertices} and \emph{new edges}, respectively.
\end{remark}

For an illustrative example of Algorithm \ref{algo23}, consider the graphs shown in Figure \ref{1}. The construction of graph $G_{W}$ (defined in Algorithm \ref{algo23}) corresponding to graph $G$ (shown in Figure \ref{1} (i)) and the virtual feedback vertex set $\widehat{X}=\{v_{P_{3}},v_{P_{4}},v_{P_{6}}\}$ is shown in Figure \ref{algofigfig}. 

\begin{lem}\label{lmadd}
Let $G,$ $\ell$, $\widehat{X}$, $G_{W}$, $M_{W}$, and $M$ be as defined in Algorithm \ref{algo23}. If $M_{W}$ is of weight at least $\ell+|\widehat{V}|\cdot c$, then $M$ is an acyclic matching in $G$ of size at least $\ell$.
\end{lem}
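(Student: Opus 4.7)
The plan is to split the matching $M_W$ into its weight-$c$ edges (the \emph{new} edges) and its weight-$1$ edges (which form $M$), and to show three things: (i) every $w_P$ is saturated in $M_W$ by a new edge; (ii) $|M| \geq \ell$ and $M$ is a matching in $G$; (iii) the $M_W$-mates of the $w_P$'s together with $\widehat{X} \setminus \widehat{V}$ form a feedback vertex set of $G$ that is disjoint from $V_M$. From (iii) we will conclude that $G[V_M]$ is an induced subgraph of a forest, hence itself a forest, so $M$ is acyclic.

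For (i) and (ii), I would run a weight-accounting argument. Since $w_P$ is adjacent only to new edges, at most $|\widehat{V}|$ new edges appear in $M_W$. Writing $a$ for the number of new edges in $M_W$ and $b = |M|$ for the number of weight-$1$ edges, the hypothesis gives $a\cdot c + b \geq \ell + |\widehat{V}|\cdot c$, i.e., $b \geq \ell + (|\widehat{V}| - a)\cdot c$. If $a < |\widehat{V}|$, then $b \geq \ell + c > |E(G)|$, contradicting $b \leq |E(G)|$ (every weight-$1$ edge of $G_W$ is an edge of $G$). Hence $a = |\widehat{V}|$ and $b \geq \ell$. In particular, $M \subseteq E(G)$ inherits the matching property from $M_W$, establishing (ii) and showing that each $w_P$ is saturated by a new edge $w_P x_P \in M_W$ with $x_P$ lying in the safe set of $v_P$, proving (i).

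For (iii), set $X := (\widehat{X} \setminus \widehat{V}) \cup \{x_P : v_P \in \widehat{V}\} \subseteq V(G)$. Then $X$ satisfies the three conditions of Definition \ref{defcompatible}: each $v_P \in \widehat{V}$ contributes the vertex $x_P$ from its safe set; every $v \in \widehat{X} \setminus \widehat{V}$ is in $X$ by construction; and $|X| \leq |\widehat{V}| + |\widehat{X} \setminus \widehat{V}| = |\widehat{X}| \leq k$. So $X$ is compatible with $\widehat{X}$, and Lemma \ref{lm1} yields that $X$ is a feedback vertex set of $G$. To check $V_M \cap X = \emptyset$: the vertices in $\widehat{X} \setminus \widehat{V}$ are not in $V(G_W)$, so they cannot appear in $V_{M_W} \supseteq V_M$; each $x_P$ is saturated in $M_W$ by the new edge $w_P x_P$, so by the matching property $x_P$ is saturated by no weight-$1$ edge of $M_W$, hence $x_P \notin V_M$. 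Therefore $V_M \subseteq V(G) \setminus X$, so $G[V_M]$ is an induced subgraph of the forest $G - X$ and is itself a forest; combined with $|M| \geq \ell$, this shows $M$ is an acyclic matching of $G$ of size at least $\ell$.

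The main obstacle is the coupling between the weight-accounting step and the feedback-vertex-set step: we need both that the $M_W$-mates of the $w_P$'s land in safe sets (which is built into the definition of $G_W$) and that \emph{every} $w_P$ is actually matched by a new edge, so that these mates are guaranteed to be outside $V_M$. The choice $c = |E(G)|+1$ is precisely what forces this, since any missed $w_P$ would cost a full $c$ in weight that the remaining weight-$1$ edges of $G_W$ cannot recover.
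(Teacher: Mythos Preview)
Your proposal is correct and follows essentially the same approach as the paper: both define the feedback vertex set as $(\widehat{X}\setminus\widehat{V})$ together with the $M_W$-mates of the new vertices, verify compatibility with $\widehat{X}$, invoke Lemma~\ref{lm1}, and use the weight threshold to force all new vertices to be saturated and $|M|\geq\ell$. Your weight-accounting argument (deriving $a=|\widehat{V}|$ from $c=|E(G)|+1$) is in fact more explicit than the paper's, which simply asserts that $M_W$ must saturate all new vertices without spelling out the counting.
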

\begin{proof}
Let $Y$ be the set of vertices in $G$ consisting of all $v\in \widehat{X}\setminus \widehat{V}$ and the set of all $M_{W}$-mates of the new vertices. Note that $|Y|\leq n-2\ell$ (as $|\widehat{X}|\leq n-2\ell$). Since the weight of $M_{W}$ is at least $\ell+|\widehat{V}|\cdot c$, $M_{W}$ must saturate all the new vertices of $G_{W}$. By the definition of $G_{W}$, all the new edges contribute exactly $|\widehat{V}|\cdot c$ weight to $M_{W}$. This implies that the remaining weight of $M_{W}$, which is at least $\ell$, must come from the edges of the graph $G-Y$. In turn, this further implies that at least $\ell$ edges (as the weight of each edge in $G-Y$ is 1) must form a matching, say, $M$, in the graph $G-Y$. Next, observe that $Y$ is compatible with $\widehat{X}$. So, due to Lemma \ref{lm1}, $Y$ is a feedback vertex set of $G$. Hence, $M$ is an acyclic matching in $G$ of size at least $\ell$.\qed
\end{proof}

\begin{lem} \label{lm3}
Let $(G,\ell)$ be a Yes-instance of \textsc{Acyclic Matching} with $n=|V(G)|$. Then, with probability at least $10^{2\ell-n}$, Algorithm \ref{algo23} returns an acyclic matching $M$ in $G$ of size at least $\ell$. 
\end{lem}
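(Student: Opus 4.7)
The plan is to combine the probabilistic guarantee of Lemma \ref{lm2} with an explicit construction of a high-weight matching in $G_W$, so that whenever Algorithm \ref{algo22} succeeds in producing a ``good'' virtual feedback vertex set, Algorithm \ref{algo23} is forced to output the desired acyclic matching. The starting point is the observation that a size-$\ell$ acyclic matching $M^\star$ in $G$ gives rise to a canonical feedback vertex set $X^\star := V(G) \setminus V_{M^\star}$ of size exactly $n-2\ell = k$: indeed, $G[V_{M^\star}]$ is a forest by the definition of acyclic matching, so deleting $X^\star$ leaves a forest.

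Applying Lemma \ref{lm2} to $X^\star$, with probability at least $10^{-k} = 10^{2\ell - n}$, Algorithm \ref{algo22} returns a virtual feedback vertex set $\widehat{X}$ such that $X^\star$ is compatible with $\widehat{X}$. I will condition on this event and show that Algorithm \ref{algo23} is then guaranteed to succeed. By Definition \ref{defcompatible}, for every $v_P \in \widehat{V}$ I can select some $x_P \in X^\star$ belonging to the safe set of $v_P$, and every $v \in \widehat{X} \setminus \widehat{V}$ already lies in $X^\star$. A key (but routine) observation I would record is that the safe sets corresponding to distinct virtual vertices are pairwise disjoint: at each stage of the reduction, the maximal degree-$2$ paths that get contracted are vertex-disjoint, and the recursive expansion preserves disjointness. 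Consequently the chosen vertices $\{x_P : v_P \in \widehat{V}\}$ are distinct.

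With these $x_P$'s in hand, I would exhibit an explicit matching in $G_W$ of weight at least $\ell + |\widehat{V}|\cdot c$, namely
\[
M_W^\star \;:=\; M^\star \;\cup\; \{\, w_P x_P : v_P \in \widehat{V}\,\}.
\]
This is a valid matching in $G_W$: the $w_P$'s are new vertices, the $x_P$'s are distinct and all lie in $X^\star$, hence are disjoint from $V_{M^\star}$, and the deletion of $\widehat{X}\setminus\widehat{V} \subseteq X^\star$ in forming $G_W$ does not disturb $M^\star$ (whose endpoints lie in $V_{M^\star}$). Each new edge contributes weight $c$ and each edge of $M^\star$ contributes weight $1$, so the total weight of $M_W^\star$ is $|\widehat{V}|\cdot c + \ell$. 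Hence Algorithm $\mathcal{A}$ returns a matching $M_W$ of weight at least $\ell + |\widehat{V}| \cdot c$, and Lemma \ref{lmadd} immediately yields that Algorithm \ref{algo23} outputs an acyclic matching of size at least $\ell$.

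The only genuinely non-trivial step is the disjointness of safe sets, which justifies picking the $x_P$'s to be distinct; everything else is a direct assembly of the previous lemmas. The overall success probability is therefore at least $10^{2\ell - n}$, as claimed.
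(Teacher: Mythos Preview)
Your proof is correct and follows essentially the same approach as the paper: both start from the complement $X^\star$ of a size-$\ell$ acyclic matching, invoke Lemma~\ref{lm2} to obtain a compatible $\widehat{X}$ with probability at least $10^{-k}$, exhibit an explicit matching in $G_W$ of weight $\ell+|\widehat{V}|\cdot c$ by combining $M^\star$ with new edges $w_Px_P$ for suitably chosen $x_P\in X^\star$, and then appeal to Lemma~\ref{lmadd}. The paper states the disjointness of the safe sets (your ``only genuinely non-trivial step'') without further justification, so your explicit remark on why the $x_P$'s can be chosen distinct is, if anything, slightly more careful.
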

\begin{proof}
Since $(G,\ell)$ is a Yes-instance, there exists an acyclic matching, say, $M'$, of size at least $\ell$ in $G$. This implies that there exists a feedback vertex set, say, $X$, of size at most $n-2\ell$ in $G$ such that $X$ is the complement of $M'$. Let $k=n-2\ell$. By Lemma \ref{lm2}, if $G$ and $k$ are given as input, then with probability at least $10^{-k}$, Algorithm \ref{algo22} returns a virtual feedback vertex set $\widehat{X}$ such that $X$ is compatible with $\widehat{X}$. 

Next, note that if Algorithm \ref{algo22} returns a virtual feedback vertex set $\widehat{X}$, then Algorithm \ref{algo23} constructs an instance $(G_{W},\ell+|\widehat{V}|\cdot c)$ of \textsc{Max Weight Matching}. Now, we claim that $(G_{W},\ell+|\widehat{V}|\cdot c)$ is necessarily a Yes-instance. We also claim that if $M_{W}$ is a solution of $(G_{W},\ell+|\widehat{V}|\cdot c)$ returned by Algorithm $\mathcal{A}$, then $M$ (defined in Algorithm \ref{algo23}) is a solution of $(G,\ell)$. Since $\widehat{X}$ is compatible with $X$, each new vertex in $G_{W}$ is adjacent to at least one vertex in $X$, and any two distinct new vertices are adjacent to disjoint sets of vertices in $X$. So, we can define a matching, say, $M_{N}$, in $G_{W}$ such that each new vertex in $G_{W}$ is matched to some vertex in $X$. Note that the weight of $M_{N}$ must be exactly $|\widehat{V}|\cdot c$. Next, let $S$ be the set of vertices in $G$ consisting of all $v\in \widehat{X}\setminus \widehat{V}$ and the set of all $M_{N}$-mates of the new vertices. Note that $S\subseteq X$. Further, by the definition of $X$, the size of $G-X$ is at least $2\ell$, and $G-X$ has a perfect matching. Since $S\subseteq X$, the size of $G-S$ is at least $2\ell$, and it has a matching, say, $M_{S}$, of size at least $\ell$ (may not be perfect). As the weight of all edges in $G-S$ is $1$, the weight of $M_{S}$ must be at least $\ell$. By the definitions of $M_{N}$ and $M_{S}$, note that $\widehat{M}=M_{N}\cup M_{S}$ is a matching in $G_{W}$ of weight at least $\ell+|\widehat{V}|\cdot c$. So, $(G_{W},\ell+|\widehat{V}|\cdot c)$ is a Yes-instance. In turn, this means that the algorithm returns $M$. Then, by Lemma \ref{lmadd}, this set $M$ is an acyclic matching in $G$ of size at least $\ell$. \qed
\end{proof}

\begin{lem} \label{lm4}
Let $(G,\ell)$ be a No-instance of \textsc{Acyclic Matching} with $n=|V(G)|$. Then, with probability 1, Algorithm \ref{algo23} returns $\mathsf{No}$.
\end{lem}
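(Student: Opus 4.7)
The plan is to observe that Lemma \ref{lmadd} already does nearly all the work, and Lemma \ref{lm4} follows from its contrapositive combined with a careful reading of Algorithm \ref{algo23}'s output structure.

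First, I would trace the only paths by which Algorithm \ref{algo23} can produce an output different from $\mathsf{No}$. Inspecting the pseudocode, the algorithm returns a set $M$ (instead of $\mathsf{No}$) if and only if, for the particular realization of the random choices made inside Algorithm \ref{algo22}, two events occur: (i) Algorithm \ref{algo22} returns a virtual feedback vertex set $\widehat{X}$ (rather than $\mathsf{No}$), and (ii) Algorithm $\mathcal{A}$, invoked on the weighted instance $(G_W, \ell + |\widehat{V}|\cdot c)$, returns a matching $M_W$ of weight at least $\ell + |\widehat{V}|\cdot c$. Note that Algorithm $\mathcal{A}$ is a deterministic exact algorithm for \textsc{Max Weight Matching} (by Proposition \ref{maxwt}), so the only source of randomness is Algorithm \ref{algo22}.

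Next, I would argue by contrapositive. Suppose, towards a contradiction, that there exists a realization of Algorithm \ref{algo22}'s random choices for which Algorithm \ref{algo23} returns a set $M$. Then (i) and (ii) above hold for this realization, so the hypotheses of Lemma \ref{lmadd} are satisfied: $\widehat{X}$ is a virtual feedback vertex set returned by Algorithm \ref{algo22}, $G_W$ is constructed as prescribed, and $M_W$ is a matching in $G_W$ of weight at least $\ell + |\widehat{V}|\cdot c$. Lemma \ref{lmadd} then guarantees that $M$ is an acyclic matching in $G$ of size at least $\ell$. This directly contradicts the hypothesis that $(G,\ell)$ is a No-instance of \textsc{Acyclic Matching}.

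Hence no realization of the random choices yields a non-$\mathsf{No}$ output, and Algorithm \ref{algo23} returns $\mathsf{No}$ with probability $1$. There is no real obstacle here; all of the structural reasoning (that $M_W$'s weight forces it to saturate every new vertex, that the residual weight-$1$ edges induce a matching in $G$ avoiding a feedback vertex set compatible with $\widehat{X}$, and hence in a forest) is encapsulated by Lemma \ref{lmadd}, which in turn rests on Lemma \ref{lm1}. The contribution of this lemma is purely to observe that the randomized algorithm has no false positives: it never misreports a No-instance as a Yes-instance.
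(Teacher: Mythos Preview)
Your proposal is correct and follows essentially the same approach as the paper: both argue by contradiction that if Algorithm~\ref{algo23} ever outputs a set $M$, then by Lemma~\ref{lmadd} this $M$ would be an acyclic matching of size at least $\ell$, contradicting that $(G,\ell)$ is a No-instance. Your version is slightly more explicit about quantifying over realizations of the random choices, but the logical structure is identical.
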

\begin{proof}
Note that if  $(G,\ell)$ is a No-instance of \textsc{Acyclic Matching}, then there are two possibilities: $(i)$ there does not exist any feedback vertex set in $G$ of size at most $n-2\ell$, and $(ii)$ for every feedback vertex set $X$ of $G$ of size at most $n-2\ell$, $G-X$ does not have a perfect matching. 
 \begin{figure}[t]
 \centering
    \includegraphics[scale=0.9]{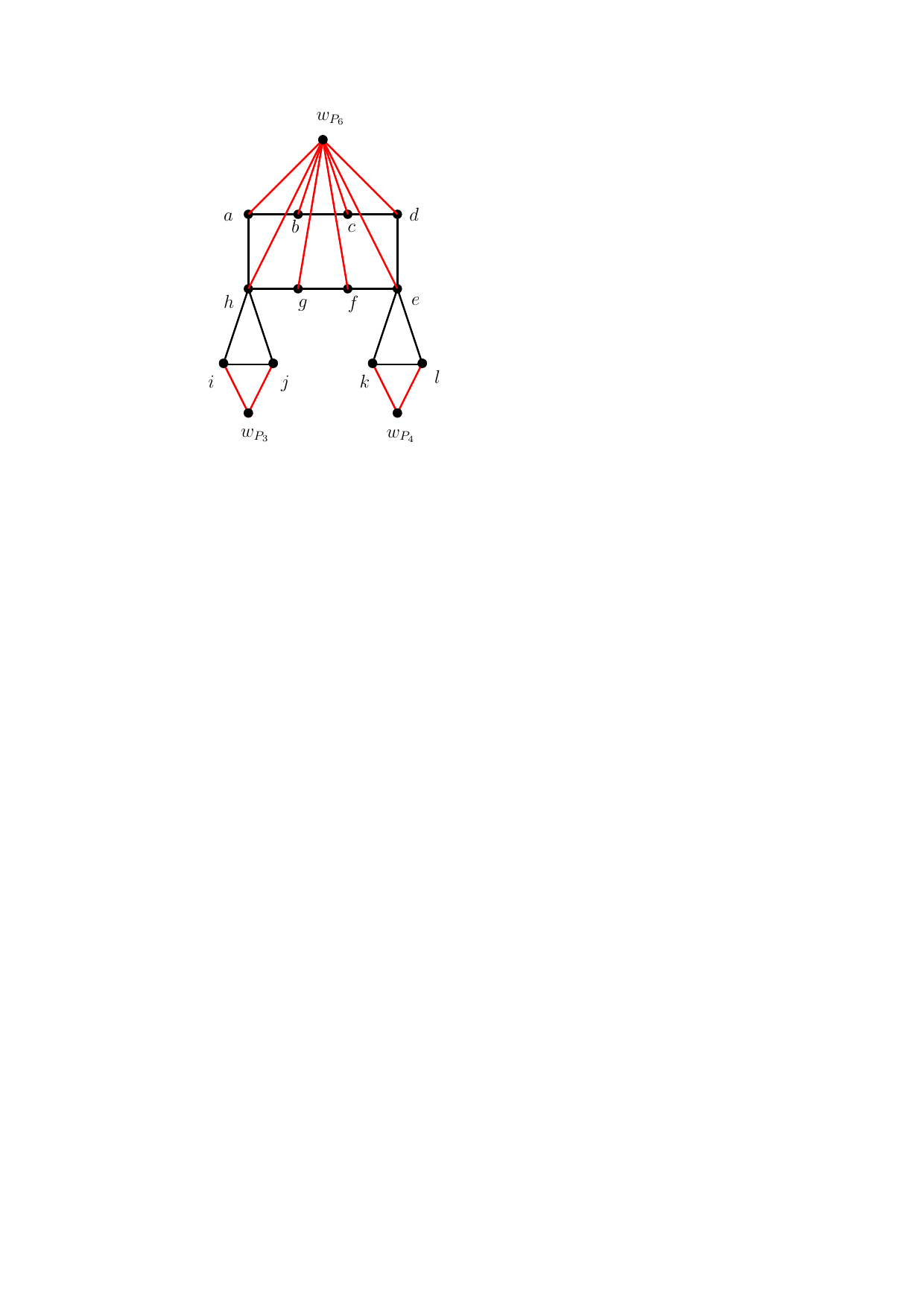}
    \caption{Note that the safe set corresponding to $v_{P_{3}}$ is $\{i,j\}$, corresponding to $v_{P_{4}}$ is $\{k,l\}$, and corresponding to $v_{P_{6}}$ is $\{a,b,c,d,e,f,g,h\}$. The edges shown with red color have a weight of $15$, and the edges shown with black color have a weight of $1$.}
    \label{algofigfig}
\end{figure}

If Algorithm \ref{algo22} returns $\mathsf{No}$, then we are done (as Algorithm \ref{algo23}, in this case, also returns $\textsf{No}$). So, assume that Algorithm \ref{algo22} returns a virtual feedback vertex set, say, $\widehat{X}$, of size at most $n-2\ell$. Next, we claim that the instance $(G_{W},\ell+|\widehat{V}|\cdot c)$ of the \textsc{Max Weight Matching} problem constructed by Algorithm \ref{algo23} is necessarily a No-instance. Targeting a contradiction, suppose that $(G_{W},\ell+|\widehat{V}|\cdot c)$ is a Yes-instance and $M_{W}$ is the matching returned by Algorithm $\mathcal{A}$ of weight at least $\ell+|\widehat{V}|\cdot c$.  By Lemma \ref{lmadd}, $M$ (defined in Algorithm \ref{algo23}) is an acyclic matching in $G$ of size at least $\ell$, a contradiction to the fact that $(G,\ell)$ is a No-instance. Thus, our assumption is wrong, and $(G_{W},\ell+|\widehat{V}|\cdot c)$ is a No-instance. Since this is reflected correctly by Algorithm \ref{algo23}, we conclude that if $(G,\ell)$ is a No-instance of \textsc{Acyclic Matching}, then, with probability 1, Algorithm \ref{algo23} returns $\mathsf{No}$. \qed
\end{proof}

We can improve the success probability of Algorithm \ref{algo22} and thus Algorithm \ref{algo23}, by repeating it, say, $t$
times, and returning a $\mathsf{No}$ only if we are not able to find a virtual feedback vertex set of size at most $k$ in each of
the repetitions. Clearly, due to Lemma \ref{lm4}, given a No-instance, even after repeating the procedure $t$ times, we will necessarily get $\mathsf{No}$ as an answer. However, given a Yes-instance, we return a $\mathsf{No}$ only if all $t$ repetitions
return an incorrect $\mathsf{No}$, which, by Lemma \ref{lm3}, has probability at most
\begin{equation} \label{eq}
(1-10^{-k})^{t}\leq (e^{-10^{-k}})^{t}\leq \frac{1}{e^{10^{-k}t}}.
\end{equation}

Note that we are using the identity $1+x\leq e^{x}$ in (\ref{eq}). In order to obtain a constant failure probability, we take $t=10^{k}$. By taking $t=10^{k}$, the success probability becomes at least $1-\frac{1}{e}.$ 

Thus, by the discussion above, we have the following theorem.
\AMBTG*

\subsection{Para-NP-hardness of \textsf{AMBIS}}\label{NP:hard}
In this section, we show that $\mathsf{AMBIS}$ is $\mathsf{NP}$-$\mathsf{hard}$ even for $k=0$. For this purpose, we first present how to construct an instance of \textsc{Acyclic Matching} from an instance of \textsc{Independent Set} \cite{garey}. The reduction that we use here is also given in \cite{panda1}, and therefore we give only the relevant details here.
\subsubsection{Construction}\label{panda:const}
Given a graph $G$, where $V(G)=\{v_{1},\ldots,v_{n}\}$, an instance of \textsc{Independent Set}, construct a graph $H$, an instance of \textsc{Acyclic Matching} as follows:
	\begin{itemize}
		\item[-] $V(H)=V(G) \cup V'$, where $V'=\{v'_{i}: v_{i}\in V(G)\}.$
		\item[-] $E(H)=E(G) \cup \{v_{i}v'_{i} : 1\leq i \leq n\} \cup \{v_{i}v'_{j} : v_{j}\in N_{G}(v_{i})\} \cup \{v'_{i}v'_{j} : v_{j}\in N_{G}(v_{i})\}.$
\end{itemize}  
See Figure \ref{fig4} for an illustration of the construction of $H$ from $G$.

\begin{figure}[t]
 \centering
    \includegraphics[scale=0.9]{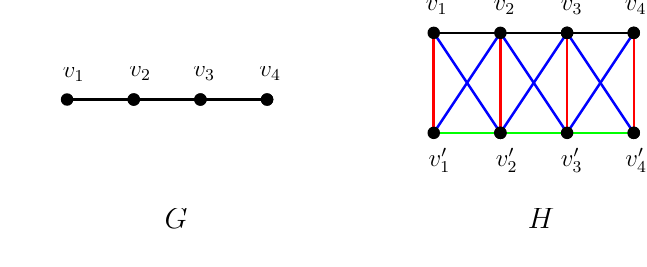}
    \caption{The construction of $H$ from $G$ in Construction \ref{panda:const}.}
    \label{fig4}
\end{figure}

Further, let us partition the edges of $H$ into the following four types:
\begin{enumerate}
	\item[-] Type-I=$\{v_{i}v'_{i}: v_{i}\in V(G)$ and $v'_{i}\in V'\}.$
	\item[-] Type-II= $\{v_{i}v_{j}: v_{i},v_{j}\in V(G) \}.$
	\item[-] Type-III= $\{v'_{i}v_{j}: v'_{i}\in V'$, $v_{j}\in V(G), i \neq j\}.$
	\item[-] Type-IV= $\{v'_{i}v'_{j}: v'_{i},v'_{j}\in V'\}.$
\end{enumerate}

\subsection{$\mathsf{NP}$-hardness Proof for $k=0$}

\begin{proposition} \label{S}
Let $G$ and $H$ be as defined in Construction \ref{panda:const}. Then, there exists a maximum acyclic matching $M$ in $H$ such that $M$ contains edges from $Type$-$I$ only. 
\end{proposition}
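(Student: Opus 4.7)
The plan is to show that $\mathsf{AM}(H) = \mathsf{IS}(G)$, and moreover that this maximum is attained by a Type-I-only matching corresponding to a maximum independent set of $G$. Once this is established, the proposition follows immediately: given a maximum independent set $S = \{v_{i_1}, \ldots, v_{i_\ell}\}$ of $G$, the set $M_S = \{v_{i_j} v_{i_j}' : j \in [\ell]\}$ is a maximum acyclic matching in $H$ and uses only Type-I edges.

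The lower bound $\mathsf{AM}(H) \geq \mathsf{IS}(G)$ will be established by direct verification that $M_S$ is acyclic. Consider the induced subgraph $H[V_{M_S}]$, where $V_{M_S} = \{v_{i_j}, v_{i_j}' : j \in [\ell]\}$. Because $S$ is independent in $G$, there is no edge $v_{i_j} v_{i_{j'}}$ in $E(G)$ for distinct $j, j'$, which immediately rules out all Type-II, Type-III, and Type-IV edges from $H[V_{M_S}]$ (each of these types is gated by adjacency in $G$). Hence the only edges surviving are the Type-I edges of $M_S$, so $H[V_{M_S}]$ is a disjoint union of $K_2$'s, which is a forest.

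For the upper bound $\mathsf{AM}(H) \leq \mathsf{IS}(G)$, the key structural observation is that $v_i$ and $v_i'$ are \emph{true twins} in $H$: a direct inspection of the four edge types shows $N_H(v_i)\setminus\{v_i'\} = N_H(v_i')\setminus\{v_i\} = N_G(v_i) \cup \{v_j' : v_j \in N_G(v_i)\}$, while $v_i v_i' \in E(H)$. Given any maximum acyclic matching $M$ of $H$, apply Lemma \ref{introlemma} to obtain an independent set $I \subseteq V_M$ of $H$ with $|I| = |M|$. Since $v_i v_i' \in E(H)$, $I$ contains at most one of $\{v_i, v_i'\}$ for each $i$, so defining $I' = \{v_i \in V(G) : I \cap \{v_i, v_i'\} \neq \emptyset\}$ yields $|I'| = |I|$. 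It remains to check $I'$ is independent in $G$; if instead $v_i, v_j \in I'$ with $v_i v_j \in E(G)$, then (as noted above) $\{v_i, v_j, v_i', v_j'\}$ induces a $K_4$ in $H$, so the representatives $x \in I \cap \{v_i, v_i'\}$ and $y \in I \cap \{v_j, v_j'\}$ would be adjacent in $H$, contradicting the independence of $I$. Therefore $|M| = |I| = |I'| \leq \mathsf{IS}(G)$.

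The main delicate point is the twin-replacement step, since a priori the independent set $I$ produced by Lemma \ref{introlemma} may mix vertices from $V(G)$ and $V'$; the true-twin property of $H$, together with the $K_4$-structure of $\{v_i, v_j, v_i', v_j'\}$ whenever $v_i v_j \in E(G)$, is exactly what ensures that collapsing $I$ onto $V(G)$ preserves both the cardinality and independence in $G$. Once this is in place, the two bounds meet, so $M_S$ is a maximum acyclic matching consisting entirely of Type-I edges, as desired.
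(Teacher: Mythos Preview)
The paper does not supply its own proof of this proposition; the construction is imported from \cite{panda1} and Proposition~\ref{S} is stated without argument. Your proof is correct: you establish $\mathsf{AM}(H)=\mathsf{IS}(G)$ and exhibit a Type-I-only matching attaining it, which settles the claim. The only minor imprecision is writing $|I|=|M|$ when Lemma~\ref{introlemma} literally guarantees ``at least $\ell$''; but the proof of that lemma does produce an independent set of size exactly $|M|$ (one side of the bipartition of the forest $H[V_M]$), and in any case $|I|\ge|M|$ already suffices for the upper bound you need.

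A more direct argument (and presumably the one intended in \cite{panda1}) would start from an arbitrary maximum acyclic matching of $H$ and swap out non-Type-I edges one at a time for Type-I edges, using the true-twin property you identified to preserve both size and acyclicity. Your route via $\mathsf{IS}(G)$ trades that edge-by-edge case analysis for an appeal to Lemma~\ref{introlemma} together with the $K_4$-structure on $\{v_i,v_j,v_i',v_j'\}$; both approaches ultimately hinge on the same twin observation, and yours has the side benefit of yielding $\mathsf{AM}(H)=\mathsf{IS}(G)$ directly, which is exactly what the paper extracts afterwards in Corollary~\ref{ind1}.
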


Based on Construction \ref{panda:const} and Proposition \ref{S}, we have the following corollary.

\begin{corollary} \label{ind1}
Let $G$ and $H$ be as defined in Construction \ref{panda:const}. Then, $\mathsf{IS}(H)=\mathsf{AM}(H)$.
\end{corollary}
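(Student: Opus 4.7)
\textbf{Proof plan for Corollary \ref{ind1}.} The inequality $\mathsf{AM}(H) \le \mathsf{IS}(H)$ is immediate: by Lemma \ref{introlemma}, any acyclic matching in $H$ of size $\ell$ yields an independent set in $H$ of size at least $\ell$. The nontrivial direction is $\mathsf{IS}(H) \le \mathsf{AM}(H)$, and this is what I would focus on.

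For the reverse inequality, I would take a maximum independent set $S$ of $H$ and build an acyclic matching of size $|S|$ using only Type-I edges (mirroring Proposition \ref{S}). The first observation is that since $v_i v'_i \in E(H)$ for every $i$, at most one of $\{v_i, v'_i\}$ lies in $S$; hence the index set $C := \{i \in [n] : v_i \in S \text{ or } v'_i \in S\}$ satisfies $|C| = |S|$. Next, I claim that $\{v_i : i \in C\}$ is an independent set of $G$. Indeed, if $v_i v_j \in E(G)$ for some distinct $i,j \in C$, then by Construction \ref{panda:const} the four vertices $\{v_i, v'_i, v_j, v'_j\}$ induce a $K_4$ in $H$ (Type-II gives $v_iv_j$, Type-IV gives $v'_iv'_j$, Type-III gives $v_iv'_j$ and $v'_iv_j$, Type-I gives $v_iv'_i$ and $v_jv'_j$); but then whichever vertex of $\{v_i, v'_i\}$ lies in $S$ is adjacent in $H$ to whichever vertex of $\{v_j, v'_j\}$ lies in $S$, contradicting the independence of $S$.

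Now set $M := \{v_i v'_i : i \in C\}$, a set of Type-I edges of size $|C|=|S|$. I would finish by verifying that $M$ is acyclic. The vertex set $V_M = \{v_i, v'_i : i \in C\}$, and for distinct $i,j \in C$ we have $v_i v_j \notin E(G)$ (from the previous paragraph), which by the definition of each edge type in $H$ means none of $v_iv_j, v'_iv'_j, v_iv'_j, v'_iv_j$ lies in $E(H)$. Therefore $H[V_M]$ consists of $|C|$ vertex-disjoint edges, which is a forest (and in fact an induced matching), so $M$ is an acyclic matching in $H$ of size $|S|=\mathsf{IS}(H)$. Combined with the first inequality, this gives $\mathsf{AM}(H) = \mathsf{IS}(H)$.

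The step I expect to be the only real work is articulating the $K_4$-argument cleanly from the four edge types of Construction \ref{panda:const}; once that is in hand, the rest is bookkeeping. Note that an equivalent and slightly shorter route is to first prove $\mathsf{IS}(H) = \mathsf{IS}(G)$ by the same $K_4$-observation and then invoke Proposition \ref{S} together with the fact that a set of Type-I edges $\{v_{i_1}v'_{i_1},\dots,v_{i_p}v'_{i_p}\}$ forms an acyclic matching iff $\{v_{i_1},\dots,v_{i_p}\}$ is independent in $G$, yielding $\mathsf{AM}(H) = \mathsf{IS}(G) = \mathsf{IS}(H)$.
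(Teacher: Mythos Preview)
Your proposal is correct and follows essentially the same approach as the paper. The paper also proves both inequalities by relating Type-I matchings to independent sets: for $\mathsf{AM}(H)\le\mathsf{IS}(H)$ it starts from a maximum Type-I acyclic matching (via Proposition~\ref{S}) and extracts an independent set, while you invoke Lemma~\ref{introlemma}; for $\mathsf{IS}(H)\le\mathsf{AM}(H)$ both you and the paper take a maximum independent set $S$ of $H$, observe that at most one of $\{v_i,v'_i\}$ lies in $S$, and form the Type-I matching $\{v_iv'_i : v_i\in S \text{ or } v'_i\in S\}$---your $K_4$ argument just makes explicit what the paper compresses into the single sentence ``Since $I'$ is an independent set, $M'$ is an acyclic matching.''
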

\begin{proof}
By Proposition \ref{S}, there exists a maximum acyclic matching, say, $M$, in $H$ such that $M$ contains only Type-I edges. Without loss of generality, let $M=\{v_{1}v'_{1}, \ldots, v_{\ell}v'_{\ell}\}$. Define a set $I=\{v_{1}, \ldots, v_{\ell}\}$ in $H$. First, we claim that $I$ is an independent set in $H$. Else, if there exist distinct $v_{i},v_{j}\in I$ such that $v_{i}v_{j}\in E(H)$, then by the definition of $H$, $v'_{i}v'_{j}\in E(H)$. This implies that $v_{i},v_{j},v'_{j},v'_{i},v_{i}$ forms a cycle in $G[V_{M}]$, a contradiction to the fact that $M$ is an acyclic matching in $H$. Thus, $I$ is an independent set in $H$. Next, we claim that $I$ is maximum in $H$. For the sake of contradiction, assume that $I'$ is a maximum independent set of $H$ and $|I'|>|I|$. Note that if $v\in I'$, then $v'\notin I'$ (as $vv'\in E(H)$), and vice versa. Define $M'=\{vv': v'\in I'$ or $v\in I'\}$. Since $I'$ is an independent set, $M'$ is an acyclic matching of $H$ and $|M'|>|M|$, a contradiction to the fact that $M$ is a maximum acyclic matching of $H$. Thus, $\mathsf{IS}(H)=\mathsf{AM}(H)$.
\end{proof}

Next, consider the following result.

\begin{proposition} [\cite{panda1}] \label{ind}
Let $G$ and $H$ be as defined in Construction \ref{panda:const}. Then, $G$ has an independent set of size at least $\ell$ if and only if $H$ has an acyclic
matching of size at least $\ell$.
\end{proposition}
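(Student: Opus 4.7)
The plan is to prove the two directions separately, leaning heavily on Proposition \ref{S}, which has already been established, and on the analysis underlying Corollary \ref{ind1}.

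For the forward direction, I would start from an independent set $I=\{v_{i_1},\ldots,v_{i_\ell}\}$ of $G$ and exhibit the explicit matching $M=\{v_{i_j}v'_{i_j} : j\in[\ell]\}$ in $H$, which consists entirely of Type-I edges. The goal is then to argue that $H[V_M]$ is a forest. To see this, note that $V_M=\{v_{i_j}, v'_{i_j} : j\in[\ell]\}$, and by inspecting the four edge-types of Construction \ref{panda:const} in turn, any Type-II edge in $H[V_M]$ would be of the form $v_{i_j}v_{i_k}$ with both endpoints in $I$, any Type-III edge would have the form $v'_{i_j}v_{i_k}$ corresponding to $v_{i_j}v_{i_k}\in E(G)$, and any Type-IV edge $v'_{i_j}v'_{i_k}$ also corresponds to $v_{i_j}v_{i_k}\in E(G)$. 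All three possibilities are ruled out by the independence of $I$, so $H[V_M]$ reduces to a disjoint union of $K_2$'s — in particular, a forest — and $M$ is an acyclic matching of size $\ell$.

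For the backward direction, suppose $H$ has an acyclic matching of size at least $\ell$. By Proposition \ref{S}, there is a maximum acyclic matching $M$ of $H$ that uses only Type-I edges; since $|M|\geq \ell$, I may assume, after relabeling, that $M=\{v_1v'_1,\ldots,v_p v'_p\}$ with $p\geq \ell$. The plan is then to show that $I=\{v_1,\ldots,v_p\}$ is an independent set of $G$. This is precisely the argument used inside the proof of Corollary \ref{ind1}: if some $v_i v_j\in E(G)$ with $v_i,v_j\in I$, then $v'_iv'_j\in E(H)$ by Construction \ref{panda:const}, and the four vertices $v_i,v_j,v'_j,v'_i$ would form a $4$-cycle in $H[V_M]$, contradicting acyclicity of $M$. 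Hence $I$ is independent in $G$ and has size $p\geq \ell$.

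The two directions together yield the biconditional. I do not expect any real obstacle: Proposition \ref{S} does the heavy lifting on the matching side, and the Type-I edge set of Construction \ref{panda:const} is designed to put the independent vertices of $G$ and the edges of an acyclic matching of $H$ into a natural one-to-one correspondence, so the main thing to check carefully is just the case analysis over edge types in the forward direction and the $4$-cycle argument in the backward direction.
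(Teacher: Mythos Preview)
Your proposal is correct. Note, however, that the paper does not supply its own proof of this proposition: it is quoted directly from \cite{panda1} and stated without argument, so there is no in-paper proof to compare against. Your derivation via Proposition~\ref{S} and the $4$-cycle argument (exactly the mechanism already exploited in the proof of Corollary~\ref{ind1}) is the natural way to recover the result from the ingredients the paper makes available.
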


By Proposition \ref{ind} and Corollary \ref{ind1}, it is clear that $\mathsf{AMBIS}$ is $\mathsf{NP}$-$\mathsf{hard}$ for $k=0$, and thus we have the following result.
\AMBIS*

\section{Negative Kernelization Results}\label{secnopoly}

\subsection{Vertex Cover Number}
In this section, we prove that under the assumption that $\mathsf{NP} \nsubseteq \mathsf{coNP} \slash \mathsf{poly}$, there does not exist
any polynomial kernel for \textsc{Acyclic Matching} when parameterized by the vertex cover number of the input graph plus the size of the matching. For this purpose, we give an OR-cross-composition (see Section \ref{PC}) from \textsc{Exact-3-Cover} (defined below), which is $\mathsf{NP}$-$\mathsf{hard}$ \cite{garey}:\\

\noindent\fbox{ \parbox{145mm}{
		\noindent\underline{\textsc{Exact-3-Cover}}:
		
		\smallskip
		\noindent\textbf{Instance:} A set $X$ with $|X|=3c$, where $c\in \mathbb{N}$, and a collection $\mathcal{S}$ of 3-element subsets of $X$.
		
		\noindent\textbf{Question:} Does there exist a subcollection $\mathcal{S'}$ of $\mathcal{S}$ such that every element of $X$ appears in exactly one member of $\mathcal{S'}$?}}
\bigskip

We remark that our reduction is inspired by the reduction given by Gomes et al. \cite{gomes} to prove that
there is no polynomial kernel for \textsc{Induced Matching} when parameterized by the vertex cover number plus the size of the matching of the input graph unless $\mathsf{NP}\subseteq \mathsf{coNP} \slash \mathsf{poly}$.

\subsubsection{Construction} \label{const:kernel}
Let $\mathcal{I}=\{(X_{1},\mathcal{S}_{1}),\ldots,(X_{t},\mathcal{S}_{t})\}$ be a collection of $t$ instances of \textsc{Exact-3-Cover}. Without loss of generality, let $X_{i}=X=[n]$ and $|\mathcal{S}_{i}|=m$ for all $i\in[t]$ (note that $n=3c$ for some $c\in \mathbb{N}$). Let $\mathcal{C}=\displaystyle{\bigcup_{i\in[t]}\mathcal{S}_{i}}$, and assume that $\mathcal{S}_{i}\neq \mathcal{S}_{j}$ for every distinct $i,j\in[t]$. Also, we denote $\mathcal{C}$ as $\{s_{1},s_{2},\ldots, s_{|\mathcal{C}|}\}$. Furthermore, we denote by $(G,\ell)$ the instance of \textsc{Acyclic Matching} that we construct in this section.  First, we introduce the vertex set $X'=\{v_{a}:a\in X\}$ to $G$. Next, we define the set gadgets as follows.
\medskip

\noindent \textbf{Set Gadget:} For each $s_{j}\in \mathcal{C}$, where $s_{j}=\{a,b,c\}$, we add a copy of $K_{2,3}$ with vertices $\{u_{ja},u_{jb},u_{jc}\}$ in one partition and $\{u_{j},w_{j}\}$ in the other, and introduce an edge between $u_{j}$ and $w_{j}$. We also introduce two pendant edges $u_{j}u'_{j}$ and $w_{j}w'_{j}$ incident on $u_{j}$ and $w_{j}$, respectively. Let us refer to the set gadget corresponding to $s_{j}\in \mathcal{C}$ as $Q_{j}$. See Figure \ref{fig1} for an illustration. Furthermore, for every $Q_{j}$, if $s_{j}=\{a,b,c\}$, then we call the vertices $\{u_{ja},u_{jb},u_{jc}\}$ \emph{interface vertices}. For every $s_{j}\in \mathcal{C}$, where $s_{j}=\{a,b,c\}$ and $v_{d}\in X'$, $u_{ja}v_{d}\in E(G)$ if and only if $a=d$.
\medskip

\begin{figure}[t]
 \centering
    \includegraphics[scale=0.9]{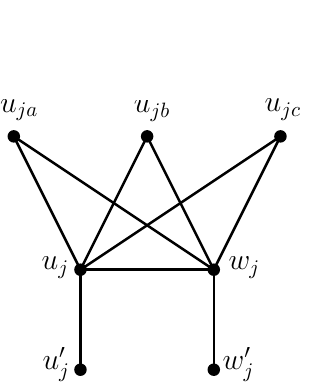}
    \caption{The construction of a set gadget $Q_{j}$ corresponding to $s_{j}=\{a,b,c\}\in  \mathcal{C}$.}
    \label{fig1}
\end{figure}

\noindent \textbf{Instance Selector:} Introduce a $K_{1,t}$ with $p$ as the central vertex and $p_{i}$, $i\in [t]$, as leaves. Let $P=\{p_{i}:i\in [t]\}$. For each $p_{i}, i\in [t]$ and $s_{j}\in \mathcal{C}\setminus \mathcal{S}_{i}$, introduce edges between $p_{i}$ and the interface vertices of the set gadget $Q_{j}$. See Figure \ref{fig2} for an illustration of the construction of $G$.

Finally, we set $\ell=2|\mathcal{C}|+\frac{2n}{3}+1$.

Now, consider the following lemma.
\begin{lem} \label{sizevc}
Let $G$ be as defined in Construction \ref{const:kernel}. Then, $G$ has a vertex cover of size $\mathcal{O}(n^{3})$.
\end{lem}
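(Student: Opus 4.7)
\bigskip

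\noindent\textbf{Proof plan for Lemma \ref{sizevc}.}
The plan is to exhibit an explicit vertex cover of $G$ and then bound its size combinatorially. The crucial observation is that although the number $t$ of input instances and the sizes $m$ of the individual collections $\mathcal{S}_i$ can be arbitrarily large, the combined collection $\mathcal{C}=\bigcup_{i\in[t]}\mathcal{S}_i$ consists only of \emph{distinct} $3$-element subsets of $X=[n]$, so $|\mathcal{C}|\leq \binom{n}{3}=\mathcal{O}(n^3)$. This is the only place where polynomiality in $n$ enters the argument.

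First I would define the candidate cover
\[
C \;=\; \{p\}\;\cup\; \bigcup_{s_j\in \mathcal{C}}\;\bigl\{u_j,\; w_j,\; u_{ja},\; u_{jb},\; u_{jc}\bigr\},
\]
where $s_j=\{a,b,c\}$. Thus $|C|=5|\mathcal{C}|+1$. I would then verify, by a short case analysis on the edge types of $G$, that $C$ covers every edge of $G$: \emph{(i)} inside a set gadget $Q_j$, every edge (the $K_{2,3}$ edges $u_j u_{j\star}$ and $w_j u_{j\star}$, the edge $u_j w_j$, and the pendant edges $u_j u'_j$, $w_j w'_j$) has an endpoint in $\{u_j,w_j,u_{ja},u_{jb},u_{jc}\}\subseteq C$; \emph{(ii)} every edge of the form $u_{ja}v_a$ connecting a set gadget to $X'$ has its gadget endpoint in $C$; \emph{(iii)} in the instance selector, every edge $pp_i$ is covered by $p\in C$, and every edge $p_i u_{ja}$ from $p_i$ to an interface vertex is covered by $u_{ja}\in C$.

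Finally I would combine this with the bound $|\mathcal{C}|\leq \binom{n}{3}$ to conclude that $|C|\leq 5\binom{n}{3}+1=\mathcal{O}(n^3)$, which yields the claimed bound on the vertex cover number of $G$.

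There is no real obstacle here: the main point is simply to observe the $\binom{n}{3}$ bound on $|\mathcal{C}|$, which is what makes the construction an admissible OR-cross-composition in the next step of the proof of Theorem \ref{nopolyvc}. The only mild care needed is to pick the gadget vertices $u_j,w_j$ (rather than the pendants $u'_j,w'_j$) so that a single vertex per side covers both a pendant edge and the inner $K_{2,3}$ edges simultaneously.
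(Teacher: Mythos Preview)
Your proof is correct and follows essentially the same strategy as the paper: exhibit an explicit vertex cover whose size is a constant multiple of $|\mathcal{C}|$, and then invoke the bound $|\mathcal{C}|\le\binom{n}{3}$. The only difference is in the choice of cover. The paper observes that $P\cup X'$ is an independent set of $G$, so its complement $Y=V(G)\setminus(P\cup X')$ is automatically a vertex cover of size $7|\mathcal{C}|+1$; this avoids any edge-by-edge verification. You instead take the smaller set $C=\{p\}\cup\bigcup_j\{u_j,w_j,u_{ja},u_{jb},u_{jc}\}$ of size $5|\mathcal{C}|+1$ and check coverage by a short case analysis. Your constant is tighter, the paper's argument is one line shorter; both are perfectly fine since only the $\mathcal{O}(n^3)$ order matters for the cross-composition.
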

\begin{proof}
As $P \cup X'$ is an independent set of $G$, it can be observed that the set $Y=V(G)\setminus (P \cup X')$ is a vertex cover of $G$. Furthermore, note that $|Y|=7|\mathcal{C}|+1$. Since all the elements in $\mathcal{C}$ are distinct, there can be at most $\binom{n}{3}$ elements in $\mathcal{C}$. Thus, we have $|Y|\in \mathcal{O}(n^{3})$. \qed
\end{proof}
\subsubsection{From Exact-3-Cover to Acyclic Matching}

We remark that by saying that $(X,\mathcal{S})$ admits a solution of $\mathcal{I}$, we mean that there exists a set $\mathcal{S'}\subseteq \mathcal{S}$ such that if $\mathcal{S'}=\{s'_{1},\ldots,s'_{\frac{n}{3}}\}$, then $\bigcup^{\frac{n}{3}}_{i=1}$ $s'_{i}=X$.

Now, consider the following lemma.
\begin{figure}[t]
    \centering
    \includegraphics[scale=0.78]{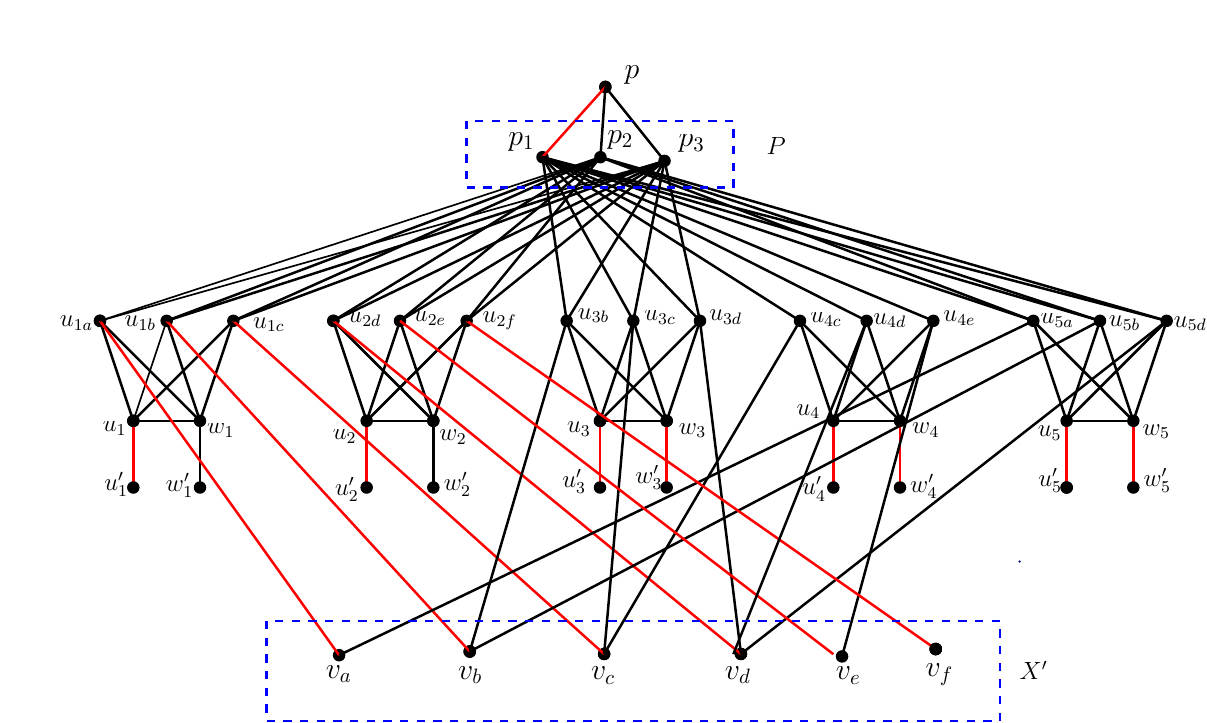}
    \caption{The construction of $G$ from $\mathcal{I}=\{(X,\mathcal{S}_{1}),(X,\mathcal{S}_{2}),(X,\mathcal{S}_{3})\}$, where $X=\{a,b,c,d,e,f\}$, $\mathcal{S}_{1}=\{\{a,b,c\},\{d,e,f\}\},\mathcal{S}_{2}=\{\{b,c,d\},\{c,d,e\}\},$ and $\mathcal{S}_{3}=\{\{a,b,d\},\{c,d,e\}\}.$ The edges shown with red color represent an acyclic matching in $G$.}
    \label{fig2}
\end{figure}
\begin{lem} \label{AM-1}
If $(X,\mathcal{S}_{q})$ admits a solution of $\mathcal{I}$, then $G$ admits an acyclic matching of size $\ell$.
\end{lem}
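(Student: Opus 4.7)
The plan is to produce an explicit acyclic matching $M$ of size $\ell = 2|\mathcal{C}| + \frac{2n}{3} + 1$ from the assumed exact cover $\mathcal{S}'_q \subseteq \mathcal{S}_q$ (so $|\mathcal{S}'_q| = n/3$ and the sets of $\mathcal{S}'_q$ partition $X$). I would initialize $M$ with the single instance-selector edge $pp_q$. For every set gadget $Q_j$ whose set $s_j$ does \emph{not} lie in $\mathcal{S}'_q$ I add both pendant edges $u_ju'_j$ and $w_jw'_j$ to $M$, contributing $2(|\mathcal{C}| - n/3)$ edges in total. For every $s_j = \{a,b,c\} \in \mathcal{S}'_q$ I add only the pendant edge $w_jw'_j$ together with the three interface edges $u_{ja}v_a$, $u_{jb}v_b$, $u_{jc}v_c$, deliberately leaving $u_ju'_j$ out; each such gadget contributes $4$ edges. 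Summing gives $1 + 2(|\mathcal{C}| - n/3) + 4 \cdot (n/3) = 2|\mathcal{C}| + \frac{2n}{3} + 1 = \ell$.

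The next step is to verify acyclicity gadget-by-gadget. Within a non-cover gadget $Q_j$ the saturated vertices are exactly $\{u_j, u'_j, w_j, w'_j\}$ and $G[V_M]$ restricted to them is the path on $u'_j, u_j, w_j, w'_j$. Within a cover gadget $Q_j$ with $s_j = \{a,b,c\}$ the saturated vertices are $\{w_j, w'_j, u_{ja}, u_{jb}, u_{jc}, v_a, v_b, v_c\}$, and the induced edges are the four matching edges together with the three $K_{2,3}$-edges $w_j u_{ja}$, $w_j u_{jb}$, $w_j u_{jc}$; this yields a spider tree centred at $w_j$ (seven edges on eight vertices, connected). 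The crucial point is that the only way a cycle could have closed inside a set gadget uses $u_j$ as an interior vertex, namely the $4$-cycles $u_j u_{jx} w_j u_{jy} u_j$ of the $K_{2,3}$, and by not matching $u_ju'_j$ in cover gadgets we keep $u_j$ out of $V_M$.

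The main obstacle, and the place where exactness of the cover is genuinely needed, is to rule out cycles that cross several gadgets. The only vertices of $V_M$ that can act as bridges between gadgets are $p$, $p_q$, and the $v_a \in X'$. I would handle $p_q$ by noting that it is adjacent in $G$ only to interface vertices of gadgets $Q_j$ with $s_j \in \mathcal{C} \setminus \mathcal{S}_q$, whereas every interface vertex in $V_M$ belongs to a cover gadget with $s_j \in \mathcal{S}'_q \subseteq \mathcal{S}_q$; hence no such edge survives in $G[V_M]$, so $pp_q$ is an isolated $P_2$-component. For each $v_a \in X' \cap V_M$ I would use the exactness of $\mathcal{S}'_q$: the element $a$ lies in a unique cover set $s_{j^*}$, and for every other set $s_{j'}$ containing $a$ the gadget $Q_{j'}$ is non-cover, so its interface vertices are outside $V_M$; therefore $v_a$ has the single neighbour $u_{j^*a}$ inside $G[V_M]$. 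Putting everything together, $G[V_M]$ decomposes into pairwise disjoint trees (one $P_2$, one $P_4$ per non-cover gadget, and one spider tree per cover gadget), so $M$ is an acyclic matching of size $\ell$ in $G$.
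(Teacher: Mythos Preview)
Your proof is correct and follows essentially the same strategy as the paper: build an explicit matching from the exact cover (the instance-selector edge, two pendant edges per non-cover gadget, and in each cover gadget one pendant edge together with the three cross edges), then verify acyclicity by checking that $p$, $p_q$, and each $v_a\in X'$ are leaves of $G[V_M]$ and that each gadget contributes a tree. The only difference is cosmetic: in a cover gadget the paper keeps $u_ju'_j$ (so the gadget's saturated part is the star $K_{1,4}$ centred at $u_j$), whereas you keep $w_jw'_j$ (yielding the symmetric star centred at $w_j$); since $u_j$ and $w_j$ play identical roles in the $K_{2,3}$, both choices work equally well.
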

\begin{proof}
Let $\mathcal{S'}$ be a solution of $(X,\mathcal{S}_{q})$. For $s_{j}\in \mathcal{S'}$, if $s_{j}=\{a,b,c\}$, then add the edges $\{u_{ja}v_{a},u_{jb}v_{b},u_{jc}v_{c},u_{j}u'_{j}\}$ to $M$. For $s_{k}\in \mathcal{C} \setminus \mathcal{S'}$, add edges $\{u_{k}u'_{k},w_{k}w'_{k}\}$ to $M$. Finally, add the edge $p_{q}p$ to $M$. Since $|\mathcal{S'}|=\frac{n}{3}$, it is easy to see that $|M|=\frac{4n}{3}+2(|\mathcal{C}|-\frac{n}{3})+1=\ell$. Now, it remains to show that $M$ is an acyclic matching in $G$. 

For the sake of contradiction, assume that $G[V_{M}]$ contains a cycle, say $C$. First, we claim that none of the vertices from $X'$ belong to $C$. Since $d(v)=1$ in $G[V_{M}]$ for every $v\in X'$, our claim holds.
The same is true for the vertex $p$. Next, by the definition of $G$ and $M$, it is clear that $p_{q}$ is incident to only the unsaturated interface vertices. Therefore, we say that $d(p_{q})=1$ in $G[V_{M}]$. By the arguments presented above, we say that the cycle $C$ must be contained entirely in $Q_{j}$ corresponding to some $s_{j}\in \mathcal{C}$. For $s_{j}\in \mathcal{S'}$, if $s_{j}=\{a,b,c\}$, then the subgraph induced by the $M$-saturated vertices restricted to $Q_{j}$ is a $K_{1,4}$ with the partition $\{u'_{j},u_{ja},u_{jb},u_{jc}\}$ and $\{u_{j}\}$. For $s_{k}\notin \mathcal{S'}$, if $s_{k}=\{a,b,c\}$, then the subgraph induced by the $M$-saturated vertices restricted to $Q_{k}$ is a $P_{4}$, namely, $u'_{k},u_{k},w_{k},w'_{k}$. In this way, we deny the possibility of the existence of $C$ in $G[V_{M}]$. Hence, $M$ is an acyclic matching in $G$. \qed
\end{proof}

\subsubsection{From Acyclic Matching to Exact-3-Cover}
Consider graph $G$ as defined in Construction \ref{const:kernel}. Let us call the edges between $X'$ and the set of interface vertices \emph{cross edges}. From the definition of matching, it is clear that in any matching $M$ of $G$, at most $n$ cross edges belong to $M$ (as $|X'|=n$). Furthermore, let us call the edges between the set of interface vertices and $P$ as \emph{upper edges}. Now, consider the following definitions that will be used in proving Lemmas \ref{am} and \ref{AM-2}.

\begin{definition} [Happy Gadget]
A set gadget is \emph{happy} with respect to a matching $M$ if at least one of its interface vertex is saturated by $M$ through a cross edge.
\end{definition}

\begin{definition} [Touched Gadget]
A set gadget is \emph{touched} with respect to a matching $M$ if at least one of its interface vertex is saturated by $M$ through an upper edge.
\end{definition}

\begin{lem} \label{am}
In every solution $M$ of $(G,\ell)$, $p$ is saturated by $M$.
\end{lem}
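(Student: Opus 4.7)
The plan is to prove the contrapositive: if $M$ is an acyclic matching in $G$ with $p \notin V_M$, then $|M| < \ell$. Since $p$ is unsaturated, every matched edge lies in one of three categories: internal to some set gadget $Q_j$, cross (between an interface of $Q_j$ and $X'$), or upper (between an interface of $Q_j$ and some $p_i$). For each $j$ I will let $f_j = I_j + c_j + a_j$, where $I_j$, $c_j$, $a_j$ count the internal, cross, and upper edges of $M$ incident to $V(Q_j)$; then $|M| = \sum_j f_j$.

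The main work is to establish, from the fact that $G[V_M]$ is a forest, the following structural facts at a single gadget $Q_j$. (a) If $\{u_j, w_j\} \subseteq V_M$, then the edge $u_j w_j$ together with any interface in $V_M$ forms a triangle, so no interface can be matched and $f_j \leq 2$. (b) At most one upper edge is incident to $Q_j$: two upper edges $p_i u_{ja}$ and $p_{i'} u_{jb}$ would, together with the graph edges $p_i u_{jb}$ and $p_{i'} u_{ja}$ (which exist because $s_j \notin \mathcal{S}_i \cup \mathcal{S}_{i'}$), close a $4$-cycle. (c) If an upper edge $p_i u_{ja}$ coexists in $M$ with any other external edge at $Q_j$, then two interfaces of $Q_j$ lie in $V_M$, and the graph edges from $p_i$ to both interfaces together with the edges from each interface to $u_j$ or $w_j$ (if either is in $V_M$) close a short cycle; hence $u_j, w_j \notin V_M$ and $I_j = 0$. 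A short case analysis using (a)--(c) then yields the sharp per-gadget classification: $f_j = 4$ is possible only with $I_j = 1$ (a single pendant edge), $c_j = 3$, $a_j = 0$; and $f_j = 3$ forces $c_j \geq 2$ in every subcase.

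To close, I let $k_4 = |\{j : f_j = 4\}|$ and $k_3 = |\{j : f_j = 3\}|$. The classification above, combined with the fact that cross edges at different gadgets meet $X'$ in distinct vertices, gives $3 k_4 + 2 k_3 \leq \sum_j c_j \leq n$. Since $f_j - 2 \leq 0$ for $j \notin \{j: f_j \in \{3,4\}\}$, we obtain $|M| - 2|\mathcal{C}| = \sum_j (f_j - 2) \leq 2 k_4 + k_3 \leq \frac{2}{3}(3 k_4 + 2 k_3) \leq \frac{2n}{3}$, whence $|M| \leq 2|\mathcal{C}| + \frac{2n}{3} = \ell - 1$, contradicting $|M| \geq \ell$.

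The hard part will be the case analysis behind (c), because upper edges interact with cross edges, pendant edges, and interface-to-$u_j$/$w_j$ edges in several subtly different ways, and each combination has to be checked for a short cycle in $G[V_M]$. This careful bookkeeping is exactly what makes the cross-edge budget $3 k_4 + 2 k_3 \leq n$ tight enough to match the value $\ell = 2|\mathcal{C}| + \frac{2n}{3} + 1$ fixed by the construction.
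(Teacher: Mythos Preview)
Your proposal is correct, and it is a genuinely cleaner reorganisation of the paper's argument. Both proofs rest on the same two facts: a local cap on how many matching edges a single gadget $Q_j$ can contribute once $G[V_M]$ is acyclic, and the global budget $\sum_j c_j \le n$ coming from $|X'|=n$. The paper splits into two cases (no $p_i$ saturated versus some $p_i$ saturated), introduces the notions of \emph{happy} and \emph{touched} gadgets, and then inside each case does a further split on whether the number $n'$ of happy gadgets is below or above $n/3$. You instead absorb the upper edges into the per-gadget count $f_j$ from the start, prove the single classification $f_j=4 \Rightarrow c_j=3$ and $f_j=3 \Rightarrow c_j\ge 2$, and finish with the linear inequality $2k_4+k_3 \le \tfrac{2}{3}(3k_4+2k_3) \le \tfrac{2n}{3}$, which replaces both of the paper's $n' \lessgtr n/3$ case splits at once. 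Your structural facts (a)--(c) are exactly the cycle observations the paper uses (the triangle $u_j,u_{ja},w_j$ and the $4$-cycle $p_i,u_{ja},u_j,u_{jb}$), just organised so that the final accounting is a single line rather than four subcases.

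One small point worth making explicit when you write this up: the bound $|M|=\sum_j f_j$ uses that, once $p\notin V_M$, every edge of $M$ is incident to exactly one $Q_j$; and your local bounds on $f_j$ remain valid even when some $p_i\in V_M$ interacts with several gadgets, because any additional global cycles this creates can only further restrict the configurations, never enlarge $f_j$.
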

\begin{proof}
Let $M$ be an arbitrary but fixed solution of $(G,\ell)$.  First, assume that $p\notin V_{M}$ and $P\cap V_{M}=\emptyset$. Next, observe that a happy gadget, say, $Q_{j}$, contributes at most one edge to $M$ that lies entirely within $Q_{j}$ (i.e., $|M\cap E(Q_{j})|\leq 1$, if $Q_{j}$ is happy), and a set gadget that is not happy, say, $Q_{k}$, contributes at most two edges to $M$ that lie entirely within $Q_{k}$ (i.e., $|M\cap E(Q_{k})|\leq 2$, if $Q_{k}$ is not happy). If none of the set gadgets is happy, then $|M| \leq 2|\mathcal{C}|<\ell$, a contradiction. Thus, some of the set gadgets must be happy. 

Next, let us assume that $0<n'\leq n$ set gadgets are happy. If $n'<\frac{n}{3}$, then $|M|\leq n'+3n'+2(|\mathcal{C}|-n')=2n'+2|\mathcal{C}|< \frac{2n}{3}+2|\mathcal{C}|<\ell$, a contradiction. On the other hand, if $n'\geq \frac{n}{3}$, then, first, recall that at most $n$ cross edges belong to $M$. Therefore, $|M|\leq n'+n+2(|\mathcal{C}|-n')=n-n'+2|\mathcal{C}|\leq n-\frac{n}{3}+2|\mathcal{C}|<\ell$, a contradiction. Thus, $P\cap V_{M}\neq\emptyset$.

Next, without loss of generality, assume that $P'=\{p
_{1},\ldots,p
_{k}\}\in V_{M}$. Since
$p\notin V_{M}$, it is easy to note that every $p_{i}\in P'$ must be matched in $M$ via an upper edge. Also, note that distinct $p_{i}'$s \emph{touch} distinct set gadgets. Else, if two distinct $p_{i}$ and $p_{m}$ in $P'$ touch the same set gadget, say, $Q_{j}$, where $s_{j}=\{a,b,c\}$, then, assuming $p_{i}u_{ja},p_{m}u_{jb}\in M$, $p_{i},u_{ja},p_{m},u_{jb},p_{i}$ forms a cycle in $G[V_{M}]$, which is a contradiction. Now, observe that every touched set gadget, say, $Q_{j}$, is either happy or satisfies the condition that $|E(Q_{j})\cap M|=1$. Furthermore, if $Q_{j}$ is happy, then $|E(Q_{j})\cap M|=0$, and in this case, we can replace the upper edge touching $Q_{j}$ with $w_{j}w'_{j}$. So, without loss of generality, we can assume that every touched set gadget is not happy.
Let us now assume that $0<n'\leq n$ set gadgets are happy and $k$ set gadgets are touched. If $n'<\frac{n}{3}$, then $|M|\leq n'+3n'+2k+2(|\mathcal{C}|-n'-k)=2n'+2|\mathcal{C}|< \frac{2n}{3}+2|\mathcal{C}|<\ell$, a contradiction. On the other hand, if $n'\geq \frac{n}{3}$, then $|M|\leq n'+n+2(|\mathcal{C}|-n'-k)+2k=n-n'+2|\mathcal{C}|\leq n-\frac{n}{3}+2|\mathcal{C}|<\ell$, a contradiction. Since we get a contradiction in both cases, $ M$ must saturate $p$. \qed
\end{proof} 

From Lemma \ref{am}, we have the following corollary.
\begin{corollary}\label{corr:AM}
There is an edge of the form $p_{q}p$ for some $q\in [t]$
in every solution of $(G,\ell)$.
\end{corollary}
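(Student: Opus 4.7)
The plan is to derive this corollary almost immediately from Lemma \ref{am} by inspecting the neighborhood of $p$ in the constructed graph $G$. By Lemma \ref{am}, every solution $M$ of $(G,\ell)$ must saturate $p$, so there is some edge $e \in M$ incident to $p$. It then suffices to identify all edges of $G$ incident to $p$ and observe that they all have the desired form.

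Concretely, I would recall that in Construction \ref{const:kernel}, the vertex $p$ is introduced only as the central vertex of the instance selector $K_{1,t}$, whose leaves are exactly $P=\{p_i : i \in [t]\}$, and no other edges incident to $p$ are created anywhere else in the construction (the set gadgets and cross edges do not touch $p$). Hence $N_G(p) = P$. Combining this with the fact (Lemma \ref{am}) that $p \in V_M$, the unique edge of $M$ incident to $p$ must be of the form $p_q p$ for some $q \in [t]$, which is exactly the statement of the corollary. There is no real obstacle here; the only thing to be careful about is verifying from the construction that no spurious edges incident to $p$ have been introduced by the set gadgets or the cross edges, but this is evident from the description of $E(G)$.
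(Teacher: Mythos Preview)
Your argument is correct and matches the paper's approach: the corollary is stated immediately after Lemma~\ref{am} with no separate proof, as it follows directly from the fact that $p$ is $M$-saturated together with $N_G(p)=P$. Your only addition is to spell out explicitly why $N_G(p)=\{p_1,\dots,p_t\}$, which is indeed clear from Construction~\ref{const:kernel}.
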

\begin{lem} \label{AM-2}
If $(G,\ell)$ admits a solution, then at least one instance $(X,\mathcal{S}_{q})\in \mathcal{I} $ also admits a solution.
\end{lem}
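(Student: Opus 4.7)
The plan is to first invoke Corollary~\ref{corr:AM} to fix $q \in [t]$ with $p_q p \in M$. For each set gadget $Q_j$, let $b_j \in \{0,1,2,3\}$ denote the number of cross edges of $M$ incident to the interface vertices of $Q_j$, let $a_j$ denote the number of edges of $M$ with both endpoints in $V(Q_j)$, and let $\mathrm{contrib}(Q_j)$ denote the total number of edges of $M$ with at least one endpoint in $V(Q_j)$. The same cycle argument used inside the proof of Lemma~\ref{am} shows that at most one upper edge from $P \setminus \{p_q\}$ can touch $Q_j$ in $M$; call this quantity $c_j \in \{0,1\}$. Then $|M| = 1 + \sum_j \mathrm{contrib}(Q_j)$ and $\mathrm{contrib}(Q_j) = a_j + b_j + c_j$.

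The first step is a per-gadget bound
\[
\mathrm{contrib}(Q_j) \le \max\{2,\, b_j + 1\},
\]
proved by a short case analysis on $(b_j, c_j)$. The driving structural fact is that through any two distinct interface vertices $u_{ja}, u_{jb}$ of $Q_j$ there runs the $4$-cycle $u_j u_{ja} w_j u_{jb} u_j$ in $G$; consequently, whenever at least two interface vertices of $Q_j$ are $M$-saturated, we cannot place both pendants $u_j u'_j$ and $w_j w'_j$ into $M$. Likewise, any upper edge $p_i u_{jx} \in M$ forces $s_j \notin \mathcal{S}_i$, so $p_i$ is adjacent to all three interface vertices of $Q_j$, and if any second interface vertex is saturated together with an internal edge of $Q_j$, a $4$-cycle through $p_i$ is unavoidable. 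Feeding these restrictions into the small table of options for $(b_j, c_j)$ gives the stated bound.

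Summing the per-gadget bound with $n' = |\{j : b_j \ge 1\}|$ yields
\[
\ell \le |M| \le 1 + 2(|\mathcal{C}| - n') + \sum_{j : b_j \ge 1}(b_j + 1) = 1 + 2|\mathcal{C}| - n' + \sum_j b_j.
\]
Since $|X'| = n$ bounds the total number of cross edges, $\sum_j b_j \le n$; since $b_j \le 3$, also $\sum_j b_j \le 3 n'$. Substituting these together with $\ell = 1 + 2|\mathcal{C}| + \frac{2n}{3}$ forces $n' = \frac{n}{3}$, $b_j = 3$ for every happy gadget, and equality in the per-gadget bound at each happy gadget. Thus the $n/3$ happy gadgets have all three interface vertices cross-matched, so the sets $\{s_j : b_j = 3\}$ are pairwise disjoint triples that together cover $X$.

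It remains to pin this cover inside $\mathcal{S}_q$. Equality $\mathrm{contrib}(Q_j) = 4$ at a happy gadget forces $a_j = 1$, and the only acyclic choice for this internal edge when $b_j = 3$ is $u_j u'_j$ or $w_j w'_j$ (the alternative $u_j w_j$ immediately gives the triangle $u_j u_{ja} w_j u_j$). Therefore one of $u_j, w_j$ lies in $V_M$ and is adjacent to each of the three interface vertices of $Q_j$. If $s_j \notin \mathcal{S}_q$, then by construction $p_q$ is adjacent to each of $u_{ja}, u_{jb}, u_{jc}$ as well, and since $p_q \in V_M$, the vertices $p_q, u_{ja}, u_j, u_{jb}$ induce the $4$-cycle $p_q u_{ja} u_j u_{jb} p_q$ in $G[V_M]$, contradicting acyclicity. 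Hence every happy $s_j$ belongs to $\mathcal{S}_q$, and $\{s_j : Q_j \text{ is happy}\}$ is an exact cover witnessing a solution for $(X, \mathcal{S}_q)$. The main obstacle is the per-gadget bound of the first step: one must enumerate the handful of $(b_j, c_j, a_j)$ configurations and in each ``bad'' subcase exhibit an explicit short cycle in $G[V_M]$, either through $\{u_j, w_j\}$ or through some $p_i$ with $s_j \notin \mathcal{S}_i$; once this bound is in hand, the remaining counting and the cycle argument through $p_q$ are short.
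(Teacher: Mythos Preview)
Your proof is correct and follows essentially the same strategy as the paper's: invoke Corollary~\ref{corr:AM} to fix $q$, bound the contribution of each set gadget, use counting to force exactly $n/3$ happy gadgets each with all three interface vertices cross-matched, and finish with the $4$-cycle through $p_q$ to place every happy $s_j$ in $\mathcal{S}_q$. The only notable difference is packaging: you phrase the gadget bound uniformly as $\mathrm{contrib}(Q_j)\le\max\{2,b_j+1\}$ and derive $n'=n/3$ from the two inequalities $\sum_j b_j\le n$ and $\sum_j b_j\le 3n'$, whereas the paper uses the happy/touched dichotomy (with the ``WLOG a touched gadget is not happy'' swap from Lemma~\ref{am}) and splits into the cases $n'<n/3$ and $n'>n/3$; your formulation is a bit cleaner but not a different argument.
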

\begin{proof}
By Corollary \ref{corr:AM}, let $M$ be a solution of $(G,\ell)$ with $pp_{q}\in M$ for some $q\in [t]$. Let $\mathcal{Q}$ be the set of happy gadgets in $G$ and let there be $k$ touched set gadgets in $G$. We can assume without loss of generality that every touched set gadget is not happy. We claim that $|\mathcal{Q}|=\frac{n}{3}$. If $n'=|\mathcal{Q}|<\frac{n}{3}$, then $|M|\leq n'+3n'+2k+2(|\mathcal{C}|-n'-k)+1=2n'+2|\mathcal{C}|+1< \frac{2n}{3}+2|\mathcal{C}|+1=\ell$, a contradiction. If $n'=|\mathcal{Q}|>\frac{n}{3}$, then $|M|\leq n'+2k+n+2(|\mathcal{C}|-n'-k)+1=n-n'+2|\mathcal{C}|+1<n-\frac{n}{3}+2|\mathcal{C}|+1=\ell$, a contradiction. Thus, $|\mathcal{Q}|=\frac{n}{3}$. 

We know that for every $Q_{j}\in \mathcal{Q}$, $|E(Q_{j})\cap M|\leq1$, and for every $Q_{k}\notin \mathcal{Q}$, $|E(Q_{k})\cap M|\leq 2$. Next, we claim that $|E(Q_{j})\cap M|=1$ for every $Q_{j}\in \mathcal{Q}$, and $|E(Q_{k})\cap M|=2$ for every $Q_{k}\notin \mathcal{Q}$. On the contrary, if we assume that either $E(Q_{j})\cap M=\emptyset$ for some $Q_{j}\in \mathcal{Q}$ or $|E(Q_{k})\cap M|\leq 1$ for some $Q_{k}\notin \mathcal{Q}$, then by the similar arguments as presented in the first part of the proof, we say that $|M|<\ell$, a contradiction.  With this information, it is easy to note that to justify the size of $M$, exactly $n$ cross edges must belong to $M$.

Next, we claim that the interface vertices corresponding to happy gadgets must not be adjacent to $p_{q}$. If $Q_{j}$ is an arbitrary happy gadget, where $s_{j}=\{a,b,c\}$, then if $p_{q}u_{ja},p_{q}u_{jb},p_{q}u_{jc}\in E(G)$, then a cycle $p_{q},u_{ja}, u_{j},u_{jb}$ (or $p_{q},u_{ja}, w_{j},u_{jb}$) will be formed in $G[V_{M}]$, a contradiction. It implies that, for each $Q_{j}\in \mathcal{Q}$, we have that $s_{j}\in \mathcal{S}_{q}$. Since vertices $\{a,b,c\}$ of $X'$ are matched with vertices $\{u_{ja},u_{jb},u_{jc}\}$ of $Q_{j}\in \mathcal{Q}$, it follows that $\{s_{j}: Q_{j}\in \mathcal{Q}\}$ is a solution
to $(X,\mathcal{S}_{q})$. \qed
\end{proof}

By Proposition \ref{orcross} and Lemmas \ref{sizevc} \ref{AM-1}, and \ref{AM-2}, we have the following theorem.
\NOPOLYVC*

\subsection{Vertex Deletion Distance to Clique}
Observe that in Construction \ref{const:kernel}, if we make the set $P$ a clique and proceed exactly as before, then with only minor changes (specified below), we can show that \textsc{Acyclic Matching} does not admit a polynomial kernel when parameterized by the vertex deletion distance to a clique. Let $P'=P\cup \{p\}$.

\begin{lem} \label{vddtc}
Let $G$ be as defined in Construction \ref{const:kernel} with the additional condition that $P$ is a clique. Then, $G$ has a clique modulator of size $\mathcal{O}(n^{3})$.
\end{lem}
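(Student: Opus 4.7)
The plan is to exhibit an explicit clique modulator $M \subseteq V(G)$ whose complement $V(G) \setminus M$ induces a clique, and then to bound $|M|$ using the same counting used in Lemma~\ref{sizevc}. The natural candidate is $M = V(G) \setminus P'$, where $P' = P \cup \{p\}$. This mirrors the vertex cover argument of Lemma~\ref{sizevc}, which kept $P \cup X'$ outside the cover; here, because $X'$ is an independent set and therefore cannot survive inside a clique, I must place $X'$ into the modulator too, leaving only the ``selector'' region $P'$ as the would-be clique.

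First, I would verify that $G[P']$ is indeed a clique after the modification. In the original construction $P'$ induces the star $K_{1,t}$ centered at $p$, so $p$ is already adjacent to every $p_i \in P$. The modification in this subsection turns $P$ into a clique, i.e.\ adds every edge $p_i p_j$ with $i \neq j$. Combining these two observations, every pair of vertices in $P'$ is adjacent in $G$, so $G[P']$ is a clique on $t+1$ vertices, and hence $G - M = G[P']$ is a clique.

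It then remains to bound $|M|$. By construction,
\[
V(G) = X' \;\cup\; \Bigl(\bigcup_{j=1}^{|\mathcal{C}|} V(Q_j)\Bigr) \;\cup\; P',
\]
so $|M| = |X'| + \sum_{j=1}^{|\mathcal{C}|} |V(Q_j)| = n + 7|\mathcal{C}|$, because each set gadget $Q_j$ contributes exactly the seven vertices $\{u_j, w_j, u'_j, w'_j, u_{ja}, u_{jb}, u_{jc}\}$. Since the elements of $\mathcal{C}$ are pairwise distinct $3$-element subsets of $X = [n]$, we have $|\mathcal{C}| \leq \binom{n}{3}$, which yields $|M| \leq n + 7\binom{n}{3} \in \mathcal{O}(n^{3})$, as desired.

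I do not anticipate any genuine obstacle: the only subtlety is making sure that no edge inside $P'$ is missing after the modification, which is immediate since $p$ is connected to every $p_i$ through the original $K_{1,t}$ and the modification supplies every remaining edge within $P$. The size estimate is then a direct copy of the counting done in the proof of Lemma~\ref{sizevc}.
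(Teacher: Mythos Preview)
Your proposal is correct and follows exactly the paper's approach: take $Y = V(G) \setminus P'$ as the clique modulator, note that $G[P']$ is a clique, and count $|Y| = n + 7|\mathcal{C}| \in \mathcal{O}(n^3)$ via $|\mathcal{C}| \le \binom{n}{3}$. If anything, you spell out the verification that $G[P']$ is a clique more carefully than the paper does.
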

\begin{proof}
As $Y=V(G)\setminus P'$ is a clique modulator of $G$, $|Y|=7|\mathcal{C}|+n$, and $|\mathcal{C}|\in \mathcal{O}(n^{3})$, we have $|Y|\in \mathcal{O}(n^{3})$. \qed
\end{proof}
\begin{remark}
If we make the set $P$ a clique in Construction \ref{const:kernel}, then Lemma \ref{sizevc} may not hold.
\end{remark}
\begin{lem} \label{gk}
If $(G,\ell)$ admits a solution, then at least one instance in $\mathcal{I} $ also admits a solution.
\end{lem}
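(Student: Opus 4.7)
The plan is to mirror the proof of Lemma \ref{AM-2} while accounting for the new fact that $P':=P\cup\{p\}$ is a clique of size $t+1$ (since $p$ is adjacent to every $p_{i}\in P$ and $P$ itself is now a clique). Any acyclic matching intersects the edge set of a clique in at most one edge: two matched edges in a clique would induce a $K_{4}$ and hence a $C_{4}$. Writing $\pi:=|M\cap E(G[P'])|$ we thus have $\pi\leq 1$, and the first step will be a counting argument, replacing Lemma \ref{am}, that forces $\pi=1$.

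Let $\tau$ be the number of upper edges in $M$, $\kappa\leq n$ the number of cross edges, and $\iota$ the number of edges of $M$ lying entirely inside some set gadget. A first observation is that $\tau\leq 2$: three upper edges would saturate three distinct vertices of the clique $P$, producing a triangle in $G[V_{M}]$. As in the original proof, each gadget $Q_{j}$ with at least one externally matched interface vertex contributes at most one internal edge to $M$ (unused gadgets at most two), and the number $n_{1}$ of used gadgets satisfies $n_{1}\geq\lceil(\kappa+\tau)/3\rceil$, whence $|M|\leq \pi+\kappa+\tau+2|\mathcal{C}|-n_{1}$. Assuming $\pi=0$, I would enumerate the placements of the (at most two) upper edges; whenever some $p_{i}$ is matched to an interface $u_{ja}$, the clique adjacencies of $p_{i}$ to the remaining interface vertices of $Q_{j}$ (present precisely when $s_{j}\notin\mathcal{S}_{i}$, which is forced here by the very existence of that upper edge) rule out any internal edge in $Q_{j}$ that brings $u_{j}$ or $w_{j}$ into $V_{M}$ whenever a second interface vertex of $Q_{j}$ is also in $V_{M}$. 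Tracking this forced loss across every admissible profile (and in particular the borderline case $\kappa+\tau=n+2$, whether the two upper edges land in one gadget or in two) yields $|M|\leq 2|\mathcal{C}|+2n/3<\ell$, contradicting $|M|\geq\ell$. Hence $\pi=1$.

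Let $e\in M$ be the unique $P'$-edge. Any other vertex $p_{k}\in P'\setminus V(e)$ saturated by $M$ would, via its two clique edges to the endpoints of $e$, close a triangle in $G[V_{M}]$; thus every such vertex is unsaturated, and in particular $\tau=0$. With $\pi=1$ and $\tau=0$, the counting of Lemma \ref{AM-2} applies verbatim and forces $\kappa=n$, exactly $n/3$ happy gadgets each of type $(i_{j},i'_{j})=(3,0)$ with a single internal edge of the form $u_{j}u'_{j}$ or $w_{j}w'_{j}$, and every remaining gadget contributing exactly two internal edges.

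It remains to extract a solution of some $(X,\mathcal{S}_{q})$ by a case split on $e$. If $e=pp_{q}$ the argument of Lemma \ref{AM-2} carries over unchanged: were $p_{q}$ adjacent to two interface vertices of some happy $Q_{k}$, the $4$-cycle $p_{q},u_{ka},u_{k},u_{kb},p_{q}$ would appear in $G[V_{M}]$, so $s_{k}\in\mathcal{S}_{q}$ for every happy $Q_{k}$ and the $n/3$ happy gadgets form a solution to $(X,\mathcal{S}_{q})$. If instead $e=p_{i}p_{j}$ with $i,j\in[t]$, the identical cycle argument applied to $p_{i}\in V_{M}$, whose adjacency to interface vertices is governed by membership in $\mathcal{S}_{i}$, yields $s_{k}\in\mathcal{S}_{i}$ for every happy $Q_{k}$ and gives a solution to $(X,\mathcal{S}_{i})$. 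Either way, at least one instance of $\mathcal{I}$ admits a solution. The main obstacle I anticipate is the $\pi=0$ step, which has no analogue in Lemma \ref{AM-2} and requires carefully exploiting the $K_{2,3}$ structure of the set gadgets together with the new clique edges in $P$ to tighten the internal-edge counting precisely in the borderline profile $\kappa+\tau=n+2$.
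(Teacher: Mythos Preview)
Your overall plan—show that $M$ must contain an edge of the clique $P'$, then split on whether it is $pp_q$ or $p_ip_j$—matches the paper exactly, and your endgame (the cycle arguments forcing $s_k\in\mathcal{S}_q$, respectively $s_k\in\mathcal{S}_i$, for every happy $Q_k$) is correct. The gap is in the $\pi=0$ step. Your bound via $n_1\geq\lceil(\kappa+\tau)/3\rceil$ gives only $|M|\leq\ell$ in the profile $(\kappa,\tau)=(n,2)$, and while you correctly observe that a touched gadget with a second saturated interface vertex must lose its internal edge, you never argue that such a gadget is forced to exist in this profile. It is: the two upper edges land in distinct gadgets (same gadget already gives a triangle $p_1,p_2,u_{ja}$ via the clique edge $p_1p_2$), so if neither touched gadget were happy there would be at most $n_1-2=n/3-1$ happy gadgets, whence $\kappa\leq 3(n/3-1)=n-3<n$, a contradiction. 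Hence some touched gadget is happy, loses its internal edge, and $|M|<\ell$. That sentence is what is missing.

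The paper sidesteps this case analysis altogether: it simply observes that the proof of Lemma~\ref{am} carries over to the clique variant, since that counting never used the non-adjacency of the $p_i$'s (the new clique edges only make the ``distinct $p_i$'s touch distinct gadgets'' step easier). Invoking Lemma~\ref{am} thus yields $|M|<\ell$ directly whenever no $P'$-edge is matched, with no borderline profile to chase. One minor wording issue: what you call ``clique adjacencies'' of $p_i$ to the remaining interface vertices of $Q_j$ are not clique edges; they are the construction edges present because $s_j\notin\mathcal{S}_i$.
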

\begin{proof}
 Let $M$ be a solution to $(G,\ell)$. First, observe that $|V_{M}\cap P'|=2$, else $G[V_{M}]$ will contain a cycle. If $M$ saturates $p$, then the arguments given in the proof of Lemma \ref{AM-2} can be used to show the existence of a solution of $\mathcal{I}$ (note that in this case, there will be no touched gadgets). So, we assume that $M$ does not saturate $p$. Next, we claim that $M\cap E(P')\neq \emptyset$. Note that if $p_{i}$ for some $i\in [t]$ is matched to a vertex
outside of $P$, then the arguments given in the proof of Lemma \ref{am} (second paragraph) can be used to show that $|M|<\ell$, a contradiction. Thus, $M\cap E(P') \neq\emptyset$.  Now, it is clear that $M$ picks an edge of the form $p_{i}p_{k}$ for some $i,k\in [t]$. If $Q_{j}$ is an arbitrary happy gadget, where $s_{j}=\{a,b,c\}$, then we claim that $u_{ja},u_{jb},u_{jc}$ are not adjacent to $p_{i}$ and $p_{k}$. For the sake of contradiction, without loss of generality, assume that $p_{i}u_{ja}\in E(G)$. By the definition of $G$, $p_{i}u_{jb},p_{i}u_{jc}\in E(G)$. It implies that $p_{i},u_{ja}, u_{j},u_{jb}$ or $p_{i},u_{ja}, w_{j},u_{jb}$ is a cycle in $G[V_{M}]$, depending on whether $u_{j}u'_{j}$ or $w_{j}w'_{j}\in M$, respectively (note that to justify the size of $M$, either $u_{j}u'_{j}$ or $w_{j}w'_{j}$ should belong to $M$). It leads to a contradiction to the fact that $M$ is an acyclic matching. Thus, for each $Q_{j}\in \mathcal{Q}$, we have that $s_{j}\in \mathcal{S}_{i},\mathcal{S}_{k}$. Since vertices $\{a,b,c\}$ of $X'$ are matched with vertices $\{u_{ja},u_{jb},u_{jc}\}$ of $Q_{j}\in \mathcal{Q}$, it follows that $\{s_{j}: Q_{j}\in \mathcal{Q}\}$ is a solution
to $(X_{i},\mathcal{S}_{i})$ as well as $(X_{k},\mathcal{S}_{k})$. \qed
\end{proof}  

By Proposition \ref{orcross} and Lemmas \ref{AM-1}, \ref{vddtc}, and \ref{gk}, we have the following theorem.
\NOPOLYDTC*

\section{Conclusion and Future Research} \label{conclu}

Moser and Sikdar \cite{moser} showed that \textsc{Induced Matching} for planar graphs admits a kernel of size $\mathcal{O}(\mathsf{IM(G)})$. The kernelization technique used in \cite{moser} has two main components, viz., data reduction rules and an intrinsic property of a maximum induced matching - let us call it \emph{Property $\mathsf{P}$}. For completeness, we give the reduction rules and Property $\mathsf{P}$ below.
\medskip

\noindent \textbf{Reduction Rules:}
\begin{enumerate}
    \item[(R0)] Delete vertices of degree 0.
\item[(R1)] If a vertex $u$ has two distinct neighbors $x$, $y$ of degree $1$, then delete $x$.
\item[(R2)] If $u$ and $v$ are two vertices such that $|N(u)\cap N(v)|\geq 2$ and if there exist $x,y\in N(u)\cap N(v)$ with
$d(x) = d(y) = 2$, then delete $x$.
\end{enumerate}

\noindent \textbf{Property $\mathsf{P}$:} If $M$ is a maximum induced
matching in a graph $G$, then for each vertex $v\in V(G)$, there exists a vertex $u\in V_{M}$ such that $\widehat{d}(u,v)\leq 2$.
\medskip

We note that \textsc{Acyclic Matching} and \textsc{Uniquely Restricted Matching} also admit kernels of sizes $\mathcal{O}(\mathsf{AM(G)})$ and $\mathcal{O}(\mathsf{URM(G)})$, respectively, on planar graphs, as the data reduction rules R0-R2 and Property $\mathsf{P}$ hold true for these problems as well.  
On similar lines, based on the result given in \cite{moser} that states that \textsc{Induced Matching} admits a quadratic kernel (with respect to the maximum degree of the input graph) for bounded degree graphs, we note that there exists a quadratic kernel with respect to the maximum degree of the input graph for both \textsc{Acyclic Matching} and \textsc{Uniquely Restricted Matching} for bounded degree graphs. In fact, for \textsc{Uniquely Restricted Matching}, we further note that the quadratic kernel can be improved to a linear kernel by stating a stronger property than Property $\mathsf{P}$. The following is true for any maximum uniquely restricted matching.
\medskip

\noindent \textbf{Property $\mathsf{\widehat{P}}$:} If $M$ is a maximum uniquely restricted matching in a graph $G$, then for each vertex $v\in V(G)$, there exists a vertex $u\in V_{M}$ such that $\widehat{d}(u,v)\leq 1$. \medskip

\noindent \begin{proof}
Targeting a contradiction, let there exists a vertex, say, $v\in V(G)$, such that for all $u\in V_{M}$, $\widehat{d}(u,v)\geq 2$. Now, for some $w\in N(v)$, define $M'=M\cup \{vw\}$. Next, we claim that $M'$ is a uniquely restricted matching in $G$. By Proposition \ref{defurm}, $M'$ is a uniquely restricted matching in $G$ if and only if there does not exist any alternating cycle in $G$. Note that if there exists an alternating cycle in $G$, then it must contain the edge $vw$, else it contradicts the fact that $M$ is a uniquely restricted matching in $G$. Observe that if $vw$ belongs to an alternating cycle in $G$, then at least one neighbor of $v$ other than $w$ must be saturated by $V_{M'}$ (and hence by $V_M$), which is not possible. \qed
\end{proof}

A natural question that often arises in Parameterized Complexity, whenever a problem $\mathrm{\Pi}$ is $\mathsf{FPT}$ with respect to a parameter $k$, is whether $\mathrm{\Pi}$ is $\mathsf{FPT}$ for a parameter smaller than $k$ or not. One possible direction for future research is to seek a below-guarantee parameter smaller than the parameter $\frac{n}{2}-\ell$, so that \textsc{Acyclic Matching} remains $\mathsf{FPT}$. Also, it would be interesting to see if the running time in Theorem \ref{ambt1} can be substantially improved. Apart from that, we strongly believe that the arguments presented in this work (in Section \ref{FPT:AMBTG}) will be useful for other future works concerning problems where one seeks a solution that, among other properties, satisfies that it is itself, or its complement, a feedback vertex set, or, much more generally, an alpha-cover (see \cite{fomin}).

\begin{center}\textbf{Acknowledgments}\end{center}
 The authors are supported by the European Research Council (ERC) project titled PARAPATH (101039913).
 


\end{document}